\def\abstract{\hfil{\large{\bf{Abstract}\vspace{-10pt}\\

}}}
\def\jsum#1#2{\displaystyle\sum_{#1}^{#2}}
\def\jint#1#2{\displaystyle\int_{#1}^{#2}}
\def\jfrac#1#2{\displaystyle\frac{\,#1\,^{\mathstrut}}{\,#2\,_{\mathstrut}}}
\def\jlim#1{\displaystyle \lim_{#1}}
\def\ee{{\mathrm{e}}}
\def\Nu{\mathcal{V}}
\def\oline#1{\overline{#1}}
\newenvironment{e}{\noindent \begin{equation}}{\end{equation}\hspace{-4pt}}
\newtheorem{lem}{Lemma}
\newtheorem{thm}{Theorem}
\newtheorem{exa}{Example}
\title{\textbf{Generalizing the Tomboulis-Yaffe Inequality to $SU(N)$ Lattice Gauge Theories 
and General Classical Spin Systems\vspace{40pt}}}
\author{Takuya Kanazawa\thanks{Email: tkanazawa@nt.phys.s.u-tokyo.ac.jp}\vspace{5pt}\\
\textit{Department of Physics, University of Tokyo, Tokyo 113-0033, Japan}}
\begin{document}
\maketitle
\begin{abstract}
We extend the inequality of Tomboulis and Yaffe in $SU(2)$ lattice gauge theory (LGT) 
to $SU(N)$ LGT and to general classical spin systems, by use of reflection positivity. Basically 
the inequalities guarantee that a system in a box that is sufficiently insensitive to 
boundary conditions has a non-zero mass gap. 
We explicitly illustrate the theorem in some solvable models. Strong coupling expansion is then 
utilized to discuss some aspects of the theorem. Finally a conjecture for exact expression to the off-axis mass gap 
of the triangular Ising model is presented. The validity of the conjecture is tested in multiple ways.
\\
\noindent \ \ \ \ \ \textit{PACS:} 05.50.+q, 11.15.Ha, 12.38.Aw, 75.10.Hk
\\
\ \ \ \ \ \textit{Key words:} lattice gauge theory, vortex free energy, classical spin model, mass gap
\vspace{-400pt}

\hfill TKYNT-08-13
\end{abstract}
\newpage
\section{Introduction}
In this paper we generalize the inequality proved originally by Tomboulis and Yaffe 
in $SU(2)$ gauge theories \cite{Tomboulis-Yaffe} 
to $SU(N)$ gauge theories with general $N$ and also to a wide range of 
classical spin systems. To make this paper readable for those working on spin systems, we give 
in this section an elementary introduction to studies of quark confinement in lattice gauge theories (LGT) 
with an emphasis on interrelations between concepts in spin systems and those in gauge theories.

That a specific kind of defect could be responsible for determining a phase structure of a statistical system 
is appreciated as a quite useful idea in wide areas of modern physics. It has a long history, possibly dating back to 
R. Peierls' argument on the Ising model \cite{Peierls}. 
In as early as 1944 L. Onsager, in his famous paper on the exact solution of the two-dimensional square Ising model, 
calculated what he called a `boundary tension' (the free energy per unit length of a domain wall separating 
two regions of opposite magnetic order) and found that it is zero above and nonzero 
below the critical temperature \cite{Onsager}; hence 
magnetization is not the only quantity that can characterize the phase structure of the system. 
The history after Onsager clearly tells us the significance of understanding how defects, or dislocations, induce 
the volatility of the order parameter: 
for instance the seminal work of Kosterlitz and Thouless \cite{KT} made understanding 
the infinitely smooth phase transition of the $XY$-model possible by adopting the chemical potential of vortices 
as an order parameter.



Such an idea was imported into the studies of $SU(N)$ gauge theories 
ingeniously by 't Hooft \cite{tHooft}, Mack and Petkova \cite{Mack-Petkova} and several others \cite{Yoneya_etc}. 
Remember that in spin models, the system is said to be in a disordered phase if the two-point 
correlation function decays exponentially making the correlation 
length finite; otherwise the system is said to be either in an ordered phase or 
in a Kosterlitz-Thouless-type phase. 
In parallel, a non-Abelian gauge theory is said to be in a confining phase if the expectation value of a 
Wilson loop decays exponentially with the area it spans; otherwise the system is said to be either 
in the Higgs phase or in the Coulomb (or massless) phase. 
The $SU(N)$ gauge theory with no matter field has been believed to be 
in a confining phase for entire values of coupling constant ($SU(3)$ 
is of special importance as it is supposed to be the true theory of strong interactions in nature, 
where quarks have never been directly observed in  experiments). 
What is mysterious 
is that a rigorous proof of confinement is still missing in spite of a tremendous amount of work 
dedicated to this issue so far. However, according to 
the scenario(s) pioneered by 't Hooft, Mack, Petkova and others \cite{Yoneya_etc}, 
the rapid decay of a Wilson loop expectation value 
might be attributable to a percolation of center vortices. It is an 
object of co-dimension 2 (thus it is a loop in 2+1 spacetime and a closed surface in 3+1 spacetime). 
A rough explanation of their appearance is as follows: 
in pure $SU(N)$ gauge theory, all the fields belong to the adjoint representation of $SU(N)$, 
so that the actual gauge group is $SU(N)/Z_N$ rather than $SU(N)$. $\Pi_1(SU(N)/Z_N)=Z_N$ means 
that the systems has a line defect associated to each element of $Z_N$, 
which is denominated as a `center vortex', or a `'t Hooft loop'. 
If a vortex associated to $z\in Z_N$ wraps around the Wilson loop, the latter 
is multiplied by a factor $z$. 
Then the appearance of infinitely many center vortices 
piercing the Wilson loop randomly \textit{with no mutual correlation} can efficiently disorder 
the value of the Wilson loop, resulting in an exponential suppression of the expectation value 
for larger loops follows.\footnote{It is interesting to note that 
a similarity of the 't Hooft loop in LGT to the domain wall in 
spin models gets even clearer in the \textit{deconfined} phase at high temperature. The action of $SU(N)$ 
LGT possesses a \textit{global} $Z_N$ symmetry, and the confinement-deconfinement transition is 
conventionally interpreted as its spontaneous breaking \cite{S-Y}. The tension of an interface separating 
different $Z_N$ deconfined vacua is calculated perturbatively and numerically 
from the (dual) string tension of the spatial 't Hooft loop \cite{Bursa-Teper_Forcrand-Noth}.}

Roughly speaking, the formulation of 't Hooft concerns a macroscopically large center vortex 
wrapping around the periodic lattice, 
ensuring its presence by imposing a twisted boundary condition on the lattice. 
This procedure is essentially 
tantamount to imposing an anti-periodic condition to produce a domain wall in the Ising model. 
He presented a convincing argument that \textit{the behavior of the free energy of a large vortex in 
approaching the thermodynamic limit characterizes in which phase the system is in}; 
if it vanishes exponentially, then 
the vortices percolates and the system is in the confining phase. On the other hand 
Mack and Petkova formulated a center vortex contained in a torus of finite diameter 
with a fixed boundary condition on the 
surface, and the presence of a center vortex was ensured by a singular 
gauge transformation operated on the surface. In order to elucidate its intuitive meaning 
to spin theorists, 
we would like comment on the concept of `thickness' of the vortex. It is well known that, 
in the continuum, 
an \textit{infinitely thin} center vortex 
is unphysical in the sense that it is associated with an infinite action. 
For illustration let us consider the XY model on a 
one-dimensional chain of length $L$. Suppose we fix the angle at one end of the chain 
to $\varphi$ and the angle at the 
other end to $\varphi+\theta$. If the angles of spins change smoothly as much as possible 
from one end toward the other, the 
energy cost is easily estimated to be $(\theta/L)^2\times L\sim O(1/L)$ for $L\gg 1$. 
This is in sharp contrast to the 
situation in the Ising model on the same chain, in which a smooth change is impossible, 
thus leading to the energy cost of 
$O(1)$ and making a spontaneous symmetry breaking easier to happen. The lesson we learn in this example is 
that it is generally possible to reduce an energy cost associated with a defect by smoothly changing the variables around it; the energy cost associated with a domain wall 
can be reduced if we give it a finite thickness.\footnote{
Dobrushin and Shlosman elevated this idea to a rigorous proof of the absence of magnetic order in 
two-dimensional ferromagnets with a continuous symmetry \cite{Dobrushin-Shlosman}.} 
What Mack and Petkova achieved is to prove an inequality rigorously, whose intuitive 
interpretation being that \textit{the area-law decay of the Wilson loop expectation value would follow 
if the free energy of such a `thick' vortex decreases sufficiently rapidly when its diameter is increased.} 
A lucid exposition of dynamics of thick vortices in $SU(N)$ 
lattice gauge theories (LGT) can be found in ref.\cite{Yaffe}.

As is well known, a fundamental difference between spin systems and gauge theories is that
the latter has no known \textit{local} order parameter (such as magnetization in the former) 
that can characterize the phases of gauge theories. That is why non-local quantities such as Wilson or 
't Hooft loops have been given a special weight in studies of strong-coupling phenomena such as confinement. 
As a classical reference, we would like to mention ref.\cite{GJK} in which 
physical relevance of defects generated by the twisting procedure, including both center vortices in LGT 
and domain walls in spin systems, and usefulness of using them as a probe for the phase structure 
in computer simulations, are reviewed and discussed from a unified point of view. 
\\

Tomboulis and Yaffe thoroughly investigated $SU(2)$ LGT at finite temperature and rigorously 
proved the absence of confinement at sufficiently high temperature \cite{Tomboulis-Yaffe}. In their study 
they derived a number of inequalities between observables such as the Wilson loop expectation value, 
the 't Hooft loop expectation value, the electric flux expectation value and the Polyakov loop correlator. 
Among others they gave an upper bound of the Wilson loop expectation value 
by a specific function of the center vortex 
free energy (as presented in the next section as theorem \ref{first_thm}). 
It gave a firm foundation to 't Hooft's original argument in the continuum \cite{tHooft}, that 
if in approaching the thermodynamic limit the free energy of a center vortex 
that encircles two of the four periodic 
directions of the lattice vanishes exponentially w.r.t. the cross section 
of the lattice perpendicular to the vortex, 
then the area law behavior of the Wilson loop expectation value would follow. 
Thus it sheds light on dynamics of the 
center vortices in a somewhat different manner from the Mack-Petkova inequality. In this paper we call it the Tomboulis-Yaffe (TY) inequality throughout this paper.

The purpose of this paper is to present a generalization of the TY inequality to 
$SU(N)$ LGT for arbitrary 
$N$ and to general classical spin models. Our result gives a rigorous relation between the effect of 
twisted boundary conditions and the correlation function (Wilson loop) 
in spin models (in LGT) respectively.\footnote{Historically, 
changing of boundary conditions has been utilized in studies of Anderson localization as a method 
for estimating the broadening of the wave function \cite{Lee}. More recently it was utilized in the lattice QCD 
calculation \cite{IDIST} to study charmonium properties in deconfinement phase.}
Among spin models, the $SU(N)\times SU(N)$ principal chiral model (PCM) is of particular interest for 
researches of gauge theory, since it bears a number of similarities to $SU(N)$ gauge theories and 
serves as a good testing ground for techniques in gauge 
theories \cite{Green-Samuel,Polyakov,RCV,DPRV}. The action of $SU(N)$ PCM is given by
\begin{e}
S=\beta \sum_{x}^{}\sum_{\mu=1}^{d}
\textrm{Re\,Tr\,}\{U(x) U^\dagger(x+\hat\mu)\},\ \ \ U\in SU(N),\ x\in \mathbb{Z}^d,\label{S_PCM}
\end{e}
where $\hat\mu$ denotes a unit vector in $x^\mu$-direction. It is quite straightforward to 
extend the original TY inequality for $SU(N)$ LGT to $SU(N)$ PCM, using a natural correspondence 
(site $\leftrightarrow$ link, link $\leftrightarrow$ plaquette,\,\dots) and indeed 
the TY inequality for $SU(2)$ PCM has already appeared in the literature \cite{spin-KT,spin-K}. 
On the other hand, however, it is technically nontrivial how to extend it to other more general spin models. 
Let us take $G_2$ PCM as an example. Since $G_2$ is an exceptional group with \textit{trivial} center, 
we can no longer use a twist by a center of the gauge group, which gives rise to a technical difficulty. 
Furthermore the use of center twist for PCM is not physically motivated; in the case of $SU(N)$ gauge theory, 
the use of center element is mandatory, but in PCM we can use any other element of the symmetry group for twist. 
Thus the generality of our formulation, that does not 
rely on the center of the symmetry group at all, seems to be a fundamental progress.\footnote{
As an aside we note that the inequality of Mack and Petkova for $SU(N)$ LGT 
was generalized to $SU(N)$ PCM by Borisenko and Skala \cite{Borisenko-Skala}.}

This paper is organized as follows. In section 2 we will recapitulate the TY inequality for $SU(2)$ 
and then prove its generalization to $SU(N)$. We will use 
the two-dimensional $SU(N)$ LGT to illustrate our result. In section 3 we will prove a 
generalization of the inequality to general classical spin systems. We will use the one-dimensional 
PCM and the two-dimensional Ising models on square and triangular lattices to illustrate the proved 
inequality. Especially, in section \ref{3-5}, 
we derive a rigorous upper bound of the off-axis correlation length in the triangular 
Ising model, whose exact expression is still unknown, 
and conjecture that it is indeed the exact one. In section \ref{SC_expansion} 
the strong coupling expansion technique is employed 
to shed light on the implication of our theorem, as well as to test the conjecture. 
Section 4 is devoted to the conclusion.

%
%
%
%
%
%
%
%
%
%
%
%

\section{TY inequality in LGT}
\subsection{$N=2$}
Let us recapitulate the TY inequality for $SU(2)$ \cite{Tomboulis-Yaffe}. $\Lambda$ is a $d$-dimensional 
hypercubic lattice of length $L_\mu$ ($\mu=1,\dots,d$) with periodic boundary condition and 
$\Nu$, called ``vortex'', is a stacked set of plaquettes winding around the lattice $\Lambda$ in $d-2$ 
periodic directions.\footnote{$\Nu$ forms a closed loop when $d=3$ and a closed surface (2-torus) when 
$d=4$, on the dual lattice. See fig.\ref{link}.} 
We assume the directions unwrapped by $\Nu$ to be $x^\mu$ and $x^\nu$ ($\mu\not=\nu$).
The ordinary and the ``twisted'' partition functions are given by
\begin{align}
Z_\Lambda&\equiv\jint{}{}\prod_{b}^{}dU_b\,\exp\left(\jfrac{\beta}{2}\jsum{p\subset \Lambda}{}\mathrm{Tr\,}U_p\right),
\label{saisyonoaction}
\\
Z^{(-)}_\Lambda&\equiv
\jint{}{}\prod_{b}^{}dU_b\,\exp\left(\jfrac{\beta}{2}\Bigg[\jsum{p\subset \Nu}{}\mathrm{Tr\,}(-U_p)
+\jsum{p\subset \Lambda\setminus \Nu}{}\mathrm{Tr\,}U_p\Bigg]\right),
\end{align}
where $dU$ is the normalized Haar measure of $SU(2)$ and 
$U_p\equiv U_{x,\mu}U_{x+\mu,\nu}U^\dagger_{x+\nu,\mu}U^\dagger_{x,\nu}$ 
is a plaquette variable. It is important that local redefinition of variables $U\to -U$ can move the locations of twisted 
plaquettes but cannot remove the twist from $\Lambda$ entirely.

Next, consider a rectangle $C$ lying in a $x^\mu$-$x^\nu$ plane with $A_C$ the area enclosed by $C$, and 
let $W(C)$ the Wilson loop in the fundamental representation associated with $C$, namely 
$\displaystyle W(C)\equiv\jfrac{1}{2}{\mathrm{Tr}\,}\prod_{b\in C}^{}U_b$. 
Then the following inequality holds \cite{Tomboulis-Yaffe,Kovacs-Tomboulis-1}:
\begin{thm}\label{first_thm}
\begin{e}
\langle W(C)\rangle\leq 2\left\{\jfrac{1}{2}\left(1-\jfrac{Z_\Lambda^{(-)}}{Z_\Lambda}
\right)\right\}^{A_C/L_\mu L_\nu}   \label{N=2}
\end{e}
where $\langle\dots\rangle$ is the expectation value w.r.t. the measure of $Z_\Lambda$.
\end{thm}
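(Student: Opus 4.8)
The plan is to obtain both ingredients of \eqref{N=2} from reflection positivity (RP): the $SU(2)$ Wilson action together with the normalized Haar measure is RP under reflections in the coordinate hyperplanes (through sites) in every direction, in particular in the two directions $x^\mu,x^\nu$ spanned by $C$. The form of the right-hand side --- a fixed quantity raised to the power $A_C/(L_\mu L_\nu)$ --- is the signature of a chessboard estimate: one bounds $\langle W(C)\rangle$ by a single quantity attached to the whole $\mu$-$\nu$ torus, raised to the fraction of that torus the loop occupies.

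\emph{Step A (reduction to a torus-wrapping loop).} First I would run a chessboard/reflection estimate on $W(C)$, reflecting in the hyperplanes perpendicular to $\hat\mu$ and then to $\hat\nu$. Cutting along one hyperplane at a time, $C$ breaks into matrix-valued open Wilson lines, and the Osterwalder--Seiler reflection bound (Cauchy--Schwarz with respect to the reflection $\Theta$) trades $C$ for reflected and glued copies of itself, the antiparallel shared edges cancelling. Iterating until the images tile the $\mu$-$\nu$ torus yields
\[
\langle W(C)\rangle \;\le\; \langle\,\widehat W\,\rangle^{\,A_C/(L_\mu L_\nu)},
\]
where $\widehat W$ is the ``maximal'' loop one is left with once the tiling fills the torus: a Wilson loop enclosing area exactly $L_\mu L_\nu$, equivalently $\tfrac12\,\mathrm{Tr}\!\big(P_\mu P_\nu P_\mu^{-1}P_\nu^{-1}\big)$ with $P_\mu,P_\nu$ the Polyakov-type holonomies around the $\mu$- and $\nu$-cycles. (The reflections require $L_\mu,L_\nu$ even, and side lengths of $C$ not dividing $L_\mu,L_\nu$ call for a routine extra comparison; these are technical, not conceptual, points.)

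\emph{Step B (identifying $\widehat W$ with the twist).} Next I would show $\langle\widehat W\rangle\le 1-Z^{(-)}_\Lambda/Z_\Lambda$. The loop $\widehat W$ is linked with the vortex sheet $\Nu$, so inserting it is tantamount to twisting: I would relate $Z_\Lambda\langle\widehat W\rangle$ to $Z^{(-)}_\Lambda$ by moving the twist onto the loop's worldsheet (exactly the phenomenon noted below the definition of $Z^{(-)}_\Lambda$, that the redefinitions $U\to-U$ relocate but cannot remove the twist), picking up the overall sign from $\mathrm{Tr}(-U)=-\mathrm{Tr}\,U$, and then discarding a non-negative remainder using positivity of the Haar measure together with $|\tfrac12\mathrm{Tr}\,U|\le1$. (One may equivalently route this through the electric-flux free energy, $e^{-F_{e=1}}=(Z_\Lambda-Z^{(-)}_\Lambda)/(Z_\Lambda+Z^{(-)}_\Lambda)\le 1-Z^{(-)}_\Lambda/Z_\Lambda$.)

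Assembling the steps, and using that $A_C\le L_\mu L_\nu$ since $C$ lies in the $\mu$-$\nu$ plane, one gets $\langle W(C)\rangle\le(1-Z^{(-)}_\Lambda/Z_\Lambda)^{A_C/(L_\mu L_\nu)}=2^{\,1-A_C/(L_\mu L_\nu)}\{\tfrac12(1-Z^{(-)}_\Lambda/Z_\Lambda)\}^{A_C/(L_\mu L_\nu)}\le 2\{\tfrac12(1-Z^{(-)}_\Lambda/Z_\Lambda)\}^{A_C/(L_\mu L_\nu)}$, which is \eqref{N=2}; the prefactor $2$ is pure slack. I expect Step A to be the real obstacle: a fundamental-representation Wilson loop is a single trace, not a product of cell-localized functions, so the chessboard machinery must be set up with the open Wilson lines treated as matrix-valued boundary observables, and one has to verify that the reflected-and-glued contour is exactly the maximal wrapping loop claimed above. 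Once Step A fixes the target observable, Step B is largely sign-bookkeeping.
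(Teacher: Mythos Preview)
Your Step~A is essentially sound (and is the same Schwarz-doubling the paper uses), but Step~B has a genuine gap that you can read off your own formula $\widehat W=\tfrac12\,\mathrm{Tr}(P_\mu P_\nu P_\mu^{-1}P_\nu^{-1})$. In this commutator each Polyakov holonomy appears together with its inverse, and on the periodic lattice both copies of $P_\mu$ sit at the \emph{same} $x^\nu$-slice (and likewise for $P_\nu$). Hence under any center redefinition $U_b\mapsto -U_b$ the two sign flips cancel and $\widehat W$ is unchanged: $\widehat W$ is a center-invariant observable. Equivalently, $C_{\max}$ is homologically trivial on the $\mu$--$\nu$ torus and has no well-defined (nonzero) linking with $\Nu$. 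So your assertion that ``the loop $\widehat W$ is linked with the vortex sheet $\Nu$'' is false, and ``moving the twist onto the loop's worldsheet'' produces no sign; there is no mechanism to convert $\langle\widehat W\rangle$ into $1-Z^{(-)}_\Lambda/Z_\Lambda$ by the route you sketch. Step~A alone then gives only the tautology that a small Wilson loop is bounded by a power of a large one.

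The paper (following Kov\'acs--Tomboulis) avoids exactly this trap by reversing your order of operations. It first inserts $1=\mathcal F^{[0]}[\Nu]+(1-\mathcal F^{[0]}[\Nu])$ with $\Nu$ linked to the \emph{original small} rectangle $C$, and uses that linking once to turn the $\mathcal F^{[0]}$ piece into $\mathcal F^{[N(R)]}$ at an unlinked location. Only then is the reflection-doubling run, with the projector carried along; the key algebraic fact
\[
\big\langle(1-\mathcal F^{[0]}[\Nu])(1-\mathcal F^{[0]}[\Nu^\theta])\,X\big\rangle
=\big\langle(1-\mathcal F^{[0]}[\Nu])\,X\big\rangle,
\]
valid whenever sliding $\Nu^\theta$ back to $\Nu$ does not cross $X$, prevents the projector from accumulating. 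At the end one lands on $\langle 1-\mathcal F^{[0]}[\Nu]\rangle=\tfrac12\bigl(1-Z^{(-)}_\Lambda/Z_\Lambda\bigr)$ itself, raised to $A_C/L_\mu L_\nu$, and the factor $2$ comes from summing the two pieces of the initial decomposition. The moral: the single use of ``vortex linked with Wilson loop'' must happen on the small loop \emph{before} the chessboard step, not on the maximal loop afterward, where it is no longer available.
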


The site-reflection positivity of the Wilson action \cite{Montvay-Munster} 
plays an essential role in the proof. (As is well known, the Wilson action is among those actions for which 
the link-reflection positivity is also satisfied \cite{Osterwalder-Seiler} but it is not a matter of interest here.) 
Indeed (\ref{N=2}) can be proved with any one-plaquette action, since they are site-reflection positive (although 
not necessarily link-reflection positive, of course). 

An important implication of (\ref{N=2}) is that \textit{the area-law decay of $\langle W(C)\rangle$ would follow if 
$1-Z_\Lambda^{(-)}/Z_\Lambda\approx \ee^{-\rho L_\mu L_\nu}$ for some constant $\rho>0$ in the 
thermodynamic limit}\footnote{We neglected the entropy factor 
for simplicity.}. This is a famous criterion of confinement originally proposed by 't Hooft \cite{tHooft} and is also 
numerically supported \cite{Kovacs-Tomboulis,de Forcrand-D'Elia-Pepe,de Forcrand-Smekal}. 
Moreover such a behavior of $Z_\Lambda^{(-)}/Z_\Lambda$ has been verified explicitly by $\rm{M\ddot unster}$ using the 
convergent strong-coupling cluster expansion technique \cite{Munster}. See page \pageref{Munster_} for more discussion 
on this point.

Theorem \ref{first_thm} was utilized in a recent attempt at a rigorous proof of confinement \cite{Tomboulis} 
with related discussions \cite{jibun,I-S}.

\subsection{General $N$}
The authors of ref.\cite{Tomboulis-Yaffe} state without explicit construction that 
their result is extendable to any other gauge group with nontrivial center. 
Since the mentioned extension does not seem to be so trivial and, 
to the author's best knowledge an explicit formula for general $N$ is not found in the 
literature, we think it valuable to present the extension of (\ref{N=2}) from $SU(2)$ to $SU(N)$ together with its proof.

Let us give a formulation of vortices in $SU(N)$ LGT and prove their properties 
before presenting TY inequality in $SU(N)$ LGT. 
The ordinary and the ``twisted'' partition functions are respectively given by
\begin{align}
Z_\Lambda&\equiv\jint{}{}\prod_{b}^{}dU_b\,\exp\left(\jfrac{\beta}{N}\jsum{p\subset \Lambda}{}\mathrm{Re\,Tr\,}U_p\right),
\label{partition__}
\\
Z^{[k]}_\Lambda&\equiv
\jint{}{}\prod_{b}^{}dU_b\,\exp\left(\jfrac{\beta}{N}\Bigg[\jsum{p\subset \Nu}{}\mathrm{Re\,Tr\,}(z^kU_p)
+\jsum{p\subset \Lambda\setminus \Nu}{}\mathrm{Re\,Tr\,}U_p\Bigg]\right),
\end{align}
\begin{e}
z\equiv \exp\Big(\frac{2\pi i}{N}\Big),\ \ k\equiv 1,2,...,N-1\,({\rm mod}\ N).
\end{e}
Hereafter $\langle\dots\rangle$ represents the expectation value with the measure (\ref{partition__}). 
The vortex creation operators $\{\mathcal{O}^{[k]}\}$ and the electric flux creation operators 
$\{\mathcal{F}^{[m]}\}$ are defined by
\begin{gather}
\langle \mathcal O^{[k]}[\Nu]\rangle\equiv\jfrac{Z_\Lambda^{[k]}}{Z_\Lambda},
\\
\langle\mathcal F^{[m]}[\Nu]\rangle\equiv\jfrac{1}{N}\jsum{k=0}{N-1}z^{mk}\langle \mathcal O^{[k]}[\Nu]\rangle
.
\label{F_defdef}
\end{gather}
Thus we have $\langle \mathcal O^{[k]}[\Nu]\rangle=\jsum{m=0}{N-1}z^{-km}\langle \mathcal F^{[m]}[\Nu]\rangle$. 
The explicit form of $\mathcal O$ is given by\vspace{-5pt}
\begin{e}
\mathcal O^{[k]}[\Nu]=\exp\hspace{-2pt}\Big(\jfrac{\beta}{N}\jsum{p\subset \Nu}{}
[\mathrm{Re\,Tr\,}(z^kU_p)-\mathrm{Re\,Tr\,}U_p]\Big).
\end{e}
\begin{lem}
If $\Nu,\,\Nu',\,\Nu'',\dots$ are homologous\footnote{Plural vortices are called \textit{homologous} if and only if 
they wind around the same periodic directions of $\Lambda$.}, we have
\begin{gather}
\langle\mathcal O^{[k]}[\Nu]\mathcal O^{[k']}[\Nu']\rangle=\langle\mathcal O^{[k+k']}[\Nu]\rangle,\label{b1}
\\
\jsum{m=0}{N-1}\mathcal F^{[m]}[\Nu]=1,   \label{b2}
\\
\langle\mathcal F^{[l]}[\Nu] \mathcal F^{[m]}[\Nu']\rangle
=\langle\mathcal F^{[l]}[\Nu]\rangle\,\delta^{(N)}_{l,m},
\label{b3}
\\
\langle\mathcal F^{[l]}[\Nu]\mathcal F^{[m]}[\Nu'] \mathcal F^{[n]}[\Nu'']\rangle
=\langle\mathcal F^{[l]}[\Nu]\rangle\,\delta^{(N)}_{l,m}\delta^{(N)}_{m,n},
\label{b4}
\\
\vdots\notag
\end{gather}
where $\delta^{(N)}_{l,m}=1$ if $l\equiv m$ (mod $N$) and $\delta^{(N)}_{l,m}=0$ otherwise.
\end{lem}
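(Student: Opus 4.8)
The plan is to derive all the identities \eqref{b1}--\eqref{b4} from a single algebraic fact: relocating a vortex $\Nu$ to a homologous $\Nu'$ by a local change of variables, combined with the multiplicative behavior of the twist under stacking. First I would establish \eqref{b1}. The operator $\mathcal{O}^{[k]}[\Nu]$ inserted in the path integral has the effect of replacing $U_p \to z^k U_p$ for every $p \subset \Nu$; inserting $\mathcal{O}^{[k]}[\Nu]\mathcal{O}^{[k']}[\Nu']$ replaces the twist on $\Nu$ by $z^k$ and on $\Nu'$ by $z^{k'}$. Since $\Nu$ and $\Nu'$ are homologous, their ``difference'' $\Nu \triangle \Nu'$ bounds a region on the dual lattice, so there is a change of variables $U_b \to z^{\pm 1} U_b$ on the links dual to that region which moves the $z^{k'}$-twist from $\Nu'$ onto $\Nu$, where it combines with the existing $z^k$-twist to give $z^{k+k'}$ (the Haar measure on $SU(N)$ is invariant under multiplication by a center element, so the measure is untouched, and plaquettes not on $\Nu \cup \Nu'$ pick up $z\cdot z^{-1}=1$). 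This yields $\langle \mathcal{O}^{[k]}[\Nu]\mathcal{O}^{[k']}[\Nu']\rangle = \langle \mathcal{O}^{[k+k']}[\Nu]\rangle$. The same argument with a single operator shows $\langle\mathcal{O}^{[k]}[\Nu]\rangle = \langle\mathcal{O}^{[k]}[\Nu']\rangle$, i.e.\ the expectation depends only on the homology class, which is what makes the mixed correlators below well defined.

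Next I would read off \eqref{b2}: summing the defining relation \eqref{F_defdef} over $m$ gives $\jsum{m=0}{N-1}\mathcal{F}^{[m]}[\Nu] = \jfrac{1}{N}\jsum{m=0}{N-1}\jsum{k=0}{N-1} z^{mk}\mathcal{O}^{[k]}[\Nu] = \jsum{k=0}{N-1}\big(\jfrac{1}{N}\jsum{m=0}{N-1}z^{mk}\big)\mathcal{O}^{[k]}[\Nu]$, and the inner sum is $\delta^{(N)}_{k,0}$, leaving $\mathcal{O}^{[0]}[\Nu] = 1$. Then \eqref{b3} is pure Fourier algebra combined with \eqref{b1}: expand $\langle\mathcal{F}^{[l]}[\Nu]\mathcal{F}^{[m]}[\Nu']\rangle = N^{-2}\sum_{j,k} z^{lj+mk}\langle\mathcal{O}^{[j]}[\Nu]\mathcal{O}^{[k]}[\Nu']\rangle = N^{-2}\sum_{j,k} z^{lj+mk}\langle\mathcal{O}^{[j+k]}[\Nu]\rangle$. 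Reindex with $s = j+k \pmod N$: the sum becomes $N^{-2}\sum_{s}\langle\mathcal{O}^{[s]}[\Nu]\rangle z^{ls}\sum_{k} z^{(m-l)k} = N^{-1}\sum_s \langle\mathcal{O}^{[s]}[\Nu]\rangle z^{ls}\,\delta^{(N)}_{l,m}$, and the prefactor of $\delta^{(N)}_{l,m}$ is exactly $\langle\mathcal{F}^{[l]}[\Nu]\rangle$. The three- and higher-point identities \eqref{b4} follow by the same steps: iterate \eqref{b1} to collapse $\mathcal{O}^{[j]}[\Nu]\mathcal{O}^{[k]}[\Nu']\mathcal{O}^{[n]}[\Nu'']$ to $\mathcal{O}^{[j+k+n]}[\Nu]$, then perform two independent geometric sums, each producing one Kronecker delta.

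The main obstacle is the first step --- making precise the ``local change of variables that slides $\Nu'$ onto $\Nu$'' in arbitrary dimension $d$, and checking carefully that every plaquette \emph{not} on the symmetric difference is left invariant while the boundary plaquettes transmute the twist correctly. This is a cohomological bookkeeping statement: the twist is a $\mathbb{Z}_N$-valued 2-cocycle on the dual lattice supported on $\Nu$, two homologous vortices differ by a coboundary $\delta\lambda$ with $\lambda$ a $\mathbb{Z}_N$-valued 1-cochain, and the substitution $U_b \to \lambda(b)\,U_b$ realizes that coboundary while preserving $\prod_b dU_b$. I would state this explicitly for $d=3$ (where $\Nu$ is a dual loop and the interpolating surface is easy to picture) and for $d=4$ (where $\Nu$ is a dual 2-torus), and note that the construction is the standard one underlying the fact, already quoted after \eqref{saisyonoaction}, that ``local redefinition $U\to zU$ can move the locations of twisted plaquettes but cannot remove the twist.'' Everything after that is the elementary finite-Fourier-transform manipulation sketched above.
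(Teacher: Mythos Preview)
Your proposal is correct and follows essentially the same route as the paper's own proof: establish \eqref{b1} by the local change of variables $U\to z^{k'}U$ that slides $\Nu'$ onto $\Nu$, then derive \eqref{b2}--\eqref{b4} from the definition \eqref{F_defdef} together with \eqref{b1} via the finite Fourier orthogonality $\frac{1}{N}\sum_m z^{mk}=\delta^{(N)}_{k,0}$. The paper states this in two sentences, whereas you spell out the Fourier reindexing and the cohomological picture of the twist relocation in more detail, but there is no substantive difference in strategy.
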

\begin{proof}
(\ref{b1}) can be derived by iterating the redefinition of variables $U\to z^{k'}U$ to bring $\Nu'$ to $\Nu$. 
Relations (\ref{b2})-(\ref{b4}) follow from (\ref{F_defdef}) and (\ref{b1}).
\end{proof}
(\ref{b2})-(\ref{b4}) imply that {\textit{$\{\mathcal F^{[m]}\}_m$ can be seen as projection operators}} \cite{tHooft,Yaffe}.

Consider a $(d-1)$-dimensional hyperplane $\pi$ defined by $x^\mu=m$ with $m\in\mathbb{Z}$ fixed.\footnote{
In this paper we never use hyperplanes defined by $x^\mu=m+\frac{1}{2}$; that is, we never use link-reflections.} 
Define the reflection operator $\theta$ w.r.t. $\pi$ by 
$\theta\big[F(\{U_b\})\big]=\overline{F(\{U_{\theta[b]}\})}$ where $F$ is an arbitrary observable (that is, a 
map from configurations on $\Lambda$ to $\mathbb{C}$). The reflection $\theta[b]$ of a link $b$ is also defined by the 
same notation where locations of $b$ and $\theta[b]$, are defined to be symmetrical about $\pi$.
\begin{lem}With $\Nu^\theta\equiv \theta[\Nu]$ we have
\begin{gather}
\theta\Big[\mathcal O^{[k]}[\Nu]\Big]=\mathcal O^{[-k]}[\Nu^\theta],\label{a1}
\\
\theta\Big[\mathcal F^{[m]}[\Nu]\Big]=\mathcal F^{[m]}[\Nu^\theta],\label{a2}
\\
0\leq\langle \mathcal O^{[k]}[\Nu]\rangle\leq 1,\label{a3}
\\
0\leq\langle \mathcal F^{[m]}[\Nu]\rangle\leq 1.\label{a4}
\end{gather}
\end{lem}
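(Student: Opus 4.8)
The plan is to establish (\ref{a1})--(\ref{a4}) in that order, since (\ref{a1}) feeds the upper bound in (\ref{a3}), which in turn feeds (\ref{a4}). For (\ref{a1}) I would start from the explicit expression for $\mathcal O^{[k]}[\Nu]$ displayed just above: it is the exponential of a \emph{real}-valued function of the configuration, so the complex conjugation built into the definition of $\theta$ is inert and $\theta$ simply replaces $U_b$ by $U_{\theta[b]}$ in every factor. The one point to watch is that a site-reflection reverses the orientation of plaquettes, i.e.\ $U_p\mapsto U_{\theta[p]}^\dagger$ as a matrix; this leaves $\mathrm{Re\,Tr}\,U_p$ unchanged but sends $\mathrm{Re\,Tr}(z^kU_p)=\mathrm{Re}\bigl(z^k\,\mathrm{Tr}\,U_p\bigr)$ to $\mathrm{Re}\bigl(z^k\,\overline{\mathrm{Tr}\,U_{\theta[p]}}\bigr)=\mathrm{Re\,Tr}(z^{-k}U_{\theta[p]})$, using $\bar z=z^{-1}$ and invariance of $\mathrm{Re}$ under conjugation. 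Since $p\mapsto\theta[p]$ maps $\Nu$ bijectively onto $\Nu^\theta$, this is (\ref{a1}). Then (\ref{a2}) follows by inserting (\ref{a1}) into $\mathcal F^{[m]}[\Nu]=\tfrac1N\sum_k z^{mk}\mathcal O^{[k]}[\Nu]$ and using antilinearity of $\theta$, so that $\theta\bigl[z^{mk}\mathcal O^{[k]}[\Nu]\bigr]=z^{-mk}\mathcal O^{[-k]}[\Nu^\theta]$; relabelling $k\to-k$ in the sum over $k\bmod N$ recovers $\mathcal F^{[m]}[\Nu^\theta]$.

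The lower bound $\langle\mathcal O^{[k]}[\Nu]\rangle\geq 0$ is immediate, because the integrand defining $Z_\Lambda^{[k]}$ is everywhere a strictly positive real number, so $Z_\Lambda^{[k]}>0$. Granted the upper bound $\langle\mathcal O^{[k]}[\Nu]\rangle\leq 1$, the upper bound in (\ref{a4}) then follows from $\langle\mathcal F^{[m]}[\Nu]\rangle=\tfrac1N\sum_k z^{mk}\langle\mathcal O^{[k]}[\Nu]\rangle$ by the triangle inequality (the sum is real, e.g.\ by applying (\ref{a2}) inside $\langle\cdot\rangle$). The two remaining bounds---the upper bound in (\ref{a3}) and the lower bound in (\ref{a4})---rest on site-reflection positivity of the Wilson action \cite{Montvay-Munster}, and both use the same preparation: setting one index to zero in (\ref{b1}) shows that $\langle\mathcal O^{[k]}[\Nu]\rangle$, hence $\langle\mathcal F^{[m]}[\Nu]\rangle$, depends only on the homology class of $\Nu$, so I may pick a reflection plane $\pi:x^\mu=m$ with $x^\mu$ a direction unwrapped by the vortex and a representative $\Nu_+$ of the class lying strictly on the positive side of $\pi$, so that $\Nu_-\equiv\Nu_+^\theta$ lies strictly on the negative side and $\Nu_+,\Nu_-$ are distinct homologous vortices. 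Then $\mathcal O^{[k]}[\Nu_+]$ and $\mathcal F^{[m]}[\Nu_+]$ depend only on positive-side links and hence belong to the positive-side algebra, and by (\ref{a1})--(\ref{a2}) their reflections are $\mathcal O^{[-k]}[\Nu_-]$ and $\mathcal F^{[m]}[\Nu_-]$.

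For (\ref{a3}) I would apply the Cauchy--Schwarz inequality of the reflection-positive bilinear form against the constant observable $\mathbf{1}$:
\begin{e}
\bigl|\langle\mathcal O^{[-k]}[\Nu_-]\rangle\bigr|^2\leq\langle\mathcal O^{[-k]}[\Nu_-]\,\mathcal O^{[k]}[\Nu_+]\rangle=\langle\mathcal O^{[0]}[\Nu_-]\rangle=1,
\end{e}
where the middle equality is (\ref{b1}) and $\mathcal O^{[0]}\equiv 1$; since $\langle\mathcal O^{[-k]}[\Nu_-]\rangle=\langle\mathcal O^{[-k]}[\Nu]\rangle$ is real and nonnegative, this gives $\langle\mathcal O^{[k]}[\Nu]\rangle\leq 1$ for every $k$. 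For the lower bound in (\ref{a4}), positivity of the same form applied to $\mathcal F^{[m]}[\Nu_+]$ gives $0\leq\langle\theta[\mathcal F^{[m]}[\Nu_+]]\,\mathcal F^{[m]}[\Nu_+]\rangle=\langle\mathcal F^{[m]}[\Nu_-]\,\mathcal F^{[m]}[\Nu_+]\rangle$, and by (\ref{b3}) together with homology invariance this equals $\langle\mathcal F^{[m]}[\Nu]\rangle$, which closes (\ref{a4}).

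I expect two points to require care. The first is the sign bookkeeping in (\ref{a1}): it is precisely the orientation reversal of plaquettes under a site-reflection that turns the twist $z^k$ into $z^{-k}$, and this must be tracked together with---and not confused with---the complex conjugation built into $\theta$, lest the two effects be mistaken for cancelling. The second, and the harder one to write cleanly, is the legitimacy of the reflection-positivity step: one must verify that a representative of the vortex homology class can genuinely be displaced so that no twisted plaquette, nor any link bounding it, lies in the reflection plane $\pi$, so that $\mathcal O^{[k]}[\Nu_+]$ and $\mathcal F^{[m]}[\Nu_+]$ really are bona fide positive-side observables. This is where the geometry of the vortex sheet interacts with the site-reflection conventions, and it is the step I would be most careful about.
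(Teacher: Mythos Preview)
Your argument is correct and follows essentially the same route as the paper: (\ref{a1}) via orientation reversal of plaquettes, (\ref{a2}) by antilinearity and relabelling, the upper bound in (\ref{a3}) via the reflection-positive Schwarz inequality combined with (\ref{b1}), and the lower bound in (\ref{a4}) via (\ref{b3}) with $\Nu^\theta$. The only deviation is cosmetic: for the upper bound in (\ref{a4}) the paper uses the identity $\sum_m\langle\mathcal F^{[m]}\rangle=1$ from (\ref{b2}) together with the just-proved nonnegativity, whereas you use the triangle inequality and $\langle\mathcal O^{[k]}\rangle\leq 1$; both are one-liners. Your explicit remark that a representative of the vortex class must be chosen strictly on one side of $\pi$ so that the Schwarz inequality applies is a point the paper leaves implicit.
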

\begin{proof}
(\ref{a1}) is obvious from the fact that the orientation of plaquettes are reversed by reflection. (\ref{a1}) yields
\begin{align}
\theta\Big[\mathcal F^{[m]}[\Nu]\Big]&=\theta\Big[\jfrac{1}{N}\jsum{k=0}{N-1}z^{mk}\mathcal O^{[k]}
[\Nu]\Big]\\
&=\jfrac{1}{N}\jsum{k=0}{N-1}z^{-mk}\mathcal{O}^{[-k]}[\Nu^\theta]
\\
&=\mathcal F^{[m]}[\Nu^\theta]
\end{align}
which proves (\ref{a2}). Next, using the Schwarz inequality $|\langle F\rangle|\leq\langle F\theta F\rangle^{1/2}$ 
and (\ref{b1}),\,(\ref{a1}) we find
\begin{align}
\langle \mathcal O^{[k]}[\Nu]\rangle&\leq\langle \mathcal O^{[k]}[\Nu]\theta\Big[\mathcal O^{[k]}[\Nu]\Big]\rangle^{1/2}
\\
&=\langle\mathcal O^{[k]}[\Nu]\mathcal O^{[-k]}[\Nu^\theta]\rangle^{1/2}=1.
\end{align}
which proves the second inequality in (\ref{a3}) while the first one is trivial. 
Since $\Nu$ and $\Nu^\theta$ are homologous we can apply (\ref{b3}) to obtain
\begin{align}
\langle\mathcal F^{[m]}[\Nu]\rangle&=\langle\mathcal F^{[m]}[\Nu]\mathcal F^{[m]}[\Nu^\theta]\rangle
\\
&=\langle\mathcal F^{[m]}[\Nu]\theta\Big[\mathcal F^{[m]}[\Nu]\Big]\rangle\geq 0.  \label{sisi}
\end{align}
(\ref{sisi}) and (\ref{b2}) prove (\ref{a4}). (These simple proofs of (\ref{a3}) and (\ref{a4}) seem to be new.)
\end{proof}
The vortex free energy $F_v^{[k]}$ and the electric flux free energy $F_{el}^{[m]}$ are 
defined by $\ee^{-F_v^{[k]}}\equiv\langle\mathcal O^{[k]}[\Nu]\rangle$  
and $\ee^{-F_{el}^{[m]}}\equiv\langle\mathcal F^{[m]}[\Nu]\rangle$, respectively.
\\

Let $N(R)\in\{0,1,\dots,N-1\}$ denote the $N$-ality of an irreducible representation 
$R$\footnote{$N$-ality is the number (mod $N$) of boxes in the Young tableau of $R$.} of $SU(N)$ whose dimension is $d_R$. 
Take a rectangle $C$ lying in a $x^\mu$-$x^\nu$ plane and let $A_C$ the area enclosed by $C$. For the 
normalized Wilson loop in the representation $R$, 
$W_R(C)\equiv\jfrac{1}{d_R}\chi_R\Big(\prod_{b\in C}^{}U_b\Big)$, we have
\begin{thm}[TY inequality for $SU(N)$ LGT]\label{thm_N}
\begin{e}
|\langle W_R(C)\rangle|\leq \langle \mathcal F^{[N(R)]}[\Nu]\rangle^{A_C/L_\mu L_\nu}
+\left\{1-\langle \mathcal F^{[0]}[\Nu]\rangle\right\}^{A_C/L_\mu L_\nu}.                        \label{N}
\end{e}
In addition, if $N(R)\not =0$ we have
\begin{align}
|\langle W_R(C)\rangle|&\leq 2\left\{1-\langle \mathcal F^{[0]}[\Nu]\rangle\right\}^{A_C/L_\mu L_\nu}   \label{local__}
\\
&=2\left\{1-\jfrac{1}{N}\jsum{k=0}{N-1}\langle\mathcal O^{[k]}[\Nu]\rangle\right\}^{A_C/L_\mu L_\nu}. 
\label{N_cor}
\end{align}
\end{thm}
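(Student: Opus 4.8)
The plan is to adapt the reflection-positivity argument of Tomboulis and Yaffe behind theorem \ref{first_thm}, with the center element $z^k$ and the $N$-ality $N(R)$ taking over the roles played there by $\pm1$ and by $N(R)\in\{0,1\}$. Two facts do the work: the site-reflection positivity of the one-plaquette action, which supplies the Schwarz inequality $|\langle F\rangle|\le\langle F\,\theta F\rangle^{1/2}$ for reflections $\theta$ about the hyperplanes $x^\mu=m$ ($m\in\mathbb Z$) already used to obtain (\ref{a3})--(\ref{a4}); and the algebra of the operators $\mathcal O^{[k]}$, $\mathcal F^{[m]}$ recorded in (\ref{b1})--(\ref{b4}) and (\ref{a1})--(\ref{a4}). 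Throughout one uses $|W_R|\le1$, which is immediate from $|\chi_R(g)|\le d_R$.

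First I would run the chessboard estimate on $\langle W_R(C)\rangle$. Since $C$ lies in an $x^\mu$-$x^\nu$ plane with $A_C\le L_\mu L_\nu$, one applies the Schwarz inequality above repeatedly, with $\theta$ reflecting about hyperplanes normal to $x^\mu$ and to $x^\nu$ chosen so as not to cut the loop, each step reflecting $C$ onto a fresh copy of itself, until the copies tile the whole $L_\mu\times L_\nu$ torus face. The single-observable chessboard bound then reads
\[
|\langle W_R(C)\rangle|\le\big|\langle\mathcal W\rangle\big|^{A_C/(L_\mu L_\nu)},
\]
where $\mathcal W$ is the product of the $L_\mu L_\nu/A_C$ reflected copies of $W_R(C)$ covering the torus face, and every quantity sits in $[0,1]$ because $|W_R|\le1$ and by (\ref{a3})--(\ref{a4}); in particular the fractional exponent $A_C/(L_\mu L_\nu)\in(0,1]$ is harmless.

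Next I would evaluate $\langle\mathcal W\rangle$ by 't Hooft's electric-flux duality on the lattice. Being a closed web of Wilson lines that envelops a full fundamental domain of the $x^\mu$-$x^\nu$ torus face, $\mathcal W$ is linked with the vortex $\Nu$; performing the local redefinitions $U\to z^kU$ that relocate or erase the twist multiplies $\mathcal W$ by $z^{kN(R)}$, so that, with $\langle\mathcal O^{[k]}[\Nu]\rangle=Z^{[k]}_\Lambda/Z_\Lambda$, the definition (\ref{F_defdef}) of $\mathcal F^{[m]}$, and the projector identities (\ref{b1})--(\ref{b4}), $\langle\mathcal W\rangle$ becomes a flux-weighted combination in which the sector $m=N(R)$ is singled out, with a remainder supported on the non-trivial sectors $m\ne0$ whose total weight is $1-\langle\mathcal F^{[0]}[\Nu]\rangle$ by (\ref{b2}) and (\ref{a4}); hence $\big|\langle\mathcal W\rangle\big|\le\langle\mathcal F^{[N(R)]}[\Nu]\rangle+\big(1-\langle\mathcal F^{[0]}[\Nu]\rangle\big)$. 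Since $0\le A_C/(L_\mu L_\nu)\le1$ and $t\mapsto t^s$ is subadditive on $[0,\infty)$ for $0\le s\le1$, raising this to the power $A_C/(L_\mu L_\nu)$ and splitting the sum gives (\ref{N}). If moreover $N(R)\ne0$, then (\ref{b2}) together with $\langle\mathcal F^{[m]}[\Nu]\rangle\ge0$ from (\ref{a4}) forces $\langle\mathcal F^{[N(R)]}[\Nu]\rangle\le1-\langle\mathcal F^{[0]}[\Nu]\rangle$, so both terms in (\ref{N}) are bounded by $\{1-\langle\mathcal F^{[0]}[\Nu]\rangle\}^{A_C/(L_\mu L_\nu)}$, giving (\ref{local__}); and (\ref{N_cor}) is just the substitution $\langle\mathcal F^{[0]}[\Nu]\rangle=\frac1N\sum_{k=0}^{N-1}\langle\mathcal O^{[k]}[\Nu]\rangle$.

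The step I expect to be the genuine obstacle is the first one. For the fundamental representation of $SU(2)$ the reflected copies of $C$ fuse cleanly along their shared links, but for a general irreducible $R$ the characters on the shared Wilson lines do not cancel, so the tiling has to be organised around the center content of the holonomies --- that is, around $N(R)$, with $|W_R|\le1$ used to discard the rest --- rather than around the full representation; one must also handle loop dimensions that need not divide $L_\mu$ and $L_\nu$ (so that $A_C/(L_\mu L_\nu)$ may be irrational, as in \cite{Tomboulis-Yaffe}, via a coaxial-stacking refinement of the tiling), and check that the leftover of the tiling is genuinely the vortex-linking object $\mathcal W$ rather than merely something homologous to it. This is exactly the construction that the authors of \cite{Tomboulis-Yaffe} left implicit when they asserted extendability to other groups with non-trivial center, and it is where the passage from $SU(2)$ to general $N$ really requires care.
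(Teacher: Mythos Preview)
Your scheme --- chessboard-tile $W_R(C)$ over the $x^\mu$--$x^\nu$ face first, then bound the tiled product $\mathcal W$ by flux duality --- has a genuine gap exactly where you flag it. The object $\mathcal W$ is a product of $L_\mu L_\nu/A_C$ \emph{separate} contractible Wilson loops, not a single loop linking $\Nu$. Relocating $\Nu$ by local redefinitions $U\to z^kU$ multiplies only the one factor whose interior the vortex currently pierces by $z^{\pm kN(R)}$, not $\mathcal W$ as a whole, so the step ``$\langle\mathcal W\rangle$ becomes a flux-weighted combination singling out $m=N(R)$'' does not go through; there is no evident way to extract $\langle\mathcal F^{[N(R)]}\rangle+(1-\langle\mathcal F^{[0]}\rangle)$ from $|\langle\mathcal W\rangle|$. (Incidentally, the fusing you attribute to the $SU(2)$ fundamental case does not happen there either: $\chi_R(U_1)\chi_R(U_2)\ne\chi_R(U_1U_2)$ already for $R$ the fundamental of $SU(2)$.)

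The paper inverts the order of operations. It inserts $1=\mathcal F^{[0]}[\Nu]+(1-\mathcal F^{[0]}[\Nu])$ \emph{before} any reflection, moves $\Nu$ once across $C$ to turn $\langle\mathcal F^{[0]}[\Nu]W_R(C)\rangle$ into $\langle\mathcal F^{[\pm N(R)]}[\Nu']W_R(C)\rangle$, and only then iterates the Schwarz inequality on each term separately. The point you were missing is that the doubling works for every irreducible $R$: one writes $\chi_R$ on the rectangle as $\sum_{\alpha\beta}\{\chi_R|_{DA+AB+BC}\}_{\alpha\beta}\{\chi_R|_{CD}\}_{\beta\alpha}$, reflects about the hyperplane through the edge $CD$, and uses that complex conjugation of a unitary matrix element reverses path orientation; the sum on $\alpha,\beta$ then reassembles $\chi_R$ on the doubled rectangle $ABB^\theta A^\theta$, with the shared edge contributing only a factor $\sqrt{d_R}$ that is absorbed into the normalisation $W_R=\chi_R/d_R$. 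The attached projector collapses via (\ref{b3}), since $\Nu$ can be slid to $\Nu^\theta$ without crossing the enlarged loop. Under the hypothesis $L_\mu=2^pl_\mu$, $L_\nu=2^ql_\nu$ (which the paper assumes, so your worry about irrational exponents does not arise), after $p+q$ doublings the loop wraps the torus and disappears, leaving just $\{1-\langle\mathcal F^{[0]}\rangle\}^{A_C/L_\mu L_\nu}$ and $\langle\mathcal F^{[N(R)]}\rangle^{A_C/L_\mu L_\nu}$ for the two terms. No tiled product $\mathcal W$ ever appears.
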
 
\begin{proof}
Although the argument below parallels that of 
ref.\cite{Kovacs-Tomboulis-1} for $SU(2)$, we describe the proof in full 
detail for readers' convenience. Suppose $\Nu,\,\Nu'$ are stacked set of plaquettes wrapping 
around $d-2$ periodic directions of $\Lambda$ and $\Nu$ is linking once with $C$ while $\Nu'$ is not. 
(See fig.\ref{link} for a 3-dimensional illustration of the setting.)\vspace{-10pt}
\begin{figure}[hbtp]
\begin{center}
\includegraphics[width=6cm,clip]{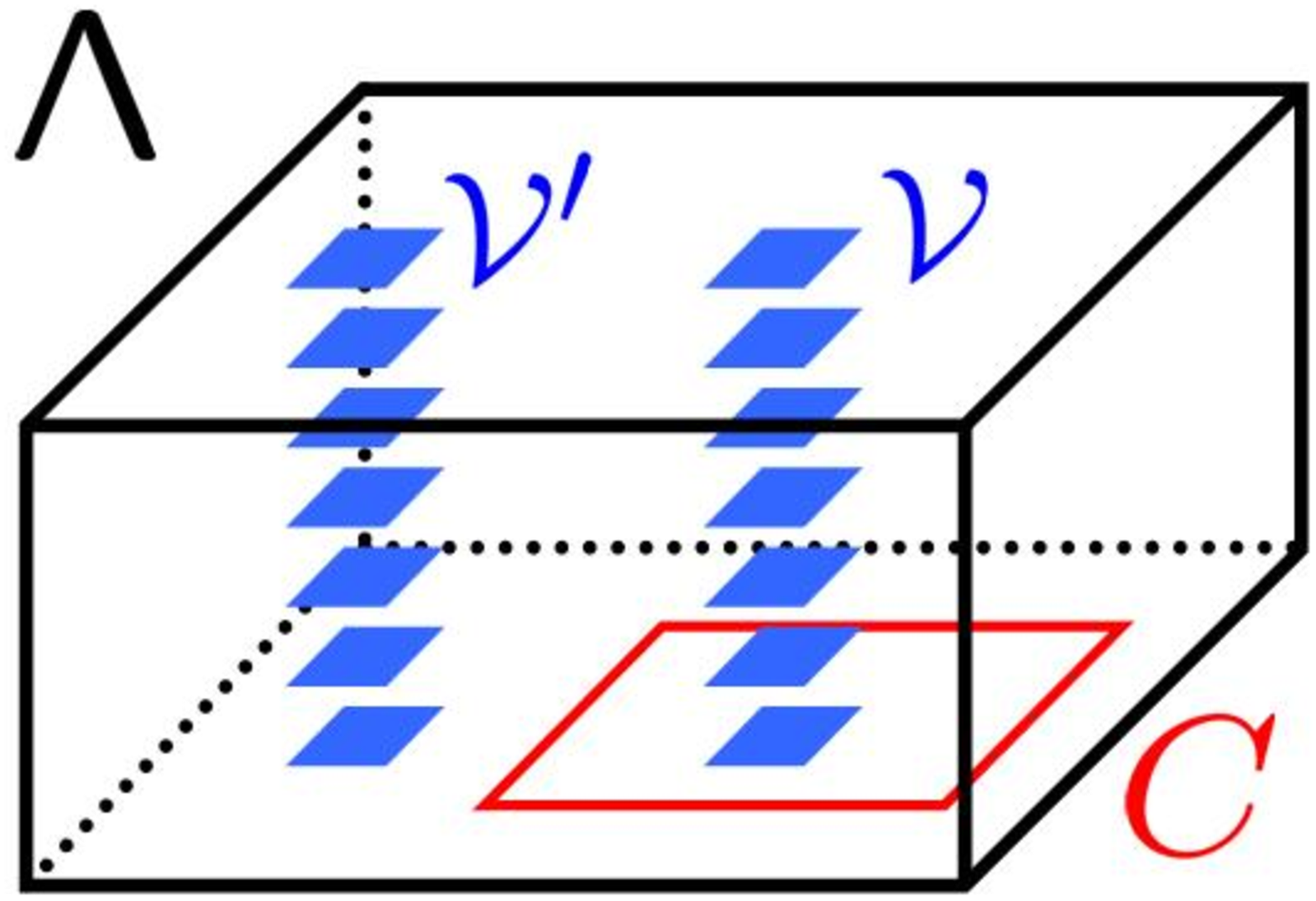}
\end{center}\vspace{-16pt}
\caption{Locations of $\Nu,\,\Nu'$ and $C$.}
\label{link}
\end{figure}

Let us rewrite the expectation value of 
$W_R(C)\equiv\jfrac{1}{d_R}\chi_R\Big(\displaystyle\prod_{b\in C}U_b\Big)$ as follows:
\begin{align}
\langle W_R(C)\rangle&=\langle(1-\mathcal F^{[0]}[\Nu])W_R(C)\rangle+\langle\mathcal F^{[0]}[\Nu]W_R(C)\rangle
\\
&=\langle(1-\mathcal F^{[0]}[\Nu])W_R(C)\rangle+\jfrac{1}{N}\jsum{k=0}{N-1}
\langle\mathcal O^{[k]}[\Nu]W_R(C)\rangle.                                    \label{ii7}
\end{align}
The presence of the second term is not desirable from the viewpoint of 
obtaining a meaningful upper bound of $\langle W_R(C)\rangle$, 
so let us perform redefinitions of variables $U\to z^k U$ 
to bring $\Nu$ to $\Nu'$, which causes the change
\begin{e}
\langle\mathcal O^{[k]}[\Nu]W_R(C)\rangle\to z^{\pm N(R)k}\langle\mathcal O^{[k]}[\Nu']W_R(C)\rangle.
\end{e}
This is because in the course of bringing $\Nu$ to $\Nu'$ we must change one of the link variables on $C$. 
(The sign of exponent depends on the orientation of $C$.) Thus (\ref{ii7}) becomes
\begin{align}
\langle W_R(C)\rangle&=\langle(1-\mathcal F^{[0]}[\Nu])W_R(C)\rangle+\jfrac{1}{N}\jsum{k=0}{N-1}
z^{\pm N(R)k}\langle\mathcal O^{[k]}[\Nu']W_R(C)\rangle
\\
&=\langle(1-\mathcal F^{[0]}[\Nu])W_R(C)\rangle+\langle\mathcal F^{[\pm N(R)]}[\Nu']W_R(C)\rangle.
\end{align}
\begin{figure}[hbtp]
\begin{center}
\includegraphics[width=15.0cm,clip]{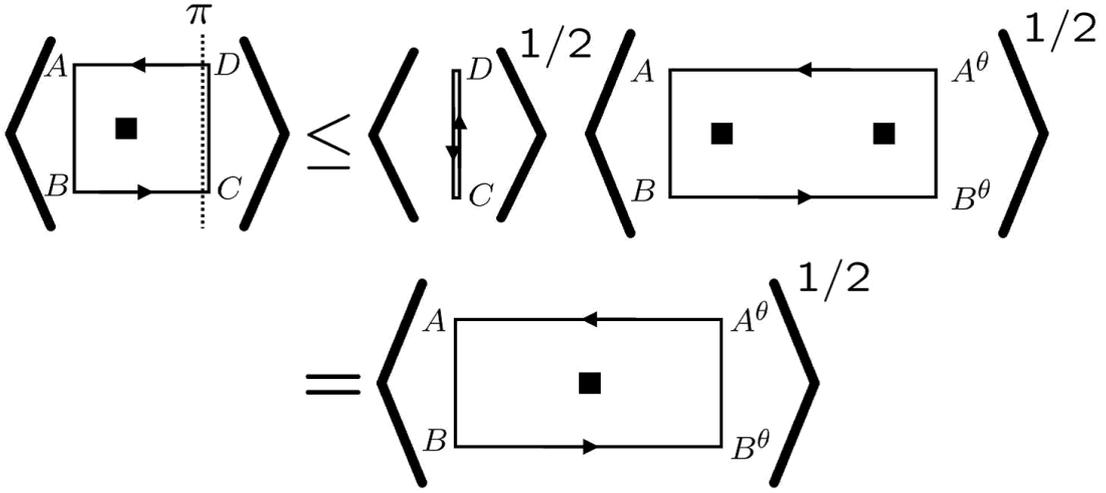}
\end{center}\vspace{-15pt}
\caption{The Schwarz inequality enables us to double the size of the rectangle. This figure is essentially 
borrowed from ref.\cite{Kovacs-Tomboulis-1}.}
\label{kaworu}
\end{figure}
Our next step is expressed in fig.\ref{kaworu} schematically in which a black square represents the operator 
$(1-\mathcal F^{[0]}[\Nu])$. Labeling four vertices as $A,B,C,D$, setting the hyperplane $\pi$ so that it is 
perpendicular to the rectangle and contains the edge $CD$, and applying the Schwarz inequality 
$|\langle F\theta G\rangle|\leq\langle F\theta F\rangle^{1/2}\langle G\theta G\rangle^{1/2}$ 
we obtain
\begin{align}
&\big|\langle(1-\mathcal F^{[0]}[\Nu])W_R\big|_{DA+AB+BC+CD}\rangle\big|
\\
=\ &\jfrac{1}{d_R}\big|\langle(1-\mathcal F^{[0]}[\Nu])\chi_R\big|_{DA+AB+BC+CD}\rangle\big|
\\
=\ &\jfrac{1}{d_R}\Big|\jsum{\alpha,\beta}{}\big\langle(1-\mathcal F^{[0]}[\Nu])\{\chi_R\big|_{DA+AB+BC}\}_{\alpha\beta}
\{\chi_R\big|_{CD}\}_{\beta\alpha}\big\rangle\Big|
\\
\leq \ &\jfrac{1}{d_R}\jsum{\alpha,\beta}{}\Big\langle(1-\mathcal F^{[0]}[\Nu])(1-\mathcal F^{[0]}[\Nu^\theta])
\{\chi_R\big|_{DA+AB+BC}\}_{\alpha\beta}
\overline{\{\chi_R\big|_{DA^\theta+A^\theta B^\theta+B^\theta C}\}_{\alpha\beta}}\Big\rangle^{1/2}
\notag
\\
&\hspace{100pt}\times 
\Big\langle \{\chi_R\big|_{CD}\}_{\beta\alpha}\overline{\{\chi_R\big|_{CD}\}_{\beta\alpha}}\Big\rangle^{1/2}     \label{x1}
\\
\leq \ &\jfrac{1}{d_R}\Big\langle(1-\mathcal F^{[0]}[\Nu])\jsum{\alpha,\beta}{}\{\chi_R\big|_{DA+AB+BC}\}_{\alpha\beta}
\overline{\{\chi_R\big|_{DA^\theta+A^\theta B^\theta+B^\theta C}\}_{\alpha\beta}}\Big\rangle^{1/2}
\times\sqrt{d_R}
\label{x2}
\\
=\ &\langle(1-\mathcal F^{[0]}[\Nu])W_R\big|_{AB+BB^\theta+B^\theta A^\theta+A^\theta A}\rangle^{1/2}.
\end{align}
In going from (\ref{x1}) to (\ref{x2}) we used (\ref{b3}). The length of the rectangle doubled.

Let $l_\mu\times l_\nu$ the size of the original rectangle and assume that 
$L_\mu=2^pl_\mu$ and $L_\nu=2^ql_\nu$ for some $p,q\in\mathbf{N}$\footnote{This condition was also present 
in the original TY inequality \cite{Tomboulis-Yaffe}. 
It is not a severe restriction, however, as long as we believe the asymptotic behavior of observables to be 
independent of the way we take $L_\mu,\,L_\nu$ to infinity.}. Repeating the operation above for sufficiently many times in 
both $x^\mu$- and $x^\nu$- directions, due to the periodic boundary conditions the rectangle finally 
vanishes away, yielding
\begin{align}
\langle(1-\mathcal F^{[0]}[\Nu])W_R(C)\rangle&\leq\left\{1-\langle\mathcal F^{[0]}[\Nu]\rangle\right\}^{1/2^{p+q}}
\\
&=\left\{1-\langle\mathcal F^{[0]}[\Nu]\rangle\right\}^{l_\mu l_\nu/L_\mu L_\nu}
=\left\{1-\langle\mathcal F^{[0]}[\Nu]\rangle\right\}^{A_C/L_\mu L_\nu}.
\end{align}
$\langle\mathcal F^{[\pm N(R)]}[\Nu']W_R(C)\rangle
\leq \langle\mathcal F^{[\pm N(R)]}[\Nu']\rangle^{A_C/L_\mu L_\nu}$ can be shown in a similar way, 
hence (\ref{N}) is proved.

(\ref{local__}) is a consequence of (\ref{N}) and $\langle\mathcal F^{[N(R)]}[\Nu]\rangle
\leq 1-\langle \mathcal F^{[0]}[\Nu]\rangle$.
\end{proof}
The message of (\ref{N_cor}) is that \textit{the exponential decay of the vortex free energy, i.e. 
$\langle \mathcal O^{[k]}[\Nu]\rangle\equiv \ee^{-F_v^{[k]}}=1-O(\ee^{-\rho L_\mu L_\nu})$ for \underline{every $k$}, 
is a sufficient condition for the area law of the Wilson loop to hold}. The area law does not hold, or is 
at least difficult to prove, if not all of the $\langle\mathcal O\rangle$'s converge to 1. 

Several comments are in order. Firstly, suppose that 
the action in (\ref{saisyonoaction}) is in the \textit{adjoint} representation. 
Then $\langle\mathcal O^{[k]}[\Nu]\rangle=1$ follows for any $k$, hence making $\langle W_R(C)\rangle$ with $N(R)\not=0$ 
vanish identically for arbitrary finite volume (see (\ref{N_cor})). This is to be anticipated; 
since the adjoint Wilson action is invariant under 
the \textit{local} transformation $U\to zU$ $(z\in Z_N)$, which cannot break spontaneously according to the 
Elitzur's theorem, and since $W_R(C)$ with $N(R)\not =0$ transforms nontrivially under this transformation, 
its expectation value {\textit{must}} vanish.

Secondly, there are {\it{a lot more varieties}} of inequalities available other than (\ref{N}). 
Assume $N=6$ for instance. From (\ref{b3}) we have 
$\big\langle(\mathcal F^{[1]}[\Nu]+\mathcal F^{[2]}[\Nu])(\mathcal F^{[1]}[\Nu']+\mathcal F^{[2]}[\Nu'])
\big\rangle=\langle\mathcal F^{[1]}[\Nu]+\mathcal F^{[2]}[\Nu]\rangle$ and 
$\big\langle(\mathcal F^{[3]}[\Nu]+\mathcal F^{[4]}[\Nu]+\mathcal F^{[5]}[\Nu])
(\mathcal F^{[3]}[\Nu']+\mathcal F^{[4]}[\Nu']+\mathcal F^{[5]}[\Nu'])\big\rangle
=\langle\mathcal F^{[3]}[\Nu]+\mathcal F^{[4]}[\Nu]+\mathcal F^{[5]}[\Nu]\rangle$, hence by
modifying the above proof one can straightforwardly show
\begin{e}
|\langle W_R(C)\rangle|\leq \langle \mathcal F^{[N(R)]}\rangle^{A_C/L_\mu L_\nu}
+\left\{\langle\mathcal F^{[1]}+\mathcal F^{[2]}\rangle\right\}^{A_C/L_\mu L_\nu}
+\left\{\langle\mathcal F^{[3]}+\mathcal F^{[4]}+\mathcal F^{[5]}\rangle\right\}^{A_C/L_\mu L_\nu}. \label{N=6}
\end{e}
However (\ref{N=6}) and all of its cousins are weaker than (\ref{N}) with $N=6$, which can be understood by 
the elementary inequality $(\sum_{i}^{}x_i)^\alpha<\sum_{i}^{}(x_i)^\alpha$ for $0<\alpha<1$ and $0<x_i$.

Note that one cannot derive the area law from (\ref{N}) when $N(R)=0$, as can be seen from
\begin{e}
\Big[{\rm{r.h.s.}}\ {\rm{of}}\ (\ref{N})\Big]\geq \langle\mathcal F^{[0]}[\Nu]\rangle^{A_C/L_\mu L_\nu}
\geq \Big(\jfrac{1}{N}\Big)^{A_C/L_\mu L_\nu}\ \ \to 1\ \ {\rm{as}}\ \ L_\mu,\,L_\nu\to\infty.
\end{e}
The above implies that the ``gluons'' of $SU(N)$ can screen particles of zero $N$-ality.

Thirdly, theorem \ref{thm_N} is correct even after a 
matter field whose $N$-ality is zero is introduced into the 
theory, since the matter-gauge coupling $\Phi^\dagger_{x+\hat\mu} D_r[U_{x,\mu}]\Phi_x$ preserves reflection positivity and 
is insensitive to the change of variables $U\to zU$.

Finally we remark on the utility of strong-coupling cluster expansion techniques. (Similar discussion will 
be presented in section \ref{SC_expansion}.) \label{Munster_}As already mentioned, 
the exponential suppression of vortex free energy 
$-\log\langle \mathcal O[\Nu]\rangle\approx \ee^{-\rho L_\mu L_\nu}$
has been verified by $\rm{M\ddot unster}$ \cite{Munster} for $SU(2)$ LGT and for sufficiently strong coupling. 
Especially he showed to all orders of strong-coupling expansion that the constant $\rho$ appearing in the vortex 
free energy ('t Hooft's string tension) is equal to the conventional Wilson's string tension. 
His proof hinges on the observation that both the calculation 
of Wilson loop expectation value and that of vortex free energy reduce, at sufficiently strong coupling, to the problem 
of fluctuating \textit{random surfaces}. 
It is understood without difficulty that the methods he employed can be readily used for $SU(N)$ LGT to show 
$-\log\langle \mathcal O^{[k]}[\Nu]\rangle\approx\ee^{-\rho L_\mu L_\nu}$ for every $k\not=0$ (hence proving the area law). 
This time $\rho$ is equal to the \textit{fundamental} string tension (since the gauge action 
(\ref{partition__}) is in the fundamental representation).
\vspace{5pt}
\\
\hfil\hspace{80pt}*\hspace{80pt}*\hspace{80pt}*\hspace{80pt}*\hspace{80pt}

Let us then turn to LGT with matter field of \textit{non-zero $N$-ality}; 
the relations (\ref{b1})-(\ref{b4}) are no longer valid. 
If the matter field has $N$-ality $m$ and the greatest common divisor of $N$ and $m$ is $s$ , the subgroup
$Z_s\subset Z_N$ is a symmetry of the theory. It is thus straightforward to prove the following
\begin{thm}If $N(R)\not \equiv 0$ (mod $s$), we have
\begin{e}
|\langle W_R(C)\rangle|\leq 2\left\{1-\langle f^{[0]}[\Nu]\rangle\right\}^{A_C/L_\mu L_\nu}
=2\left\{1-\jfrac{1}{s}\jsum{k=0}{s-1}\jfrac{Z_\Lambda^{[kN/s]}}{Z_\Lambda}\right\}^{A_C/L_\mu L_\nu}.
\end{e}
\end{thm}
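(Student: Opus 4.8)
The plan is to replay the proof of Theorem~\ref{thm_N} almost verbatim, with the center $Z_N$ replaced throughout by the subgroup $Z_s$, which by hypothesis is the residual symmetry once the matter field of $N$-ality $m$ is switched on. Write $\omega\equiv z^{N/s}=\ee^{2\pi i/s}$, a primitive $s$-th root of unity, and for $k=0,1,\dots,s-1$ let $Z_\Lambda^{[kN/s]}$ denote the partition function twisted on $\Nu$ by $z^{kN/s}\in Z_s$, with vortex operator $\langle\mathcal O^{[kN/s]}[\Nu]\rangle\equiv Z_\Lambda^{[kN/s]}/Z_\Lambda$. Define the ``reduced'' electric flux operators by $f^{[m']}[\Nu]\equiv\jfrac{1}{s}\jsum{k=0}{s-1}\omega^{m'k}\mathcal O^{[kN/s]}[\Nu]$; in particular $f^{[0]}[\Nu]=\jfrac{1}{s}\jsum{k=0}{s-1}\mathcal O^{[kN/s]}[\Nu]$, whose expectation value is the quantity appearing in the statement.

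First I would establish the $Z_s$-analogues of the two Lemmas. The redefinition $U\to\omega^jU$ is a symmetry of the full (gauge plus matter) action and, applied link by link, carries one vortex in a homologous class onto another while shifting its twist; this gives $\langle\mathcal O^{[kN/s]}[\Nu]\mathcal O^{[k'N/s]}[\Nu']\rangle=\langle\mathcal O^{[(k+k')N/s]}[\Nu]\rangle$ for homologous $\Nu,\Nu'$, hence the projector relations $\jsum{m'=0}{s-1}f^{[m']}[\Nu]=1$ and $\langle f^{[l']}[\Nu]f^{[m']}[\Nu']\rangle=\langle f^{[l']}[\Nu]\rangle\,\delta^{(s)}_{l',m'}$, exactly as in (\ref{b1})--(\ref{b4}) with $N$ replaced by $s$. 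The matter-gauge coupling $\Phi^\dagger_{x+\hat\mu}D_r[U_{x,\mu}]\Phi_x$ is still site-reflection positive, so under the reflection $\theta$ one has $\theta[\mathcal O^{[kN/s]}[\Nu]]=\mathcal O^{[-kN/s]}[\Nu^\theta]$ and $\theta[f^{[m']}[\Nu]]=f^{[m']}[\Nu^\theta]$, and the Schwarz-inequality arguments of the second Lemma carry over to give $0\leq\langle\mathcal O^{[kN/s]}[\Nu]\rangle\leq1$ and $0\leq\langle f^{[m']}[\Nu]\rangle\leq1$.

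Next I would track the effect on $W_R(C)$ of dragging a vortex $\Nu$ that links $C$ once onto a non-linking $\Nu'$, by iterating $U\to z^{kN/s}U$. As in the proof of Theorem~\ref{thm_N}, one link on $C$ must then be changed, so $\langle\mathcal O^{[kN/s]}[\Nu]W_R(C)\rangle\to(z^{kN/s})^{\pm N(R)}\langle\mathcal O^{[kN/s]}[\Nu']W_R(C)\rangle=\omega^{\pm kN(R)}\langle\mathcal O^{[kN/s]}[\Nu']W_R(C)\rangle$, using $(z^{N/s})^{N(R)}=\omega^{N(R)}$. Decomposing $\langle W_R(C)\rangle=\langle(1-f^{[0]}[\Nu])W_R(C)\rangle+\langle f^{[0]}[\Nu]W_R(C)\rangle$ and performing this shift inside the second term turns it into $\langle f^{[\pm N(R)]}[\Nu']W_R(C)\rangle$, where the flux label is read mod $s$. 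The hypothesis $N(R)\not\equiv0$ (mod $s$) is exactly what guarantees $\omega^{N(R)}\neq1$, i.e. that $N(R)\bmod s$ is a nonzero flux sector of $Z_s$; combined with $\jsum{m'=0}{s-1}f^{[m']}[\Nu]=1$ this gives $\langle f^{[\pm N(R)]}[\Nu]\rangle\leq1-\langle f^{[0]}[\Nu]\rangle$, the analogue of the inequality used to pass from (\ref{N}) to (\ref{local__}).

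Finally, the Schwarz-inequality rectangle-doubling step of fig.~\ref{kaworu} applies verbatim, with the black square now denoting $(1-f^{[0]}[\Nu])$ and with (\ref{b3}) replaced by its $Z_s$-version; after $p+q$ doublings in the $x^\mu$- and $x^\nu$-directions the loop wraps the periodic lattice and disappears, yielding $\langle(1-f^{[0]}[\Nu])W_R(C)\rangle\leq\{1-\langle f^{[0]}[\Nu]\rangle\}^{A_C/L_\mu L_\nu}$ and, in the same way, $\langle f^{[\pm N(R)]}[\Nu']W_R(C)\rangle\leq\langle f^{[\pm N(R)]}[\Nu']\rangle^{A_C/L_\mu L_\nu}$. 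Adding the two contributions, bounding $\langle f^{[\pm N(R)]}[\Nu]\rangle$ by $1-\langle f^{[0]}[\Nu]\rangle$, and substituting $\langle f^{[0]}[\Nu]\rangle=\jfrac{1}{s}\jsum{k=0}{s-1}Z_\Lambda^{[kN/s]}/Z_\Lambda$ produces the asserted inequality. The one genuinely new point, and hence where I would be most careful, is the index arithmetic above: that the $Z_s$-twist $z^{kN/s}$ acts on $W_R(C)$ as $\omega^{\pm kN(R)}$ and that $N(R)\not\equiv0\pmod s$ is precisely the condition making $N(R)\bmod s$ a nonzero $Z_s$-sector; once this is pinned down, everything else is the bookkeeping of Theorem~\ref{thm_N} restricted to the subgroup $Z_s$.
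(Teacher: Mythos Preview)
Your proposal is correct and is exactly what the paper intends: the paper gives no explicit proof beyond the remark that $Z_s\subset Z_N$ remains a symmetry once matter of $N$-ality $m$ is present and that the argument is then ``straightforward,'' and your write-up simply fills in those details by rerunning the proof of Theorem~\ref{thm_N} with $Z_N$ replaced by $Z_s$. The one place requiring care---that the link redefinition $U\to z^{kN/s}U$ leaves the matter coupling invariant (because $s\mid m$) while acting on $W_R(C)$ as $\omega^{\pm kN(R)}$, so that $N(R)\not\equiv 0\pmod s$ puts the shifted term into a nonzero $Z_s$-flux sector---is precisely the point you flag, and you handle it correctly.
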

Although the development so far has been for $SU(N)$ LGT, the inequalities evidently apply to $U(N)$ LGT since the center 
of $U(N)$ is $U(1)$ which contains all of $Z_2,\,Z_3,\,Z_4,\,Z_5,\,\dots$. Let us focus on $U(1)$ for simplicity and 
define the twisted partition function as
\begin{align}
Z_\Lambda(\theta)&\equiv
\jint{}{}\prod_{b}^{}dU_b\,\exp\left(\beta\Bigg[\jsum{p\subset \Nu}{}\mathrm{Re\,}(e^{i\theta}U_p)
+\jsum{p\subset \Lambda\setminus \Nu}{}\mathrm{Re\,}U_p\Bigg]\right).
\end{align}
We state below the counterpart of theorem \ref{thm_N}. The proof is straightforward.
\begin{thm}For the Wilson loop of $U(1)$-charge $q\in\mathbb{Z}\setminus\{0\}$, we have
\begin{e}\hspace{25pt}
|\langle W_q(C)\rangle|
\leq 2 \left\{1-\big\langle \mathcal F_{U(1)}^{[0]}[\Nu]\big\rangle\right\}^{A_C/L_\mu L_\nu}
=2\left\{\jint{0}{2\pi}\jfrac{d\theta}{2\pi}\Bigg(1-\jfrac{Z_\Lambda(\theta)}{Z_\Lambda}\Bigg)\right\}^{A_C/L_\mu L_\nu},
\end{e}
with
\begin{e}
\langle\mathcal F_{U(1)}^{[0]}[\Nu]\rangle\equiv\jint{0}{2\pi}\jfrac{d\theta}{2\pi}
\jfrac{Z_\Lambda(\theta)}{Z_\Lambda}.
\end{e}
\end{thm}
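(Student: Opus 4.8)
The plan is to run the proof of Theorem~\ref{thm_N} essentially verbatim, with the finite center $Z_N$ replaced by $U(1)$: the discrete twist $z^k$ becomes a continuous twist $e^{i\theta}$, the group average $\frac1N\sum_{k=0}^{N-1}z^{mk}(\cdots)$ becomes $\int_0^{2\pi}\frac{d\theta}{2\pi}\,e^{im\theta}(\cdots)$, the $N$-ality $N(R)$ becomes the $U(1)$-charge $q$, and — since the charge-$q$ representation of $U(1)$ is one-dimensional — all matrix-index bookkeeping in the Schwarz-doubling step disappears. To keep $\theta$ free for the twist angle I write $\Theta$ for the site-reflection operator.

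First I would introduce $\mathcal{O}^{(\theta)}[\Nu]\equiv\exp\big(\beta\sum_{p\subset\Nu}[\mathrm{Re\,}(e^{i\theta}U_p)-\mathrm{Re\,}U_p]\big)$, so that $\langle\mathcal{O}^{(\theta)}[\Nu]\rangle=Z_\Lambda(\theta)/Z_\Lambda$, and $\mathcal{F}_{U(1)}^{[m]}[\Nu]\equiv\int_0^{2\pi}\frac{d\theta}{2\pi}\,e^{im\theta}\mathcal{O}^{(\theta)}[\Nu]$ for $m\in\mathbb{Z}$, which reduces to the $\mathcal{F}_{U(1)}^{[0]}[\Nu]$ of the statement at $m=0$. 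The $U(1)$ analogues of the two lemmas then go through unchanged. The measure-preserving local change $U\to e^{i\theta'}U$ can slide the twisted plaquettes of a homologous companion $\Nu'$ onto $\Nu$ but cannot untwist $\Lambda$ — the continuous counterpart of the remark below~(\ref{saisyonoaction}) — and this gives the composition law $\langle\mathcal{O}^{(\theta)}[\Nu]\,\mathcal{O}^{(\theta')}[\Nu']\rangle=\langle\mathcal{O}^{(\theta+\theta')}[\Nu]\rangle$ for homologous $\Nu,\Nu'$, the analogue of (\ref{b1}). Inserting this into the definition of $\mathcal{F}_{U(1)}^{[m]}$ and using $\int_0^{2\pi}\frac{d\theta}{2\pi}\,e^{i(l-m)\theta}=\delta_{l,m}$ (the continuous replacement of $\delta^{(N)}_{l,m}$) yields $\langle\mathcal{F}_{U(1)}^{[l]}[\Nu]\,\mathcal{F}_{U(1)}^{[m]}[\Nu']\rangle=\langle\mathcal{F}_{U(1)}^{[l]}[\Nu]\rangle\,\delta_{l,m}$ (analogue of (\ref{b3})), while $\sum_{m\in\mathbb{Z}}e^{im\theta}=2\pi\sum_{n\in\mathbb{Z}}\delta(\theta-2\pi n)$ gives the completeness $\sum_{m\in\mathbb{Z}}\mathcal{F}_{U(1)}^{[m]}[\Nu]=1$ (analogue of (\ref{b2})). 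Reality of $\mathcal{O}^{(\theta)}$ and reversal of plaquette orientation under reflection give $\Theta[\mathcal{O}^{(\theta)}[\Nu]]=\mathcal{O}^{(-\theta)}[\Nu^\Theta]$, hence $\Theta[\mathcal{F}_{U(1)}^{[m]}[\Nu]]=\mathcal{F}_{U(1)}^{[m]}[\Nu^\Theta]$ (analogues of (\ref{a1})--(\ref{a2})); the Schwarz inequality then gives $0\le\langle\mathcal{O}^{(\theta)}[\Nu]\rangle\le1$ (analogue of (\ref{a3})) and $\langle\mathcal{F}_{U(1)}^{[m]}[\Nu]\rangle=\langle\mathcal{F}_{U(1)}^{[m]}[\Nu]\,\Theta[\mathcal{F}_{U(1)}^{[m]}[\Nu]]\rangle\ge0$, so, together with the completeness, $0\le\langle\mathcal{F}_{U(1)}^{[m]}[\Nu]\rangle\le1$ (analogue of (\ref{a4})) and in particular $\langle\mathcal{F}_{U(1)}^{[0]}[\Nu]\rangle+\langle\mathcal{F}_{U(1)}^{[q]}[\Nu]\rangle\le1$.

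With these in hand the main estimate mimics Theorem~\ref{thm_N}. Writing $\langle W_q(C)\rangle=\langle(1-\mathcal{F}_{U(1)}^{[0]}[\Nu])W_q(C)\rangle+\int_0^{2\pi}\frac{d\theta}{2\pi}\langle\mathcal{O}^{(\theta)}[\Nu]W_q(C)\rangle$ and carrying $\Nu$ to a homologous $\Nu'$ unlinked with $C$ by $U\to e^{i\theta}U$ — which multiplies $\langle\mathcal{O}^{(\theta)}[\Nu]W_q(C)\rangle$ by $e^{\pm iq\theta}$, the sign fixed by the orientation of $C$, since exactly one link of $C$ is touched — the second term becomes $\langle\mathcal{F}_{U(1)}^{[\pm q]}[\Nu']W_q(C)\rangle$. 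The reflection-positivity ``doubling'' of the rectangle is then applied to each of $\langle(1-\mathcal{F}_{U(1)}^{[0]}[\Nu])W_q(C)\rangle$ and $\langle\mathcal{F}_{U(1)}^{[\pm q]}[\Nu']W_q(C)\rangle$ exactly as in the proof of Theorem~\ref{thm_N} — here $d_R=1$, so $W_q$ restricted to a path is a single phase and the sums over matrix indices are absent — with $\langle\mathcal{F}_{U(1)}^{[q]}[\Nu']\,\mathcal{F}_{U(1)}^{[q]}[(\Nu')^\Theta]\rangle=\langle\mathcal{F}_{U(1)}^{[q]}[\Nu']\rangle$ playing the role of (\ref{b3}). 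After enough halvings in both directions the rectangle wraps the torus and collapses, leaving $|\langle W_q(C)\rangle|\le\langle\mathcal{F}_{U(1)}^{[q]}[\Nu']\rangle^{A_C/L_\mu L_\nu}+\{1-\langle\mathcal{F}_{U(1)}^{[0]}[\Nu]\rangle\}^{A_C/L_\mu L_\nu}$. Finally $\langle\mathcal{F}_{U(1)}^{[q]}[\Nu']\rangle\le1-\langle\mathcal{F}_{U(1)}^{[0]}[\Nu']\rangle=1-\langle\mathcal{F}_{U(1)}^{[0]}[\Nu]\rangle$ (the inequality by the positivity and completeness above, valid because $q\ne0$; the equality because $\Nu,\Nu'$ are homologous) collapses the two terms into $2\{1-\langle\mathcal{F}_{U(1)}^{[0]}[\Nu]\rangle\}^{A_C/L_\mu L_\nu}$, and $1-\langle\mathcal{F}_{U(1)}^{[0]}[\Nu]\rangle=\int_0^{2\pi}\frac{d\theta}{2\pi}(1-Z_\Lambda(\theta)/Z_\Lambda)$ gives the displayed form.

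I do not expect a real obstacle — the statement is a transcription — but the point that warrants care is the composition law $\langle\mathcal{O}^{(\theta)}[\Nu]\,\mathcal{O}^{(\theta')}[\Nu']\rangle=\langle\mathcal{O}^{(\theta+\theta')}[\Nu]\rangle$: one must confirm that $U\to e^{i\theta'}U$ keeps the property stressed below~(\ref{saisyonoaction}) — it can only relocate the twisted plaquettes of a \emph{homologous} companion vortex, never remove the twist from $\Lambda$ — so that after the relocation the combined weight is exactly that of a single twist angle $\theta+\theta'$ on $\Nu$. Everything else — convergence of $\int_0^{2\pi}\frac{d\theta}{2\pi}$, reality and positivity of the weights, and the Poisson-summation identity used for completeness — is routine, and the hypothesis $q\ne0$ enters just as $N(R)\ne0$ did in Theorem~\ref{thm_N}, routing the unwanted term into $\mathcal{F}_{U(1)}^{[\pm q]}$ rather than $\mathcal{F}_{U(1)}^{[0]}$.
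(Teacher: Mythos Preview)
Your proposal is correct and takes essentially the same approach as the paper: the paper itself offers no detailed proof, stating only that ``the proof is straightforward,'' which means precisely the transcription of Theorem~\ref{thm_N} from $Z_N$ to $U(1)$ that you carry out. Your handling of the composition law, the projector algebra via Fourier inversion, the reflection properties, and the Schwarz doubling all match the $SU(N)$ argument step for step, and your identification of the one point requiring care (that the local redefinition $U\to e^{i\theta'}U$ relocates but cannot remove a homologically nontrivial twist) is apt.
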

Finally we point out that theorem \ref{thm_N} can be proved even if we add a Wilson loop with \textbf{zero $\bm{N}$-ality}, 
of size $2\times 2$ or $2\times 1$ to the action. It is simply because the site-reflection positivity is kept and 
the algebras of twists are still well defined. 

\subsection{Demonstration in 2D $SU(N)$ LGT}
Let us explicitly verify the proved inequality in solvable two-dimensional $SU(N)$ LGT. 
In two dimension, the twist is introduced on just one plaquette. Let the size of the lattice 
$L_1\times L_2$ and impose periodic boundary conditions in both directions. 
First we expand the exponentiated one-plaquette action, $\ee^{-S_p}$, into characters of 
irreducible unitary representations of $SU(N)$:
\begin{e}
e^{-S_p(U)}=\jsum{r}{}d_rF_r\chi_r(U)
\end{e}
where $d_r$ denotes the dimension of a representation $r$ and 
the reality of $S_p$ implies $F_r=F_{\overline{r}}$ (overline 
represents complex conjugation). Then a 
straightforward calculation using formulae
\begin{e}
\jint{}{}dU\,\chi_r(VU)\chi_{r'}(U^\dagger W)=\jfrac{1}{d_r}\delta_{rr'}\chi_r(VW),
\end{e}
\begin{e}
\jint{}{}dU\,\chi_r(VUWU^\dagger)=\jfrac{1}{d_r}\chi_{r}(V)\chi_{r}(W),
\end{e}
yields\footnote{(\ref{NEC}) differs from that obtained in ref.\cite{Gross-Witten} 
because they impose \textit{free} boundary conditions.}
\begin{align}
Z_{\Lambda}&\equiv\jint{}{}\prod_{b\in\Lambda}dU_b\,
\prod_p \left(\jsum{r}{}d_rF_r\chi_r(U_p)\right)
\\
&=\jsum{r}{}(F_r)^{L_1 L_2}.       \label{NEC}
\end{align}
On the other hand, introducing a twist 
$z^k=\exp(2\pi ik/N)$ into arbitrary one 
plaquette on $\Lambda$ gives the twisted partition function
\begin{e}
Z^{[k]}_\Lambda=\jsum{r}{}(F_r)^{L_1 L_2}z^{kN(r)}.
\end{e}
Using the identity $\jsum{k=0}{N-1}z^{kN(r)}=N\delta_{0,N(r)}$ we easily obtain
\begin{align}
1-\jfrac{1}{N}\jsum{k=0}{N-1}\langle\mathcal O^{[k]}\rangle
&=\jfrac{\jsum{r;\,N(r)\not =0}{}(F_r)^{L_1 L_2}}{\jsum{r}{}(F_r)^{L_1 L_2}}
\\
&=\jfrac{\jsum{r;\,N(r)\not =0}{}(c_r)^{L_1 L_2}}
{1+\jsum{r\not =T}{}(c_r)^{L_1 L_2}}
\end{align}
where $T$ implies the trivial representation and we defined 
$c_r\equiv \jfrac{F_r}{F_T}$. Note that $c_r=
c_{\overline{r}}$. Since
\begin{e}
|F_r|=\Big|\jfrac{1}{d_r}\jint{}{}dU\,\ee^{-S_p(U)}\overline{\chi_r(U)}\Big|
<\jint{}{}dU\,\ee^{-S_p(U)}=F_T,
\end{e}
we have $|c_r|<1$, while $0\leq c_r$ can be shown for a 
wide class of gauge actions including the Wilson action.

From above considerations we obtain
\begin{align}
\Big[{\rm{r.h.s.\ of\ }}(\ref{local__})\Big]
&=2\left\{1-\jfrac{1}{N}\jsum{k=0}{N-1}\langle\mathcal O^{[k]}\rangle\right\}^{A_C/L_1L_2}
\\
&\to 2(c_{r'})^{A_C}
\ \ \ \ \ \ \ {\textrm{as}}\ \ \ L_1 L_2\to\infty.
\end{align}
Here $c_{r'}$ is defined as the largest value among $\{c_r\,|\,N(r)\not =0\}$. 
In order to determine $r'$ we need an explicit form of the action $S_p$.

Let us turn to the l.h.s. of (\ref{local__}), i.e. the 
Wilson loop expectation value. We 
borrow the result of ref.\cite{Gross-Witten} which in our 
notation reads
\begin{e}
\langle W_R(C)\rangle=(c_{\overline{R}})^{A_C}.
\end{e}
If $N(R)\not=0$, we obviously have 
$c_{\overline{R}}\leq c_{r'}$, hence the inequality 
(\ref{local__}) holds for sure.

\section{Extension of inequalities to spin systems}\label{7899}
Main result of this section is theorem \ref{thm} 
on page \pageref{thm}, which is a generalization of theorem \ref{thm_N} to general spin systems. 
Before that, we need some preliminary analyses.
\subsection{Basic formulation}\label{Basic formulation}
Let us formulate a twisting procedure in spin systems obeying ref.\cite{GJK}. Consider a statistical system 
with nearest-neighbor interactions whose partition function is given by
\begin{e}
Z_\Lambda\equiv\jint{}{}\prod_{x\in\Lambda}d\phi_x\ \exp\left(
\jsum{y\in\Lambda}{}\jsum{\mu=1}{d}A(\phi_y,\phi_{y+\hat\mu})\right),     \label{partition_}
\end{e}
where $\Lambda$ is a $d$-dimensional hypercubic lattice with periodic boundary conditions, $\hat\mu$ is a 
unit vector in the $\mu$-direction and 
$\jsum{y\in\Lambda}{}\jsum{\mu=1}{d}$ is a sum over all links in $\Lambda$. The real-valued symmetric 
function $A(\,,\,)$ dictates the interaction between nearest sites (and possibly includes self interactions on each site). 
It can be shown by standard arguments that site-reflection positivity is automatically satisfied for any 
nearest neighbor interaction (see p.33 of ref.\cite{Montvay-Munster}) 
while link-reflection, not needed in the following, is often violated. 
Hereafter $\langle\dots\rangle$ represents the expectation value with the measure (\ref{partition_}). 
Let us assume that the system is invariant under a global transformation $\phi\to g\phi$ for any element $g$ of a 
global symmetry group $G$:\footnote{Note that $G$ need not be 
the maximal symmetry group of the system. The development in this section 
still holds if we take as $G$ an arbitrary subgroup of the maximal symmetry group.}
\begin{e}\hspace{80pt}
A(\phi,\phi')=A(g\phi,g\phi'),\hspace{70pt}g\in G.
\end{e}
We assume that $G$ is compact.

A twist for a link is defined as the change of interaction from $A(\phi,\phi')$ to 
$A(\phi,g\phi')$. An important difference from the twist in LGT is that 
\textit{\,$g$ need not belong to the center of $G$.} 
$G$ may or may not have a nontrivial center and that is not important for us.

Next, let us take a stacked set of links, $\Nu$, which winds around $\Lambda$ in $d-1$ periodic directions 
($\Nu$ is a closed loop when $d=2$ and a closed surface when $d=3$ on the dual lattice, see fig. \ref{spin-vortex}). Hereafter 
such $\Nu$ is called a {\textit{wall}} in distinction from a (center) vortex.
\begin{figure}[hbtp]
\begin{center}
\includegraphics*[width=8cm]{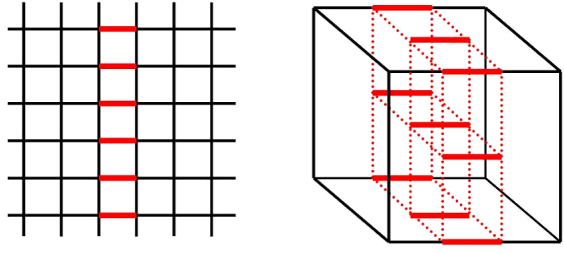}
\end{center}\vspace{-16pt}
\caption{$\Nu$ in two- and three- dimensions.}
\label{spin-vortex}
\end{figure}

The twisted partition function associated to $g\in G$ is given by
\begin{e}\hspace{30pt}
Z_\Lambda(g)[\Nu]\equiv\jint{}{}\prod_{x\in\Lambda}d\phi_x\ \exp\left(
\jsum{y\in\Lambda}{}\jsum{\mu=1}{d}\Bigg[
(1-\delta[\Nu]_{y,\mu})A(\phi_y,\phi_{y+\hat\mu})+\delta[\Nu]_{y,\mu}A(\phi_y,g\phi_{y+\hat\mu})\Bigg]
\right).		\label{twisted_pt}
\end{e}
The symbol $\delta[\Nu]_{y,\mu}$ is defined to be $=1$ 
if the link $\langle y,y+\hat\mu\rangle$ is contained in $\Nu$ and $=0$ otherwise. 
It is not difficult to see that $\Nu$ cannot be removed 
from $\Lambda$ by local redefinition of variables $\phi\to g\phi$. 
Let $\hat G$ denote the set of irreducible unitary representations of $G$. 
The wall creation operators $\{\mathcal O(g)[\Nu]|\,g\in G\}$ and their duals 
$\{\mathcal F_R(g)[\Nu]|\,g\in G,\,R\in\hat G\}$ are defined as follows:
\begin{align}
\langle\mathcal O(g)[\Nu]\rangle&\equiv\jfrac{Z_\Lambda(g)[\Nu]}{Z_\Lambda},
\\
\Longleftrightarrow \ \ \ 
\mathcal O(g)[\Nu]&=\exp\Big(
\jsum{y\in\Lambda}{}\jsum{\mu=1}{d}\delta[\Nu]_{y,\mu}\Big[
-A(\phi_y,\phi_{y+\hat\mu})+A(\phi_y,g\phi_{y+\hat\mu})\Big]\Big),
\\
\mathcal F_R(g)[\Nu]
&\equiv({\rm{dim\,}}R)\jint{G}{}dx\,\mathcal O(gx)[\Nu] \chi_R(x).
\end{align}
\begin{lem}If the walls $\Nu,\,\Nu'$ are homologous
\footnote{Plural walls are called \textit{homologous} if and only if 
they wind around the same periodic directions of $\Lambda$.}, we have
\begin{gather}
\langle\mathcal O(g)[\Nu]\cdot \mathcal O(g')[\Nu']\rangle=\langle\mathcal O(gg')[\Nu]\rangle
=\langle\mathcal O(g'g)[\Nu]\rangle,     \label{aa}
\\
\langle\mathcal F_R(g)[\Nu]\cdot \mathcal F_{R'}(g')[\Nu']\rangle
=\delta_{RR'}\langle\mathcal F_R(gg')[\Nu]\rangle
=\delta_{RR'}\langle\mathcal F_R(g'g)[\Nu]\rangle,         \label{aaa}
\\
\langle\mathcal O(g)[\Nu]\rangle=\jsum{R\in\hat G}{}\langle\mathcal F_R(g)[\Nu]\rangle,   \label{aaaa}
\\
1=\jsum{R\in\hat G}{}\langle\mathcal F_R(\mathbf{1})[\Nu]\rangle.      \label{52*}  
\end{gather}
(\ref{aa}), (\ref{aaa}) imply that $\langle\mathcal O(g)[\Nu]\rangle$ and 
$\langle\mathcal F_R(g)[\Nu]\rangle$ are class functions on $G$.
\end{lem}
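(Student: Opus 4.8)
The plan is to establish (\ref{aa}) first by the same device used in the proof of (\ref{b1}): perform a sequence of local redefinitions of variables $\phi_x\to g'\phi_x$ on one side of $\Nu'$ so as to slide $\Nu'$ onto $\Nu$. Since $\Nu$ and $\Nu'$ are homologous, they bound a region (on the dual lattice), and relabelling the $\phi_x$ for $x$ on one side of that region transports the twist from $\Nu'$ to $\Nu$ link by link; because the Haar measure $\prod_x d\phi_x$ is invariant under $\phi\to g'\phi$ this leaves $Z_\Lambda$ unchanged and converts $\mathcal O(g)[\Nu]\,\mathcal O(g')[\Nu']$ into $\mathcal O(g)[\Nu]\,\mathcal O(g')[\Nu]$, which by inspection of the explicit exponential form of $\mathcal O$ equals $\mathcal O(gg')[\Nu]$ — the composition law for twists stacked on the same wall follows because $A(\phi_y,g\,(g'\phi_{y+\hat\mu}))=A(\phi_y,(gg')\phi_{y+\hat\mu})$. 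Running the relabelling in the other order, i.e. moving $\Nu$ onto $\Nu'$, gives $\mathcal O(g'g)[\Nu]$ instead, so both equalities in (\ref{aa}) hold; note the two agree as expectation values even though $gg'\ne g'g$ in general, since $\langle\mathcal O(\cdot)[\Nu]\rangle$ will turn out to be a class function.

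Next I would derive (\ref{aaa}) purely algebraically from (\ref{aa}) and the definition $\mathcal F_R(g)[\Nu]=(\dim R)\int_G dx\,\mathcal O(gx)[\Nu]\,\chi_R(x)$. Expanding the product and using (\ref{aa}) inside the double Haar integral,
\begin{align}
\langle\mathcal F_R(g)[\Nu]\,\mathcal F_{R'}(g')[\Nu']\rangle
&=(\dim R)(\dim R')\jint{G}{}dx\jint{G}{}dx'\,
\langle\mathcal O(gxg'x')[\Nu]\rangle\,\chi_R(x)\,\chi_{R'}(x'),
\end{align}
after which the substitution $x\to g^{-1}x$ (absorbing the shift, and using invariance of $dx$) followed by $x\to x'^{-1}g'^{-1}x$ reduces the $x$-integral to a convolution of characters; the Schur orthogonality relation $\int_G dx\,\chi_R(x)\,\chi_{R'}(x'^{-1}x)=\delta_{RR'}\chi_R(x')/\dim R$ then collapses it to $\delta_{RR'}$ times $(\dim R)\int_G dx'\,\langle\mathcal O(gg'x')[\Nu]\rangle\chi_R(x')=\delta_{RR'}\langle\mathcal F_R(gg')[\Nu]\rangle$, and the alternative ordering gives $\delta_{RR'}\langle\mathcal F_R(g'g)[\Nu]\rangle$. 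The inversion formula (\ref{aaaa}) is just the Peter–Weyl completeness $\sum_{R\in\hat G}(\dim R)\chi_R(x)=\delta_G(x)$ (delta at the identity) applied to $\int_G dx\,\mathcal O(gx)[\Nu]\,\delta_G(x)=\mathcal O(g)[\Nu]$, and (\ref{52*}) is the special case $g=\mathbf 1$ together with $\langle\mathcal O(\mathbf 1)[\Nu]\rangle=1$.

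Finally, the class-function claim: from (\ref{aa}) with $g'$ and then $g'^{-1}$ one gets $\langle\mathcal O(hgh^{-1})[\Nu]\rangle=\langle\mathcal O(g)[\Nu]\rangle$ for all $h\in G$ by taking $\Nu'=\Nu$ and using both orderings, so $\langle\mathcal O(g)[\Nu]\rangle$ depends only on the conjugacy class of $g$; the same then holds for $\langle\mathcal F_R(g)[\Nu]\rangle$ since it is obtained from $\langle\mathcal O\rangle$ by integration against a class function $\chi_R$. The main obstacle I anticipate is purely bookkeeping rather than conceptual: making the ``slide the wall by local relabelling'' argument watertight when $G$ is nonabelian, i.e. checking that the order in which one relabels the layers of links between $\Nu$ and $\Nu'$ is exactly what produces $gg'$ versus $g'g$, and that compactness of $G$ (hence existence of normalized Haar measure and validity of Peter–Weyl / Schur orthogonality) is all that is needed for the character manipulations; everything else is a direct transcription of the $SU(N)$ LGT argument with sites in place of links.
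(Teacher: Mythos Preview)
Your strategy matches the paper's---(\ref{aa}) by change of variables, (\ref{aaa}) by Haar invariance plus Schur orthogonality, (\ref{aaaa}) by Peter--Weyl, (\ref{52*}) as the special case $g=\mathbf 1$---but your argument for (\ref{aa}) contains a real error. You write that the relabelling ``converts $\mathcal O(g)[\Nu]\,\mathcal O(g')[\Nu']$ into $\mathcal O(g)[\Nu]\,\mathcal O(g')[\Nu]$, which by inspection of the explicit exponential form of $\mathcal O$ equals $\mathcal O(gg')[\Nu]$.'' Neither half is correct. From the explicit form,
\[
\mathcal O(g)[\Nu]\,\mathcal O(g')[\Nu]
=\exp\Big(\sum_{\langle y,y+\hat\mu\rangle\in\Nu}\big[A(\phi_y,g\phi_{y+\hat\mu})+A(\phi_y,g'\phi_{y+\hat\mu})-2A(\phi_y,\phi_{y+\hat\mu})\big]\Big),
\]
which is \emph{not} $\mathcal O(gg')[\Nu]$; there is no operator identity ``twists multiply on the same wall.'' What actually happens under the change of variables $\phi_x\to g'\phi_x$ on the region between $\Nu'$ and $\Nu$ is that $\Nu$ sits on the boundary of that region too, so in the very same step that removes the twist on $\Nu'$, the twist on $\Nu$ is modified from $g$ to $gg'$ (this is where your identity $A(\phi_y,g(g'\phi_{y+\hat\mu}))=A(\phi_y,(gg')\phi_{y+\hat\mu})$ enters correctly). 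The integrand of $\langle\mathcal O(g)[\Nu]\mathcal O(g')[\Nu']\rangle$ becomes directly that of $\langle\mathcal O(gg')[\Nu]\rangle$; there is no intermediate state with two separate twists stacked on a single wall.

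Relatedly, your account of how the ordering $g'g$ arises is off: moving $\Nu$ onto $\Nu'$ through the \emph{same} intermediate region still yields $gg'$ (now sitting on $\Nu'$). The ordering $g'g$ comes from relabelling on the \emph{complementary} region, i.e.\ going around the torus the other way---this is exactly the paper's observation that ``the relative position of walls can be reversed owing to the periodic boundary condition'' (fig.~\ref{torus}). With (\ref{aa}) corrected in this way, the rest of your proposal (the character convolution for (\ref{aaa}), Peter--Weyl for (\ref{aaaa}), the class-function argument) goes through and is essentially the paper's proof.
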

\begin{proof}
(\ref{aa}) is trivial, since the relative position of walls can be reversed owing to the periodic 
boundary condition (see fig. \ref{torus}). (\ref{aaa}) can be shown by exploiting 
the invariance of the Haar measure:
\begin{align}
\langle\mathcal F_R(g)\mathcal F_{R'}(g')\rangle
&=({\rm{dim\,}}R)({\rm{dim\,}}R')\left\langle\jint{G}{}dx\ \mathcal O(gx)\chi_R(x)
\jint{G}{}dy\ \mathcal O(g'y)\chi_{R'}(y)\right\rangle
\\
&=({\rm{dim\,}}R)({\rm{dim\,}}R')\jint{G}{}dx
\jint{G}{}dy\ \langle\mathcal O(gxg'y)\rangle\chi_R(x)\chi_{R'}(y)
\\
&=({\rm{dim\,}}R)({\rm{dim\,}}R')\jint{G}{}dx\ \langle\mathcal O(x)\rangle
\jint{G}{}dy\ \chi_R((gg')^{-1}xy^{-1})\chi_{R'}(y)
\\
&=\delta_{RR'}({\rm{dim\,}}R)\jint{G}{}dx\ \langle\mathcal O(x)\rangle\chi_R((gg')^{-1}x)
\\
&=\delta_{RR'}\langle\mathcal F_R(gg')\rangle.
\end{align}
Finally, 
(\ref{aaaa}) is a consequence of the Peter-Weyl theorem \cite{KO} according to which any $f\in L^2(G)$ can be represented as
\begin{e}
f(g)=\jsum{R\in\hat G}{}({\rm{dim\,}}R)\jint{G}{}dx\ f(xg)\chi_R(x).
\end{e}
(\ref{52*}) trivially follows from (\ref{aaaa}).
\end{proof}
\begin{figure}[hbtp]\hspace{20pt}
\begin{center}
\includegraphics*[width=10cm]{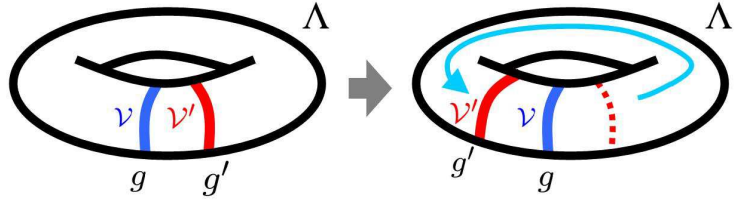}
\end{center}\vspace{-8pt}
\caption{Relative position of $\Nu$ and $\Nu'$ can be reversed by using the periodicity of $\Lambda$.}
\label{torus}
\end{figure}
Note that (\ref{aa}) cannot be proved in general when other operators are inserted, that is,
\begin{e}
\langle\mathcal O(g)[\Nu]\cdot \mathcal O(g')[\Nu']\dots\rangle\not=\langle\mathcal O(gg')[\Nu]\dots\rangle\not
=\langle\mathcal O(g'g)[\Nu]\dots\rangle,
\end{e}
in general, where $\dots$ denote additional insertions. It is because the proof of (\ref{aa}) involves 
a sequence of changes of variables. Thus (\ref{aa}) could be proved if inserted operators are 
invariant under the changes of variables.

Take an arbitrary $(d-1)$-dimensional hyperplane $\pi$ 
defined by $x^\mu=m,\ m\in\mathbb{Z}$ with $m$ and $\mu$ fixed. 
Denote by $\theta$ the reflection about $\pi$. 
\begin{lem}
If the wall $\Nu$ seen on the dual lattice is also perpendicular to the $x^\mu$-axis 
(see the left of fig. \ref{reflection_figure}), we have
\begin{gather}
\theta\Big[\mathcal O(g)[\Nu]\Big]=\mathcal O(g^{-1})[\Nu^\theta],\label{-}
\\
\langle\mathcal O(g)[\Nu]\rangle=\langle\mathcal O(g^{-1})[\Nu]\rangle,\label{--}
\\
\langle\mathcal F_R(g)[\Nu]\rangle
=\overline{\langle\mathcal F_R({{g^{-1}}})[\Nu]\rangle}\ \ {\rm{and}}\ \ 
\langle\mathcal F_R({\mathbf{1}})[\Nu]\rangle\in{\mathbf{R}},\label{---}
\\
\theta\Big[\mathcal F_R(g)[\Nu]\Big]=\mathcal F_R(g^{-1})[\Nu^\theta],\label{----}
\end{gather}
where $\Nu^\theta \equiv \theta[\Nu]$ and $\mathbf{1}$ denotes the unit element of $G$. 
\end{lem}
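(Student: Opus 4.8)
The plan is to derive the whole lemma from (\ref{-}), which is the only place where an actual computation occurs, and to obtain (\ref{--})--(\ref{----}) from it by elementary manipulations. To prove (\ref{-}) I would insert the explicit exponential form of $\mathcal O(g)[\Nu]$. Because $\Nu$ seen on the dual lattice is perpendicular to the $x^\mu$-axis, every link of $\Nu$ points in the $\hat\mu$-direction, so the site-reflection $\theta$ about $x^\mu=m$ carries the link $\langle y,y+\hat\mu\rangle$ to the link between $\theta[y]$ and $\theta[y]-\hat\mu=\theta[y+\hat\mu]$, i.e.\ to the corresponding link of $\Nu^\theta$ \emph{with reversed orientation}. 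The exponent of $\mathcal O$ is a real number, so the complex conjugation built into the definition of $\theta$ acts trivially, and on each reflected link one is left with the term $-A(\phi_{\theta[y]},\phi_{\theta[y+\hat\mu]})+A(\phi_{\theta[y]},g\phi_{\theta[y+\hat\mu]})$. Writing $z\equiv\theta[y+\hat\mu]$ and using first the symmetry $A(\phi,\phi')=A(\phi',\phi)$ and then the $G$-invariance $A(\phi,\phi')=A(h\phi,h\phi')$ with $h=g^{-1}$, this term becomes $-A(\phi_z,\phi_{z+\hat\mu})+A(\phi_z,g^{-1}\phi_{z+\hat\mu})$. Summing over the links of $\Nu$, which are in bijection with those of $\Nu^\theta$, gives $\theta[\mathcal O(g)[\Nu]]=\mathcal O(g^{-1})[\Nu^\theta]$, which is (\ref{-}).

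For (\ref{--}) I would use that the untwisted Gibbs measure is invariant under the relabelling of integration variables $\phi_x\mapsto\phi_{\theta[x]}$ (the $\nu\neq\mu$ links simply reindex, the $\mu$-links reverse orientation but $A$ is symmetric), hence $\langle\theta[F]\rangle=\overline{\langle F\rangle}$ for every observable $F$. Taking $F=\mathcal O(g)[\Nu]$ and inserting (\ref{-}) yields $\langle\mathcal O(g)[\Nu]\rangle=\overline{\langle\mathcal O(g^{-1})[\Nu^\theta]\rangle}$. Since $\langle\mathcal O(\cdot)[\cdot]\rangle$ is a ratio of manifestly positive partition functions it is real, and since the wall free energy depends only on the homology class of the wall (a slab change of variables $\phi\to g^{-1}\phi$ slides $\Nu^\theta$ back onto the homologous $\Nu$), this equals $\langle\mathcal O(g^{-1})[\Nu]\rangle$, proving (\ref{--}).

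Then (\ref{---}) is a short Haar-integral manipulation. Conjugating $\langle\mathcal F_R(g^{-1})[\Nu]\rangle=({\rm dim\,}R)\int_G dx\,\langle\mathcal O(g^{-1}x)[\Nu]\rangle\,\chi_R(x)$, using reality of $\langle\mathcal O\rangle$ and $\overline{\chi_R(x)}=\chi_R(x^{-1})$, substituting $x\mapsto x^{-1}$, and finally invoking (\ref{--}) together with the class-function property $\langle\mathcal O(ab)[\Nu]\rangle=\langle\mathcal O(ba)[\Nu]\rangle$ from the previous lemma (eq.\ (\ref{aa})), one recognizes the integral as $\langle\mathcal F_R(g)[\Nu]\rangle$; this gives the first equation of (\ref{---}), and putting $g=\mathbf 1$ gives $\langle\mathcal F_R(\mathbf 1)[\Nu]\rangle\in\mathbf R$.

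The one genuinely delicate point is (\ref{----}), and I expect it to be the main obstacle. Applying $\theta$ to $\mathcal F_R(g)[\Nu]=({\rm dim\,}R)\int_G dx\,\mathcal O(gx)[\Nu]\,\chi_R(x)$, the operator $\theta$ conjugates both the operator $\mathcal O(gx)[\Nu]$ --- which (\ref{-}) turns into $\mathcal O(x^{-1}g^{-1})[\Nu^\theta]$ --- and the scalar coefficient $\chi_R(x)$ --- which becomes $\chi_R(x^{-1})$; after the substitution $x\mapsto x^{-1}$ one is left with $({\rm dim\,}R)\int_G dx\,\mathcal O(xg^{-1})[\Nu^\theta]\,\chi_R(x)$. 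The tempting move ``$\mathcal O(xg^{-1})=\mathcal O(g^{-1}x)$ because $\mathcal O$ is a class function'' is \emph{not} legitimate here: only the \emph{expectation value} $\langle\mathcal O(\cdot)[\Nu]\rangle$ is a class function, whereas this manipulation takes place before any expectation value is taken, and $\mathcal O(g)[\Nu]$ as an operator does depend on $g$ through $A(\phi_y,g\phi_{y+\hat\mu})$ in a non-class-invariant way. The remedy is to instead substitute $x\mapsto g^{-1}xg$: conjugation preserves the Haar measure, it converts $xg^{-1}$ into $g^{-1}x$, and it leaves $\chi_R(x)$ untouched precisely because $\chi_R$ --- not $\mathcal O$ --- is a class function. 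This produces exactly $\mathcal F_R(g^{-1})[\Nu^\theta]$, establishing (\ref{----}). Thus the substance of the lemma is the symmetry-plus-invariance computation in (\ref{-}), and the only real care needed is this bookkeeping about where the class-function property may be invoked.
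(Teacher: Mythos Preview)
Your proof is correct and follows essentially the same route as the paper: both derive (\ref{-}) from the symmetry and $G$-invariance of $A$, read off (\ref{--}) from (\ref{-}) together with reflection invariance of the measure and homology independence of $\langle\mathcal O\rangle$, and obtain (\ref{---}) by the same Haar-measure manipulation. Your treatment of (\ref{----}) is in fact more careful than the paper's one-line ``obvious from (\ref{---})'': you correctly flag that only $\chi_R$ --- not the operator $\mathcal O$ --- is a class function, and your conjugation substitution $x\mapsto g^{-1}xg$ is exactly the right fix.
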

\begin{figure}[hbtp]\hspace{20pt}
\begin{center}
\includegraphics*[width=11cm]{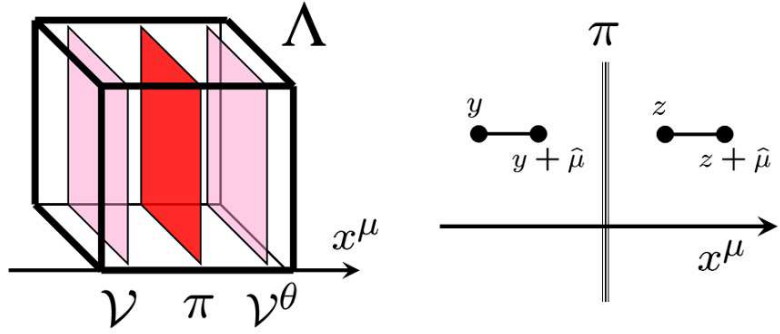}
\end{center}\vspace{-16pt}
\caption{Insertion of a reflection plane $\pi$.}
\label{reflection_figure}
\end{figure}
\begin{proof}
From the right of fig. \ref{reflection_figure} we observe
\begin{align}
\theta\Big[A(\phi_y,g\phi_{y+\hat\mu})\Big]&=A(\phi_{z+\hat\mu},g\phi_{z})
\\
&=A(\phi_z,g^{-1}\phi_{z+\hat\mu}).
\end{align}
The twist $g$ changed to $g^{-1}$, hence (\ref{-}) is proved. (\ref{--}) immediately follows from (\ref{-}).

Using (\ref{--}) we can show
\begin{align}
\jint{G}{}dx\ \langle\mathcal O(gx)[\Nu]\rangle\chi_R(x)
&=\jint{G}{}dx\ \langle\mathcal O((gx)^{-1})[\Nu]\rangle\chi_R(x)
\\
&=\jint{G}{}dx\ \langle\mathcal O(x^{-1}g^{-1})[\Nu]\rangle\chi_R(x)
\\
&=\jint{G}{}dx\ \langle\mathcal O(xg^{-1})[\Nu]\rangle\chi_R(x^{-1})
\\
&=\jint{G}{}dx\ \langle\mathcal O(g^{-1}x)[\Nu]\rangle\overline{\chi_R(x)},
\end{align}
therefore (\ref{---}) is proved. In the last step we used the fact that $R$ is a unitary representation. 
(\ref{----}) is obvious from (\ref{---}).
\end{proof}
Though it seems hard to find more properties on general grounds, further nontrivial result can be obtained 
if we exploit the site-reflection positivity of the measure of (\ref{partition_}).
\begin{lem}
\begin{gather}
\langle\mathcal O(g)[\Nu]\rangle,\ \,\langle\mathcal F_R(\mathbf{1})[\Nu]\rangle\ \in [0,1],  \label{ai}
\\
|\langle\mathcal F_R({{g}})[\Nu]\rangle|\leq \langle\mathcal F_R({\mathbf{1}})[\Nu]\rangle.   \label{ai'}
\end{gather}
\end{lem}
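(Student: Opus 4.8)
The plan is to imitate, nearly verbatim, the proofs of (\ref{a3}) and (\ref{a4}) for $SU(N)$ LGT, with the algebra of center twists replaced by the composition laws (\ref{aa}),(\ref{aaa}) and the reflection identities (\ref{-}),(\ref{----}). First fix a reflection hyperplane $\pi$ perpendicular to $x^\mu$, positioned so that $\Nu$ lies strictly on its positive side; this is possible precisely because we are in the configuration of the left panel of fig.~\ref{reflection_figure}, i.e.\ $\Nu$ winds around the $d-1$ periodic directions orthogonal to $x^\mu$, so its link-layer can be shifted off $\pi$ using the periodicity of $x^\mu$. Then $\Nu^\theta=\theta[\Nu]$ lies strictly on the negative side, is disjoint from $\Nu$, and winds around the same $d-1$ directions, hence is homologous to $\Nu$. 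Since each of $\mathcal O(g)[\Nu]$ and $\mathcal F_R(g)[\Nu]$ is a function of the link variables carried by $\Nu$, it is supported on the positive side, and the reflection-positivity Schwarz inequalities $|\langle F\rangle|\le\langle F\,\theta[F]\rangle^{1/2}$ and $|\langle F\,\theta[G]\rangle|\le\langle F\,\theta[F]\rangle^{1/2}\langle G\,\theta[G]\rangle^{1/2}$ are available.

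For (\ref{ai}): $\langle\mathcal O(g)[\Nu]\rangle\ge 0$ is immediate since $\mathcal O(g)[\Nu]$ is an exponential of a real quantity (equivalently, $Z_\Lambda(g)[\Nu]$ and $Z_\Lambda$ are integrals of positive integrands). For the upper bound,
\[
\langle\mathcal O(g)[\Nu]\rangle\le\langle\mathcal O(g)[\Nu]\,\theta[\mathcal O(g)[\Nu]]\rangle^{1/2}
=\langle\mathcal O(g)[\Nu]\,\mathcal O(g^{-1})[\Nu^\theta]\rangle^{1/2}
=\langle\mathcal O(\mathbf{1})[\Nu]\rangle^{1/2}=1 ,
\]
where the first equality uses (\ref{-}) and the second uses (\ref{aa}) for the homologous pair $\Nu,\Nu^\theta$, together with $\mathcal O(\mathbf{1})[\Nu]\equiv 1$. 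Next, by (\ref{aaa}) with $\Nu'=\Nu^\theta$ followed by (\ref{----}),
\[
\langle\mathcal F_R(\mathbf{1})[\Nu]\rangle=\langle\mathcal F_R(\mathbf{1})[\Nu]\,\mathcal F_R(\mathbf{1})[\Nu^\theta]\rangle
=\langle\mathcal F_R(\mathbf{1})[\Nu]\,\theta[\mathcal F_R(\mathbf{1})[\Nu]]\rangle\ge 0
\]
by reflection positivity (the number is real by (\ref{---})). Summing over $R\in\hat G$ and invoking (\ref{52*}) then forces $\langle\mathcal F_R(\mathbf{1})[\Nu]\rangle\le 1$ for every $R$, which completes (\ref{ai}).

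For (\ref{ai'}) I use that $\mathcal F_R(\mathbf{1})[\Nu]$ acts as a unit for the product of homologous $\mathcal F$'s in the same channel. By (\ref{aaa}) with $g'=\mathbf{1}$ and then (\ref{----}),
\[
\langle\mathcal F_R(g)[\Nu]\rangle=\langle\mathcal F_R(g)[\Nu]\,\mathcal F_R(\mathbf{1})[\Nu^\theta]\rangle
=\langle\mathcal F_R(g)[\Nu]\,\theta[\mathcal F_R(\mathbf{1})[\Nu]]\rangle .
\]
Applying the Schwarz inequality with $F=\mathcal F_R(g)[\Nu]$ and $G=\mathcal F_R(\mathbf{1})[\Nu]$, and evaluating the two diagonal factors by (\ref{----}) and then (\ref{aaa}),
\[
\langle F\,\theta[F]\rangle=\langle\mathcal F_R(g)[\Nu]\,\mathcal F_R(g^{-1})[\Nu^\theta]\rangle=\langle\mathcal F_R(\mathbf{1})[\Nu]\rangle ,\qquad
\langle G\,\theta[G]\rangle=\langle\mathcal F_R(\mathbf{1})[\Nu]\rangle ,
\]
I obtain $|\langle\mathcal F_R(g)[\Nu]\rangle|\le\langle\mathcal F_R(\mathbf{1})[\Nu]\rangle^{1/2}\langle\mathcal F_R(\mathbf{1})[\Nu]\rangle^{1/2}=\langle\mathcal F_R(\mathbf{1})[\Nu]\rangle$, which is (\ref{ai'}) (the square roots make sense by the nonnegativity just established).

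The individual steps are one-liners; the point I would watch most carefully is the geometric set-up of the first paragraph — one must be certain that a single hyperplane $\pi$ can be chosen so that $\Nu$ sits entirely on its positive side (so that reflection positivity legitimately applies to the operators living on $\Nu$) while at the same time $\Nu^\theta$ is disjoint from, yet homologous to, $\Nu$ (so that the composition laws (\ref{aa}),(\ref{aaa}) may be invoked for the pair $\Nu,\Nu^\theta$). One must also keep track of the complex conjugation hidden in $\theta$, but this causes no trouble here: $\mathcal O(g)[\Nu]$ is real-valued, and $\mathcal F_R(\mathbf{1})[\Nu]$ — though possibly complex — enters only through the manifestly reflection-positive combinations $F\,\theta[F]$. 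No estimate stronger than Cauchy--Schwarz is required.
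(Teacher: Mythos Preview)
Your proof is correct and follows essentially the same route as the paper's: Schwarz plus (\ref{-}),(\ref{aa}) for the bound on $\langle\mathcal O(g)\rangle$, (\ref{aaa})+(\ref{----})+reflection positivity for $\langle\mathcal F_R(\mathbf{1})\rangle\ge 0$, then (\ref{52*}) for the upper bound, and finally inserting a homologous $\mathcal F_R(\mathbf{1})$ and applying Schwarz for (\ref{ai'}). The only cosmetic difference is that the paper introduces an auxiliary homologous wall $\Nu'$ in the last step whereas you take $\Nu'=\Nu^\theta$ directly, and you are more explicit than the paper about the geometric placement of $\pi$ relative to $\Nu$.
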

\begin{proof}
Let $\Nu^\theta\equiv\theta[\Nu]$. With the aid of 
(\ref{aa}), (\ref{-}) and the Schwarz inequality $|\langle F\rangle|\leq\langle F\theta F\rangle^{1/2}$ we get
\begin{align}
\langle\mathcal O(g)[\Nu]\rangle&\leq\langle\mathcal O(g)[\Nu]\theta\Big[O(g)[\Nu]\Big]\rangle^{1/2}
\\
&=\langle\mathcal O(g)[\Nu]O(g^{-1})[\Nu^\theta]\rangle^{1/2}=1.
\end{align}
Next, using (\ref{aaa}) and (\ref{----}) yields
\begin{align}
\langle\mathcal F_R(\mathbf{1})[\Nu]\rangle
&=\langle\mathcal F_R(\mathbf{1})[\Nu]\cdot \mathcal F_R(\mathbf{1})[\Nu^\theta]\rangle
\\
&=\langle\mathcal F_R(\mathbf{1})[\Nu]\cdot\theta\Big[\mathcal F_R(\mathbf{1})[\Nu]\Big]\rangle\geq 0.\label{ue}
\end{align}
(\ref{ue}) combined with (\ref{52*}) yields (\ref{ai}).

Finally, to show (\ref{ai'}) we use (\ref{----}) and the Schwarz inequality 
$|\langle F\theta G\rangle|\leq\langle F\theta F\rangle^{1/2}
\langle G\theta G\rangle^{1/2}$ as follows: letting $\Nu'$ denote a wall homologous to $\Nu$, we have
\begin{align}
|\langle\mathcal F_R(g)[\Nu]\rangle|
&=|\langle\mathcal F_R(g)[\Nu]\mathcal F_R({\mathbf{1}})[\Nu']\rangle|
\\
&\leq \langle\mathcal F_R(g)[\Nu]\mathcal F_R(g^{-1})[\Nu^\theta]\rangle^{1/2}
\langle\mathcal F_R({\mathbf{1}})[\Nu']\mathcal F_R({\mathbf{1}})[\Nu'^\theta]\rangle^{1/2}
\\
&=\langle\mathcal F_R({\mathbf{1}})[\Nu]\rangle^{1/2}\langle\mathcal F_R({\mathbf{1}})[\Nu']\rangle^{1/2}
\\
&=\langle\mathcal F_R({\mathbf{1}})[\Nu]\rangle.
\end{align}
\end{proof}

\subsection{TY inequality in spin systems}

A natural counterpart in spin systems of Wilson loops in LGT 
is a two-point correlation function $\Gamma(\phi_x,\phi_y)$ as explained in the Introduction. 
$\Gamma$ will decay exponentially with a mass gap (in symmetric phase) 
while decay algebraically without a mass gap (in a spontaneous symmetry breaking phase or Kosterlitz-Thouless-type phase). 
If $\langle\mathcal O(g)\rangle$ defined above converges to 1 in the thermodynamic limit, it follows that arbitrarily 
huge domain walls grow with little cost and eventually drive the system to the disordered phase with a mass gap 
(see fig. \ref{Bloch}). $\langle\mathcal O(g)\rangle\to 1\ \,(|\Lambda|\to\infty)$ can also be regarded 
as a sign of insensitivity of the system to boundary conditions, which indicates the absence of 
massless particles.

If (as in Ising-like models) we assume that the intersection of a correlation line with a wall 
changes the sign of $\Gamma$, a small closed wall gives no contribution ($(-1)^2=1$) while 
a huge wall can give $(-1)$. If each link on the correlation line of total length $L$ 
is assumed to intersect with a wall independently with probability $p$, we obtain
\begin{e}
\langle\Gamma\rangle\sim 
\jsum{k=0}{L}\begin{pmatrix}L\\k\end{pmatrix}(-1)^kp^k(1-p)^{L-k}=(1-2p)^{L}\sim e^{-mL},\hspace{30pt}
m=-\log(1-2p).
\end{e}
\begin{figure}[hbtp]\hspace{20pt}\vspace{-15pt}
\begin{center}
\includegraphics*[width=5cm]{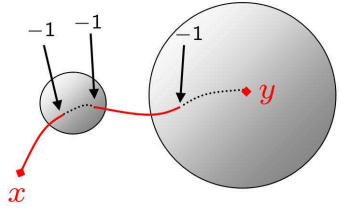}
\end{center}\vspace{-14pt}
\caption{Disorder being caused by walls.}
\label{Bloch}
\end{figure}

Our goal in this section is to elevate the above heuristic relation of disorder and a mass gap 
to a mathematically rigorous inequality.

Suppose that the explicit form of the correlation function $\Gamma$ is given by
\begin{e}
\Gamma_\mu(x;n)=\jsum{\alpha}{}f_\alpha(\phi_x)\overline{f_\alpha(\phi_{x+n\hat\mu})},\hspace{30pt}n\in\mathbf{N},
\end{e}
where $f$ is an arbitrary function from the order-parameter space to $\mathbf{C}$ 
endowed with generic indices $\{\alpha\}$. Here $f_\alpha$ is meant to specify, for example, 
the $\alpha$-th component of a vector spin in $O(N)$-like models, or the $\alpha$-th matrix element of a 
matrix spin in PCM-like models. 
Note that 
$\langle\Gamma_\mu(x;n)\rangle$ is independent of $x$ due to the translational invariance of the system. 
An important requirement on $\Gamma_\mu$ is its invariance under $G$
:
\begin{e}
\jsum{\alpha}{}f_\alpha(\phi_x)\overline{f_\alpha(\phi_{x+n\hat\mu})}
=\jsum{\alpha}{}f_\alpha(g\phi_x)\overline{f_\alpha(g\phi_{x+n\hat\mu})},\hspace{20pt}g\in G.   \label{G-i}
\end{e}
Another requirement, which is \textit{truly indispensable for all the following development}, is
\begin{e}
\jint{G}{}dg\,f_\alpha(g\phi)=0\ \ \ \ \ {\rm{for}}\ \ ^\forall\phi.        \label{technical}
\end{e}
If (\ref{technical}) were not satisfied, we should replace $f_\alpha(\phi)$ by $f'_\alpha(\phi)\equiv 
f_\alpha(\phi)-\jint{G}{}dg\,f_\alpha(g\phi)$.
\begin{thm}\label{thm}
Assume (\ref{G-i}), (\ref{technical}) and the existence of $k\in\mathbf{N}$ such that $L_\mu=2^kn$, 
with $L_\mu$ the extent of $\Lambda$ in the $x^\mu$-direction.\footnote{
It does not seem to be very restrictive; in general we believe in the existence of the limit $L_\mu\to\infty$ 
for physical observables independent of the way we take $L_\mu\to\infty$.} 
Then we have
\begin{e}
\jfrac{|\langle\Gamma_\mu(x;n)\rangle|}{\langle\Gamma_\mu(x;0)\rangle}
\leq 2
\left\{\jfrac{\langle\Gamma_\mu(x;0)^2\rangle}{\langle \Gamma_\mu(x;0)\rangle^2}\right\}^{n/L_\mu}
\left\{1-\jint{G}{}dg\,\langle\mathcal O(g)[\Nu]\rangle\right\}^{n/L_\mu},
\label{problem}
\end{e}
where $\Nu$ is a wall perpendicular to the $x^\mu$-direction
.\footnote{$0<\langle\Gamma_\mu(n)\rangle$ can be proved by the site-reflection positivity if $n$ is even.}
\end{thm}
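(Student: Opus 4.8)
The plan is to mimic the LGT argument of Theorem \ref{thm_N}, replacing the black-square operator $(1-\mathcal F^{[0]})$ by $(1-\int_G dg\,\mathcal O(g)[\Nu])$ and the Wilson-loop character by the correlation insertion $\Gamma_\mu(x;n)$. First I would fix a wall $\Nu$ perpendicular to the $x^\mu$-axis that is ``linked'' with the correlation line joining $x$ and $x+n\hat\mu$, meaning the line crosses $\Nu$ an odd number of times, and a homologous $\Nu'$ that is not linked with it. The key algebraic input is that the change of variables $\phi\to g\phi$ that slides $\Nu$ to $\Nu'$ multiplies one endpoint of $\Gamma_\mu$ by $g$; by the $G$-invariance (\ref{G-i}) this has no net effect on $\Gamma_\mu$ when the line is \emph{not} linked, but when it \emph{is} linked the sliding leaves exactly one factor $f_\alpha(g\phi)$ upstairs. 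Integrating over $g\in G$ and using (\ref{technical}), the term containing $\int_G dg\,\mathcal O(g)[\Nu]$ together with $\Gamma_\mu$ collapses: one writes
\begin{align}
\langle\Gamma_\mu(x;n)\rangle
&=\Big\langle\Big(1-\jint{G}{}dg\,\mathcal O(g)[\Nu]\Big)\Gamma_\mu(x;n)\Big\rangle
+\Big\langle\jint{G}{}dg\,\mathcal O(g)[\Nu]\,\Gamma_\mu(x;n)\Big\rangle,
\end{align}
and after sliding $\Nu\to\Nu'$ in the second term the integrand becomes $\sum_\alpha\int_G dg\,\mathcal O(g)[\Nu']\,f_\alpha(g\phi_x)\overline{f_\alpha(\phi_{x+n\hat\mu})}$, which vanishes by (\ref{technical}) because $\mathcal O(g)[\Nu']$ no longer feels the endpoint $x$. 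Hence only the first term survives:
\begin{align}
\langle\Gamma_\mu(x;n)\rangle=\Big\langle\Big(1-\jint{G}{}dg\,\mathcal O(g)[\Nu]\Big)\Gamma_\mu(x;n)\Big\rangle.
\end{align}

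Next I would run the Schwarz-doubling machinery of fig.~\ref{kaworu}. Choosing a reflection plane $\pi$ perpendicular to $\hat\mu$ and passing through the midpoint of the segment from $x$ to $x+n\hat\mu$, the correlation insertion $\Gamma_\mu(x;n)=\sum_\alpha f_\alpha(\phi_x)\overline{f_\alpha(\phi_{x+n\hat\mu})}$ factorizes through $\pi$ (for even $n$ this is a literal split at a site; for the general inductive step one splits at the halfway point), so the Schwarz inequality $|\langle F\theta G\rangle|\le\langle F\theta F\rangle^{1/2}\langle G\theta G\rangle^{1/2}$ applies. Using (\ref{-}) to send $\mathcal O(g)[\Nu]\to\mathcal O(g^{-1})[\Nu^\theta]$ (so $\int_G dg\,\mathcal O(g)[\Nu]$ is invariant under $\theta$), and $\langle F\theta F\rangle$-positivity, one halves the separation at the cost of a square root, and the ``remainder'' factor $\langle\Gamma_\mu(x;0)^2\rangle/\langle\Gamma_\mu(x;0)\rangle^2$ enters exactly as $\langle\chi_R|_{CD}\overline{\chi_R|_{CD}}\rangle/d_R$ did in the LGT proof. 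Iterating $k$ times with $L_\mu=2^kn$ wraps the line around the torus until it closes up trivially, leaving
\begin{align}
\jfrac{|\langle\Gamma_\mu(x;n)\rangle|}{\langle\Gamma_\mu(x;0)\rangle}
\leq\left\{\jfrac{\langle\Gamma_\mu(x;0)^2\rangle}{\langle\Gamma_\mu(x;0)\rangle^2}\right\}^{n/L_\mu}
\left\{1-\jint{G}{}dg\,\langle\mathcal O(g)[\Nu]\rangle\right\}^{n/L_\mu},
\end{align}
and the overall factor $2$ comes, as in (\ref{N}) vs.\ (\ref{local__}), from keeping track of both the ``linked'' and ``unlinked'' contributions rather than discarding one; alternatively it absorbs the boundary term that is not exactly zero at finite $k$.

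The step I expect to be the main obstacle is the bookkeeping in the doubling argument when $n$ is odd, i.e.\ justifying that $\Gamma_\mu(x;n)$ really does factor through a site-reflection plane at each stage so that reflection positivity is applicable — in the LGT case the Wilson loop is a product of link variables and splits cleanly, whereas here one must be careful that the ``half'' of a correlation line of odd length terminates on a site lying in $\pi$. The cleanest route is to prove the inequality first for even $n$, where $\Gamma_\mu(x;n)$ splits at the midpoint site, and then to note (as in the footnote) that the restriction $L_\mu=2^k n$ together with the believed independence of the thermodynamic limit on the approach lets one pass to general $n$; this is also where the hypothesis $0<\langle\Gamma_\mu(x;0)\rangle$ and the even-$n$ positivity of $\langle\Gamma_\mu(x;n)\rangle$ are used to make the ratios and square roots well-defined. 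A secondary point to handle carefully is the claim that sliding $\Nu$ past the correlation line multiplies precisely one endpoint by $g$ (and with which power): this follows because the line crosses $\Nu$ an odd number of times, exactly mirroring the ``one link on $C$ must change'' observation in the proof of Theorem \ref{thm_N}.
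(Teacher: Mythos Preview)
Your central claim---that after sliding $\Nu\to\Nu'$ the second term
\[
\sum_\alpha\int_G dg\,\big\langle \mathcal O(g)[\Nu']\,f_\alpha(g\phi_x)\,\overline{f_\alpha(\phi_{x+n\hat\mu})}\big\rangle
\]
vanishes by (\ref{technical})---is wrong. The operator $\mathcal O(g)[\Nu']$ still depends on $g$, so you cannot pull it outside the $g$-integral and invoke $\int_G dg\,f_\alpha(g\phi)=0$. The fact that $\Nu'$ ``no longer feels the endpoint $x$'' in the field variables is irrelevant; what matters is the $g$-dependence. In the paper this term is \emph{not} zero: the trick (which the author calls ``the most nontrivial operation in this proof'') is that one may \emph{subtract} the $g$-independent quantity $\mathcal F_T(\mathbf 1)[\Nu']=\int_G dh\,\mathcal O(h)[\Nu']$ for free---that subtraction does vanish by (\ref{technical})---and then bound the difference $\mathcal O(g)[\Nu']-\mathcal F_T(\mathbf 1)[\Nu']$ via Schwarz. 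After one doubling step this second term is shown to be bounded by the same expression as the first term, and \emph{that} is where the factor $2$ comes from. Your explanation of the $2$ is therefore also off.

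A second issue: the reflection plane is placed through the \emph{endpoint} $y=x+n\hat\mu$, not the midpoint. This doubles the separation ($n\to 2n$) rather than halving it, and since the plane always sits on a lattice site there is no even/odd obstruction at all---the worry you flag as ``the main obstacle'' simply does not arise. The iteration runs $n\to 2n\to\cdots\to L_\mu/2$, and a final reflection through \emph{both} endpoints (which lie diametrically on the torus) produces the factor $\langle\Gamma_\mu(x;0)^2\rangle^{1/2}$ via the auxiliary estimate $\langle\Gamma_\mu(x;L_\mu/2)\,\theta[\Gamma_\mu(x;L_\mu/2)]\rangle\le\langle\Gamma_\mu(x;0)^2\rangle$.
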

\begin{proof}
We first decompose the correlation function into two parts:
\begin{e}
\langle\Gamma_\mu(x;n)\rangle=\langle\Gamma_\mu(x;n)(1-\mathcal F_T(\mathbf{1})[\Nu])\rangle+
\langle\Gamma_\mu(x;n)\mathcal F_T(\mathbf{1})[\Nu]\rangle,   \label{-84-}
\end{e}
where $T$ denotes the trivial representation, i.e. $\mathcal F_T(\mathbf{1})[\Nu]=\jint{G}{}dg\,\mathcal O(g)[\Nu]$. 
Our basic idea here is to apply the Schwarz inequality
\begin{e}
|\langle F\theta G\rangle|\leq \langle F\theta F\rangle^{1/2}\langle G\theta G\rangle^{1/2}   \label{SI}
\end{e}
to each term of (\ref{-84-}). 
The procedure afterward is represented graphically in fig. \ref{RPRP} 
where $(1-\mathcal F_T(\mathbf{1})[\Nu])$ is indicated by red segments and a blue line is drawn to guide the eye. 
Let $y\equiv x+n\hat\mu$ and $z\equiv x+2n\hat\mu$.
Consider a reflection $\theta$ about $\pi$ (a hyperplane perpendicular to $\hat\mu$ and lying at $y$ ). 
Using (\ref{aaa}), (\ref{----}), (\ref{G-i}) and (\ref{SI}) we get 
\begin{align}
&\jfrac{|\langle\Gamma_\mu(x;n)(1-\mathcal{F}_T(\mathbf{1})[\Nu])\rangle|}{\langle\Gamma_\mu(x;0)\rangle}
\\
=\ &\jfrac{1}{\langle \Gamma_\mu(x;0)\rangle}\Big|\jsum{\alpha}{}\langle f_\alpha(\phi_x)\overline{f_\alpha(\phi_{y})}
(1-\mathcal{F}_T(\mathbf{1})[\Nu])\rangle\Big|
\\
\leq\ &\jfrac{1}{\langle \Gamma_\mu(x;0)\rangle}\jsum{\alpha}{}\Big
\langle f_\alpha(\phi_x)(1-\mathcal{F}_T(\mathbf{1})[\Nu])\cdot 
\theta\Big[ f_\alpha(\phi_x)(1-\mathcal{F}_T(\mathbf{1})[\Nu])\Big]\Big\rangle^{1/2}
\Big\langle \overline{f_\alpha(\phi_{y})}
\theta\Big[\overline{f_\alpha(\phi_{y})}\Big]\Big\rangle^{1/2}
\\
=\ &\jfrac{1}{\langle \Gamma_\mu(x;0)\rangle}\jsum{\alpha}{}
\big\langle f_\alpha(\phi_x)(1-\mathcal{F}_T(\mathbf{1})[\Nu])
\overline{f_\alpha(\phi_z)}(1-\mathcal{F}_T(\mathbf{1})[\Nu^\theta])\big\rangle^{1/2}
\big\langle \overline{f_\alpha(\phi_{y})}f_\alpha(\phi_{y})\big\rangle^{1/2}
\\
\leq\ &\jfrac{1}{\langle \Gamma_\mu(x;0)\rangle}
\Big\langle \jsum{\alpha}{}f_\alpha(\phi_x)\overline{f_\alpha(\phi_z)}(1-\mathcal{F}_T(\mathbf{1})[\Nu])\Big\rangle^{1/2}
\Big\langle \jsum{\alpha}{}\overline{f_\alpha(\phi_{y})}f_\alpha(\phi_{y})\Big\rangle^{1/2}
\\
=\ &\left\{\jfrac{\langle\Gamma_\mu(x;2n)(1-\mathcal{F}_T(\mathbf{1})[\Nu])\rangle}
{\langle \Gamma_\mu(x;0)\rangle}\right\}^{1/2}.
\end{align}
\begin{figure}[hbtp]
\begin{center}
\includegraphics*[width=10cm]{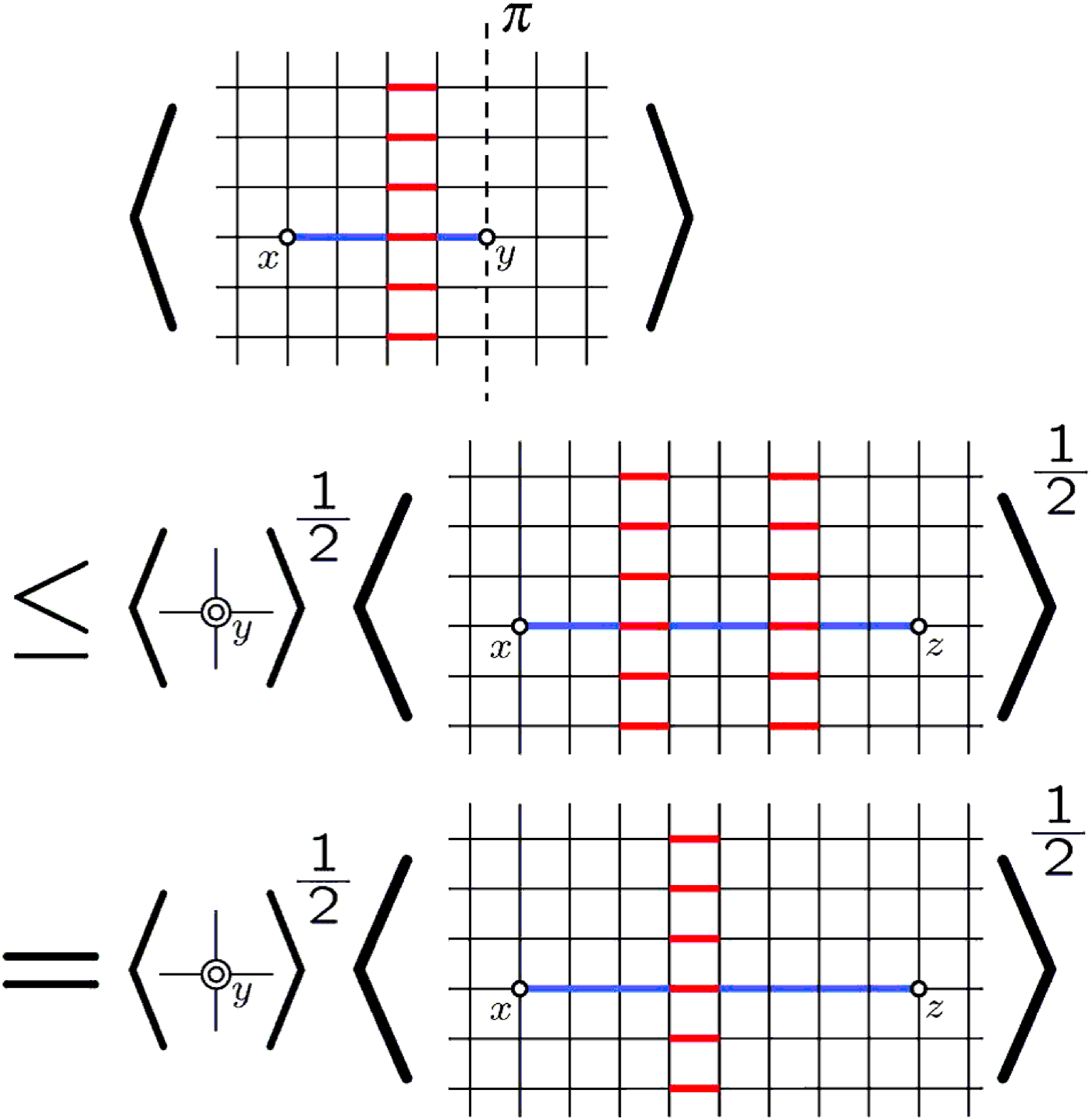}
\end{center}\vspace{-15pt}
\caption{Schwarz inequality enables us to double the distance of points in the correlation function.}
\label{RPRP}
\end{figure}

Iterating this procedure for $(k-1)$ times yields (note $L_\mu=2^kn$)
\begin{e}
\jfrac{|\langle\Gamma_\mu(n)(1-\mathcal{F}_T(\mathbf{1})[\Nu])\rangle|}{\langle\Gamma_\mu(0)\rangle}
\leq \left\{\jfrac{\langle\Gamma_\mu(L_\mu/2)(1-\mathcal{F}_T(\mathbf{1})[\Nu])\rangle}
{\langle \Gamma_\mu(0)\rangle}\right\}^{2n/L_\mu}.
\end{e}
\begin{figure}[hbtp]
\begin{center}
\includegraphics*[width=9.0cm]{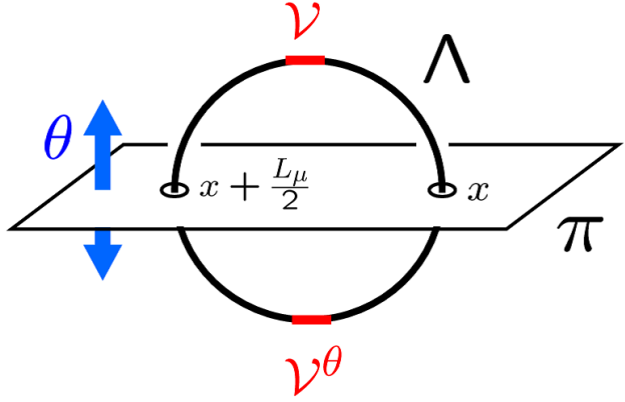}
\end{center}\vspace{-15pt}
\caption{$\theta$ is a reflection about $\pi$ that bisects $\Lambda$. Here $\Lambda$ is shown as a 
one-dimensional chain for simplicity.}
\label{1D_reflection}
\end{figure}

Finally, let us define the reflection $\theta$ w.r.t. the hyperplane $\pi$ which runs through $x$ and $x+(L_\mu/2)\hat\mu$ 
(thus $\theta[x]=x$ and $\theta[x+(L_\mu/2)\hat\mu]=x+(L_\mu/2)\hat\mu$, see fig.\ref{1D_reflection}). Then we obtain
\begin{align}
&\jfrac{\langle\Gamma_\mu(x;L_\mu/2)(1-\mathcal{F}_T(\mathbf{1})[\Nu])\rangle}
{\langle \Gamma_\mu(x;0)\rangle}
\\
\leq\ &\jfrac{
\langle \Gamma_\mu(x;L_\mu/2)\theta[\Gamma_\mu(x;L_\mu/2)]\rangle^{1/2}
\langle(1-\mathcal{F}_T(\mathbf{1})[\Nu])(1-\mathcal{F}_T(\mathbf{1})[\Nu^\theta])\rangle^{1/2}
}{\langle\Gamma_\mu(x;0)\rangle}
\\
\leq\ &\jfrac{\langle \Gamma_\mu(x;0)^2\rangle^{1/2}
\left\{1-\langle \mathcal{F}_T(\mathbf{1})[\Nu]\rangle\right\}^{1/2}
}{\langle\Gamma_\mu(x;0)\rangle}
\left(=\left\{\jfrac{\langle\Gamma_\mu(x;0)^2\rangle}{\langle \Gamma_\mu(x;0)\rangle^2}\right\}^{1/2}
\left\{1-\langle \mathcal{F}_T(\mathbf{1})[\Nu]\rangle\right\}^{1/2}\right),    \label{shie}
\end{align}
thus
\begin{e}
0\leq\jfrac{\langle\Gamma_\mu(x;n)(1-\mathcal{F}_T(\mathbf{1})[\Nu])\rangle}{\langle\Gamma_\mu(x;0)\rangle}
\leq \left\{\jfrac{\langle\Gamma_\mu(x;0)^2\rangle}{\langle \Gamma_\mu(x;0)\rangle^2}\right\}^{n/L_\mu}
\left\{1-\langle \mathcal{F}_T(\mathbf{1})[\Nu]\rangle\right\}^{n/L_\mu}.
\label{rei}
\end{e}
In deriving (\ref{shie}) we used $\langle \Gamma_\mu(x;L_\mu/2)\theta[\Gamma_\mu(x;L_\mu/2)]\rangle
\leq \langle \Gamma_\mu(x;0)^2\rangle$. This can be shown as follows:
\begin{align}
&\langle \Gamma_\mu(x;L_\mu/2)\theta[\Gamma_\mu(x;L_\mu/2)]\rangle
\\
=\ &\jsum{\alpha,\beta}{}
\Big\langle f_\alpha(\phi_x)\overline{f_\alpha(\phi_{x+(L_\mu/2)\hat\mu})}
\cdot \theta\Big[f_\beta(\phi_x)\overline{f_\beta(\phi_{x+(L_\mu/2)\hat\mu})}\Big]\Big\rangle
\\
=\ &\jsum{\alpha,\beta}{}\big\langle f_\alpha(\phi_x)\overline{f_\alpha(\phi_{x+(L_\mu/2)\hat\mu})}\ 
\overline{f_\beta(\phi_x)}f_\beta(\phi_{x+(L_\mu/2)\hat\mu})\big\rangle
\\
\leq\ &\jsum{\alpha,\beta}{}\Big\langle f_\alpha(\phi_x)\overline{f_\beta(\phi_x)}
\cdot\theta\Big[f_\alpha(\phi_x)\overline{f_\beta(\phi_x)}\Big]\Big\rangle^{1/2}
\Big\langle\overline{f_\alpha(\phi_{x+(L_\mu/2)\hat\mu})}f_\beta(\phi_{x+(L_\mu/2)\hat\mu})
\cdot\theta\Big[\overline{f_\alpha(\phi_{x+(L_\mu/2)\hat\mu})}f_\beta(\phi_{x+(L_\mu/2)\hat\mu})\Big]\Big\rangle^{1/2}
\\
=\ &\jsum{\alpha,\beta}{}\langle|f_\alpha(\phi_x)|^2|f_\beta(\phi_x)|^2\rangle^{1/2}
\langle|f_\alpha(\phi_{x+(L_\mu/2)\hat\mu})|^2|f_\beta(\phi_{x+(L_\mu/2)\hat\mu})|^2\rangle^{1/2}
\\
\leq\ &\Big\langle\jsum{\alpha,\beta}{}|f_\alpha(\phi_x)|^2|f_\beta(\phi_x)|^2\Big\rangle^{1/2}
\Big\langle\jsum{\alpha,\beta}{}|f_\alpha(\phi_{x+(L_\mu/2)\hat\mu})|^2|f_\beta(\phi_{x+(L_\mu/2)\hat\mu})|^2
\Big\rangle^{1/2}
\\
=\ &\langle\Gamma_\mu(x;0)^2\rangle^{1/2}\langle\Gamma_\mu(x+(L_\mu/2);0)^2\rangle^{1/2}=\langle\Gamma_\mu(x;0)^2\rangle.
\end{align}

Next we have to estimate 
$\jfrac{\langle\Gamma_\mu(x;n)\mathcal{F}_T(\mathbf{1})[\Nu]\rangle}{\langle\Gamma_\mu(x;0)\rangle}$. 
The outline is similar to the previous case, but additional intricacies occur.
\begin{align}
0\leq\ &\langle\Gamma_\mu(x;n)\mathcal{F}_T(\mathbf{1})[\Nu]\rangle
\\
=\ &\jsum{\alpha}{}\jint{G}{}dg\,\langle f_\alpha(\phi_x)
\overline{f_\alpha(\phi_{y})}\mathcal{O}(g)[\Nu]\rangle
\intertext{Let us consider another wall $\Nu'$ that is parallel to $\Nu$ and bookends $x$ with $\Nu$ 
(see fig.\ref{move_nu}). Moving $\Nu$ to $\Nu'$ passing over $\phi_x$ and using (\ref{technical}), we get}
=\ &\jsum{\alpha}{}\jint{G}{}dg\,\langle f_\alpha(g\phi_x)
\overline{f_\alpha(\phi_{y})}\mathcal{O}(g)[\Nu']\rangle
\label{hingis-0}
\\
=\ &\jsum{\alpha}{}\jint{G}{}dg\,\Big\langle f_\alpha(g\phi_x)
\overline{f_\alpha(\phi_{y})}\Big(\mathcal{O}(g)[\Nu']-\mathcal{F}_T(\mathbf{1})[\Nu']\Big)\Big\rangle.
\label{hingis}
\end{align}
This step (from (\ref{hingis-0}) to (\ref{hingis})) is the most nontrivial operation in this proof.
\begin{figure}[hbtp]
\begin{center}
\includegraphics*[width=6.0cm]{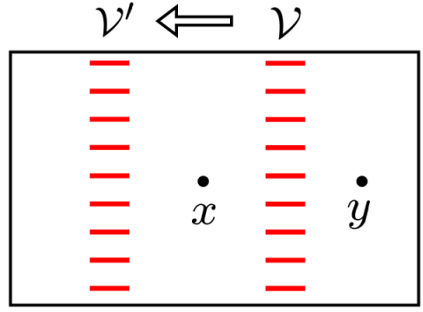}
\end{center}\vspace{-15pt}
\caption{By redefinition of variables, $\Nu$ is moved to $\Nu'$, passing over the site $x$.}
\label{move_nu}
\end{figure}
Using (\ref{SI}) w.r.t. the hyperplane which runs through $y$, with $z=\theta[x]$, yields
\begin{align}
=\ &\jsum{\alpha}{}\jint{G}{}dg\,\Big\langle f_\alpha(g\phi_x)
\Big(\mathcal{O}(g)[\Nu']-\mathcal{F}_T(\mathbf{1})[\Nu']\Big)
\cdot \theta\Big[f_\alpha(g\phi_x)
\Big(\mathcal{O}(g)[\Nu']-\mathcal{F}_T(\mathbf{1})[\Nu']\Big)\Big]
\Big\rangle^{1/2}\big\langle \overline{f_\alpha(\phi_{y})}f_\alpha(\phi_{y})\big\rangle^{1/2}
\\
=\ &\jsum{\alpha}{}\jint{G}{}dg\,\Big\langle f_\alpha(g\phi_x)
\Big(\mathcal{O}(g)[\Nu']-\mathcal{F}_T(\mathbf{1})[\Nu']\Big)
\overline{f_\alpha(g\phi_z)}\Big(\mathcal{O}(g^{-1})[{\Nu'}^{\theta}]-\mathcal{F}_T(\mathbf{1})[{\Nu'}^\theta]\Big)
\Big\rangle^{1/2}\big\langle \overline{f_\alpha(\phi_{y})}f_\alpha(\phi_{y})\big\rangle^{1/2}
\\
=\ &\jsum{\alpha}{}\jint{G}{}dg\,\Big\langle f_\alpha(g\phi_x)\overline{f_\alpha(g\phi_z)}
\Big(1-\mathcal O(g)[\Nu']\mathcal{F}_T(\mathbf{1})[{\Nu'}^\theta]
-\mathcal O(g^{-1})[{\Nu'}^\theta]\mathcal{F}_T(\mathbf{1})[\Nu']+\mathcal{F}_T(\mathbf{1})[\Nu']\Big)
\Big\rangle^{1/2}\notag
\\
&\hspace{30pt}\times\big\langle \overline{f_\alpha(\phi_{y})}f_\alpha(\phi_{y})\big\rangle^{1/2}
\\
\leq \ &\langle \Gamma_\mu(y;0)\rangle^{1/2}\jint{G}{}dg\,\Big\langle \Gamma_\mu(x;2n)
\Big(1-\mathcal O(g)[\Nu']\mathcal{F}_T(\mathbf{1})[{\Nu'}^\theta]
-\mathcal O(g^{-1})[{\Nu'}^\theta]\mathcal{F}_T(\mathbf{1})[\Nu']+\mathcal{F}_T(\mathbf{1})[\Nu']\Big)
\Big\rangle^{1/2}
\\
\leq \ &\langle \Gamma_\mu(x;0)\rangle^{1/2}\Big[\jint{G}{}dg\,
\Big\langle \Gamma_\mu(x;2n)
\Big(1-\mathcal O(g)[\Nu']\mathcal{F}_T(\mathbf{1})[{\Nu'}^\theta]
-\mathcal O(g^{-1})[{\Nu'}^\theta]\mathcal{F}_T(\mathbf{1})[\Nu']+\mathcal{F}_T(\mathbf{1})[\Nu']\Big)
\Big\rangle\Big]^{1/2}
\\
=\ &\langle \Gamma_\mu(x;0)\rangle^{1/2}\Big[\langle \Gamma_\mu(x;2n)\rangle
-\langle\Gamma_\mu(x;2n)\mathcal{F}_T(\mathbf{1})[\Nu']\mathcal{F}_T(\mathbf{1})[{\Nu'}^\theta]\rangle
-\langle\Gamma_\mu(x;2n)\mathcal F_T(\mathbf{1})[{\Nu'}^\theta]\mathcal{F}_T(\mathbf{1})[\Nu']\rangle
\notag
\\
&\hspace{50pt}
+\langle\Gamma_\mu(x;2n)\mathcal{F}_T(\mathbf{1})[\Nu']\rangle\Big]^{1/2}
\\
=\ &\langle \Gamma_\mu(x;0)\rangle^{1/2}\langle \Gamma_\mu(x;2n)(1-\mathcal{F}_T(\mathbf{1})[\Nu'])\rangle^{1/2},
\end{align}
\begin{e}\therefore \ \ 
\jfrac{\langle\Gamma_\mu(x;n)\mathcal{F}_T(\mathbf{1})[\Nu]\rangle}{\langle \Gamma_\mu(x;0)\rangle}
\leq
\left\{\jfrac{\langle \Gamma_\mu(x;2n)(1-\mathcal{F}_T(\mathbf{1})[\Nu'])\rangle}{\langle \Gamma_\mu(x;0)\rangle}\right\}^{1/2}.
\end{e}
Combining this result with (\ref{rei}) yields
\begin{e}
\jfrac{\langle\Gamma_\mu(x;n)\mathcal{F}_T(\mathbf{1})[\Nu]\rangle}{\langle \Gamma_\mu(x;0)\rangle}
\leq
\left\{\jfrac{\langle\Gamma_\mu(x;0)^2\rangle}{\langle \Gamma_\mu(x;0)\rangle^2}\right\}^{n/L_\mu}
\left\{1-\langle \mathcal{F}_T(\mathbf{1})[\Nu']\rangle\right\}^{n/L_\mu}.
\label{asuka}
\end{e}
(\ref{rei}) and (\ref{asuka}) lead to (\ref{problem}).
\end{proof}
Here are a few comments:
\begin{itemize}
\item 
$\langle\mathcal O(g)\rangle$ is a class function on $G$. It implies that 
when $G$ is a compact connected Lie group we can rewrite the integral in the r.h.s. of (\ref{problem}) 
as an integral over the maximal torus of $G$ with a proper Jacobian, using the 
Weyl's integration formula \cite{KO}. 
\item 
According to (\ref{problem}), \textit{the exponential decay of the wall free energy, i.e.}
\begin{e}\hspace{40pt}
\langle \mathcal O(g)[\Nu]\rangle=1-O(\ee^{-\rho(g) L_\mu}) {\hspace{20pt}\rm{with}\ \ }
\rho(g)>0,\ \ \ ^\forall g\in G, 
\end{e}
\textit{is a sufficient condition for the existence of a mass gap}. Indeed if we define $\overline{\rho}[G]\equiv 
\displaystyle\min_{g\in G}\,\rho(g)$ and take the limit $L_\mu\to\infty$, the r.h.s. of (\ref{problem}) 
converges to $O(1)\times \ee^{-\overline{\rho}[G] n}$, thus $\overline{\rho}[G]$ 
gives a lower bound for the mass gap. It is 
also obvious that $\overline{\rho}[G]\leq \overline{\rho}[H]$ follows if $H\subset G$. 
$\langle \mathcal O(g)[\Nu]\rangle$ would be calculable by, for example, 
Monte Carlo Simulations and weak-coupling-, strong-coupling-, $1/d$- and $1/N$-expansions. 

\item 
We proved the theorem on a square lattice, but it can be easily generalized to other lattices such as a triangular 
lattice.

\item 
Note that theorem \ref{thm} is derived with no knowledge of the interaction except for the reflection positivity. 
The strength of interaction can be made anisotropic, since it respects reflection positivity. 
And it is also correct in arbitrary dimensions.
\end{itemize}

\subsection{Examples}
There are many classes of lattice systems to which our theorem is applicable. One is the class of 
\textit{coset models}, where the field takes values in $G/H$ with $G$ an arbitrary Lie group and $H$ its closed subgroup. 
We present three explicit examples.
\begin{exa}
The $O(N)$ Heisenberg model.
\end{exa}
The partition function is given by
\begin{e}\hspace{60pt}
Z=\jint{}{}\prod_{y\in\Lambda}d\vec\phi_y\ 
\exp\Big(\beta\jsum{x,\mu}{}\vec\phi_x\cdot\vec\phi_{x+\hat\mu}\Big),\hspace{20pt}\vec \phi\in S^{N-1}
=O(N)/O(N-1).
\end{e}
The most natural definition of a correlation function is
\begin{e}
\Gamma_\mu(x;n)=\vec \phi_x\cdot \vec\phi_{x+n\hat\mu}.
\end{e}
In this case $Z_2$ is obviously the smallest possible invariance group; of course, $O(N)$ itself can also be chosen. 
One can check that the requirements (\ref{G-i}), (\ref{technical}) are met for both of them. 
If one is going to choose other arbitrary subgroup of $O(N)$, the correlation function should be redefined properly 
as explained below (\ref{technical}). 
Since the length of spin is normalized, the prefactor in (\ref{problem}) becomes 1.
\begin{exa}
The $CP^{N-1}$ model.
\end{exa}
The partition function is given by\footnote{See also ref.\cite{BPRW}.}
\begin{e}
Z=\jint{}{}\prod_{y\in\Lambda}dP_y\ \exp\Big(\beta\jsum{x,\mu}{}\mathrm{Tr}\,[P_xP_{x+\hat\mu}]\Big),\hspace{20pt}P\in 
CP^{N-1}=U(N)/(U(1)\times U(N-1)).
\end{e}
$P$ is an $N\times N$ matrix and obeys $P_x^2=P_x,\ P_x^\dagger=P_x,\ \mathrm{Tr}\,P_x=1$. (An alternative way is to express 
$P_{ij}=z_i^*z_j$, where $\vec{z}\in \mathbb{C}^N$ and $|\vec z|^2=1$.) The action is invariant under the global 
transformation $P_x\to \Omega P_x\Omega^\dagger$ with $\Omega\in SU(N)$.\footnote{The \textit{true} symmetry 
is $SU(N)/Z_N$, since $\Omega\in Z_N$ does not change $P_x$ at all.} In two dimension, the model is believed to possess a 
nonperturbatively generated mass gap for which, however, no rigorous result is available. 
Defining an appropriate correlation function needs some care; we define
\begin{e}
\Gamma_\mu(x;n)=\mathrm{Tr\,}\left\{[P_x-\jfrac{\mathbf{1}_{N}}{N}][P_{x+n\hat\mu}-\jfrac{\mathbf{1}_{N}}{N}]\right\}
=\mathrm{Tr\,}[P_xP_{x+n\hat\mu}]-\jfrac{1}{N},
\end{e}
where $\mathbf{1}_N$ denotes the unit matrix of size $N\times N$. Then it is easy to confirm (\ref{technical}) for $G=SU(N)$:
\begin{align}
\jint{SU(N)}{}d\Omega\,\Big[\Omega P\Omega^\dagger-\jfrac{\mathbf{1}_N}{N}\Big]_{ij}
&=P_{kl}\jint{SU(N)}{}d\Omega\ \Omega_{ik}{\Omega^\dagger}_{lj}-\jfrac{\delta_{ij}}{N}
\\
&=P_{kl}\jfrac{\delta_{ij}\delta_{kl}}{N}-\jfrac{\delta_{ij}}{N}=0.
\end{align}
%
%
%
\begin{exa}
$G\times G$ principal chiral model (PCM).
\end{exa}
Here $G$ is an arbitrary compact group. The partition function is given by
\begin{e}
Z=\jint{}{}\prod_{y\in\Lambda}dU_y\ \exp\left(\beta
\jsum{x,\mu\subset \Lambda}{}
{\mathrm{Re\,Tr\,}}[U_x(U_{x+\hat\mu})^{-1}]\right),\hspace{20pt}U_x\in G.
\end{e}
This model is invariant under a global $G\times G$ transformation $U\to g_LUg_R^{-1}$. 
The corresponding twisted partition function is given, in agreement with the original definition (\ref{twisted_pt}), 
by ($\Nu$\,:\,wall)
\begin{e}
Z(g)=\jint{}{}\prod_{y\in\Lambda}dU_y\ \exp\left(\beta
\jsum{x,\mu\subset \Lambda\setminus\Nu}{}{\mathrm{Re\,Tr\,}}[U_x(U_{x+\hat\mu})^{-1}]
+\beta
\jsum{x,\mu\subset \Nu}{}{\mathrm{Re\,Tr\,}}[U_x(gU_{x+\hat\mu})^{-1}]\right),\hspace{20pt}g\in G.
\end{e}
The correlation function is defined as
\begin{e}
\Gamma_\mu(x;n)=\chi_R(U_xU^{-1}_{x+n\hat\mu}),
\end{e}
where $R$ denotes an irreducible unitary representation of $G$. If we choose $G$ itself, then (\ref{technical}) is satisfied 
if and only if $R$ is a nontrivial representation. This is a direct consequence of the Schur orthogonality relation in 
representation theory; see Corollary 4.10 in ref.\cite{KO}. 
If $G=SU(N)$ and we choose $Z_N$, then (\ref{technical}) is satisfied iff 
$R$ has a nonzero $N$-ality. Anyway the prefactor in (\ref{problem}) becomes 1. 
We emphasize that $G$ need not have a nontrivial center.

As an example other than coset models, we only present
\begin{exa}
$SU(N_f)\times SU(N_f)$ linear sigma model.
\end{exa}
Usually this model is used to study the chiral phase transition \cite{PW}. 
On the lattice, the partition function is given by
\begin{gather}
\hspace{-10pt}
Z=\jint{}{}\prod_{y\in\Lambda}d\Phi_y\ \exp\Big(
\beta_1\jsum{x,\mu}{}
{\mathrm{Re\,Tr\,}}(\Phi_x\Phi^\dagger_{x+\hat\mu})-\beta_2\jsum{x\in\Lambda}{}{\mathrm{Tr\,}}(\Phi_x^\dagger\Phi_x)
-\lambda_1\jsum{x\in\Lambda}{}\big[{\mathrm{Tr\,}}(\Phi_x^\dagger\Phi_x)\big]^2
-\lambda_2\jsum{x\in\Lambda}{}{\mathrm{Tr\,}}(\Phi_x^\dagger\Phi_x)^2     \Big),\\
\hspace{20pt}\Phi_x\in {\mathrm{M}}(N_f,\mathbb{C}).
\end{gather}
This model is invariant under $\Phi\to \ee^{i\alpha}g_L\Phi g^\dagger_R$ with 
$\ee^{i\alpha}\in U_A(1)$ and $g_L,g_R\in SU(N_f)$.

It might be the case that \textit{whether the upper bound (\ref{problem}) gives an exponential decay or not 
depends on the choice of $G$}, a point worth further study.
\vspace{5pt}
\\
\hfil\hspace{80pt}*\hspace{80pt}*\hspace{80pt}*\hspace{80pt}*\hspace{80pt}

Although we proved theorem \ref{thm} only in the case of the nearest-neighbor interaction, 
\textbf{we can prove it even in the presence of non-nearest-neighbor and multi-site interactions if some appropriate 
conditions are satisfied.} To make the argument concrete, let us consider the $SU(N)\times SU(N)$ PCM and suppose 
that $Z_N\subset SU(N)$ was chosen as a symmetry group for theorem \ref{thm}. 
There are two crucial conditions: one is site-reflection positivity and the other is 
that twists and their algebra (see \ref{Basic formulation}) should remain well-defined. 
Here is a partial list of possible extensions (on a square lattice):
\begin{enumerate}
\item The multi-site interaction term between four variables on the same plaquette,
\begin{e}
\mathrm{Re\,}\chi_R(U_xU^\dagger_{x+\hat\mu} U_{x+\hat\mu+\hat\nu}U^\dagger_{x+\hat\nu}),
\end{e}
can be added to the action without spoiling theorem \ref{thm} if the $N$-ality of $R$ is 0.
\item The non-nearest-neighbor interaction term $\mathrm{Re\,}\chi_R(U_xU^\dagger_{x+2\hat\mu})$ 
can be added to the action if the $N$-ality of $R$ is 0 and the coefficient in front of it is positive. 
(If the distance is larger than two lattice spacings, or if the coefficient is negative, then 
the site-reflection positivity becomes hard to prove.)
\item The non-nearest-neighbor interaction term $\mathrm{Re\,}\chi_R(U_xU^\dagger_{x+\hat\mu+\hat\nu})$, $\mu\neq\nu$, 
can be added to the action if the $N$-ality of $R$ is 0.
\end{enumerate}
\hfil\hspace{80pt}*\hspace{80pt}*\hspace{80pt}*\hspace{80pt}*\hspace{80pt}

%
%
%
Note that our proved inequality may fail to be useful in phases other than the disordered phase, 
even though it is correct in any phases. 
Consider a spin system with a global symmetry group $G$ 
in a $d$-dimensional box of size $L_1\times \dots\times L_d$ 
whose boundary condition is twisted by $g\in G$ in the $x^1$-direction and otherwise periodic. 
Let $Z^g$ denote the twisted partition function and set 
$\displaystyle L_\perp\equiv \prod_{k=2}^{d}L_k$. 
Based on our experience in gauge theories, we generally expect following behaviors of $Z^g/Z$ 
in various phases:
\begin{e}
\begin{array}{lr}
Z^{g}/Z\approx \exp(-xL_\perp\exp(-y L_1))&\textrm{(Disordered phase)}
\\
Z^{g}/Z\approx \exp(-zL_\perp)&\textrm{(Ordered phase)}
\\
Z^{g}/Z\approx \exp(-wL_\perp/L_1)&\textrm{(Massless phase)}
\end{array}
\label{anticipation}
\end{e}
\label{anticipation_}
where $x,y,z$ and $w$ are functions of the coupling constants and the choice of $g$. 
Letting $\alpha$ denote the value of $(1-Z^{g}/Z)^{1/L_1}$ in the thermodynamic limit, 
we have $\alpha=\ee^{-y}$ in the disordered phase, while $\alpha=1$ in the other cases 
so that the r.h.s. of (\ref{problem}) tends to a constant independent of $n$. 
So the bottom line is that algebraic decay of correlation function cannot be inferred 
from the behavior of the r.h.s. in general.

Finally we comment on the formal 
difference between the inequality derived by ${\rm{Kov\acute{a}cs}}$ and Tomboulis 
in refs.\cite{spin-KT,spin-K} and ours in the two-dimensional $SU(2)\times SU(2)$ PCM. Their result is
\begin{e}
\langle\Gamma_\mu(x;n)\rangle\Big|_{L_\mu=\infty}
\leq\jfrac{Z_n(+,+)-Z_n(-,-)}{Z_n(+,+)+Z_n(+,-)+Z_n(-,+)+Z_n(-,-)},
\end{e}
where $Z_n(\tau_1,\tau_2)\ (\tau_{1,2}=\pm 1)$ is the partition function on the lattice of size $n\times n$ with 
a twist $\tau_\mu$ operated in the $x^\mu$-direction.

On the other hand, our result ((\ref{problem}) with $G=Z_2$) gives
\begin{align}
\langle\Gamma_\mu(x;n)\rangle&\leq 2\left\{1-\langle\mathcal F_0[\Nu]\rangle\right\}^{n/L_\mu}
\\
&= 2\left\{\jfrac{1}{2}\Bigg[\jfrac{Z_{L_\mu}(+,+)-Z_{L_\mu}(+,-)}{Z_{L_\mu}(+,+)}\Bigg]\right\}^{n/L_\mu}.
\end{align}
Note that in the latter, both r.h.s. and l.h.s. are estimated on the lattice of size $L_\mu$. 
In both formulas the correlation function is assumed to be in the fundamental representation. Although 
they look different, both relates the exponential suppression of 
the wall free energy (in the thermodynamic limit) to the mass gap, thus their physical contents are 
totally consistent.

\subsection{Demonstration in 1D PCM and 2D square Ising model}
Let us verify the proved inequality (\ref{problem}) 
explicitly in the $G\times G$ PCM in one-dimension, with $G$ an arbitrary compact group. 
The partition function of the model on a periodic chain of length $L$ is given by
\begin{align}
Z_\Lambda&\equiv\int_{}^{}\prod_{k=1}^{L}dU_k\ \exp\Big(\beta\jsum{i=1}{L}{\mathrm{Re\ Tr\ }}(U_iU_{i+1}^{-1})\Big),
\hspace{40pt}\beta>0,\ U_i\in G,
\\
&=\int_{}^{}\prod_{k=1}^{L}dU_k\ \prod_{i=1}^{L}\Big[\jsum{r}{}d_rF_r\chi_r(U_iU_{i+1}^{-1})\Big],
\end{align}
where $\jsum{r}{}$ runs over all irreducible unitary representations of $G$. 
$F_r=F_{\overline{r}}>0$ follows from 
the reflection positivity and reality of the action. Let us define $c_r\equiv F_r/F_0$ for later convenience 
($0$ denotes the trivial representation). Straightforward calculation yields
\begin{e}
Z_\Lambda=\jsum{r}{}d_r^2(F_r)^L.
\end{e}
The twisted partition function $Z_\Lambda^g$ is similarly given by
\begin{align}
Z_\Lambda^g&\equiv\int_{}^{}\prod_{k=1}^{L}dU_k\ \Big[\jsum{r'}{}d_{r'}F_{r'}\chi_{r'}(gU_1U_{2}^{-1})\Big]
\prod_{i=2}^{L}\Big[\jsum{r}{}d_rF_r\chi_r(U_iU_{i+1}^{-1})\Big],   \hspace{40pt}   g\in G',
\\
&=\jsum{r}{}d_r\chi_r(g)(F_r)^L,
\end{align}
where $G'\subset G$ is an arbitrary subgroup of $G$. 
($Z_\Lambda^g$ reduces to $Z_\Lambda$ for $g=\mathbf{1}$, as it should be.) Hence we get
\begin{align}
\jlim{L\to\infty}\Big\{1-\int_{G'}^{}dg\langle \mathcal O(g)\rangle\Big\}^{n/L}
&=\jlim{L\to\infty}\left\{\jfrac{{\jsum{r}{}}'d_r^2(c_r)^L}{\jsum{r}{}d_r^2(c_r)^L}\right\}^{n/L}
\\
&=(c_{r'})^n.   \label{TA}
\end{align}
Here ${\jsum{r}{}}'$ is defined as a sum over all representations of $G$ \textit{which are nontrivial w.r.t.} $G'$, 
and $c_{r'}$ is defined as the largest one among $\{c_r\ |\ r\ {\textrm{is\ nontrivial\ w.r.t.\ }G'}\}$.

Next we define the correlation function as $\Gamma(n)=\chi_R(U_0U_n^{-1})$. Then (\ref{technical}) requires $R$ to be nontrivial 
w.r.t. $G$. After straightforward calculation we get
\begin{e}
\jfrac{\langle\Gamma(n)\rangle}{\langle\Gamma(0)\rangle}=\jfrac{1}{d_R}\langle\chi_R(U_0U_n^{-1})\rangle=(c_R)^n   \label{TB}
\end{e}
in the thermodynamic limit ($L\to\infty$). Since $c_R\leq c_{r'}$ is obvious from their definitions, we conclude from 
(\ref{TA}) and (\ref{TB}) that 
the inequality (\ref{problem}) certainly holds at least in the limit $L\to\infty$.
\footnote{(\ref{problem}) should hold for finite $L$ too, but 
expressions of both sides of (\ref{problem}) become highly complicated for finite $L$ and 
verification seems to be hard.}
\vspace{5pt}
\\
\hfil\hspace{80pt}*\hspace{80pt}*\hspace{80pt}*\hspace{80pt}*\hspace{80pt}

As a next example let us take the two-dimensional Ising model on a square lattice. 
The partition function of the model is given by
\begin{e}
Z_\Lambda=\int_{}^{}\prod_{k\in\Lambda} d\sigma_k\ 
\exp\left(\jsum{\mu=1}{L_1}\jsum{\nu=1}{L_2}(a\sigma_{\mu\nu}\sigma_{\mu+1,\nu}+b\sigma_{\mu\nu}\sigma_{\mu,\nu+1})\right),
\end{e}
where $\sigma_{\mu\nu}$ is the Ising spin located at the site $(\mu,\nu)$ and 
$\jint{}{}d\sigma\equiv\jfrac{1}{2}\jsum{\sigma=\pm 1}{}$. 
Periodic boundary conditions are imposed so that 
$\sigma_{1,\nu}=\sigma_{L_1+1,\nu},\,\sigma_{\mu,1}=\sigma_{\mu,L_2+1}$. 
We assume $a>0,\,b>0$. 
Let us focus on the high temperature (disorder) phase of the model.

The exact asymptotic form of the two-point correlation function is known \cite{textbooks} 
and the mass gap (or inverse correlation length) 
$M\equiv 2(\oline{a}-b)$, where $\oline{a}$ is the \textit{dual temperature} defined by
\begin{e}
\sinh 2a\ \sinh 2\oline{a}=1.
\end{e}
$\oline{b}$ is defined in the same way.

To estimate the free energy of walls we need explicit formulae for twisted and untwisted 
partition functions. 
Here we use the expressions due to Kastening \cite{Kastening}, which in out notation read
\begin{align}
&Z_\Lambda=\jfrac{1}{2}[2\sinh(2a)]^{L_1L_2/2}\times     \notag
\\
&
\left\{
\prod_{k=1}^{L_2}\Big[2\cosh\Big(\jfrac{L_1}{2}\gamma_{2k-1}\Big)\Big]
+
\prod_{k=1}^{L_2}\Big[2\sinh\Big(\jfrac{L_1}{2}\gamma_{2k-1}\Big)\Big]
+
\prod_{k=1}^{L_2}\Big[2\cosh\Big(\jfrac{L_1}{2}\gamma_{2k-2}\Big)\Big]
-
\prod_{k=1}^{L_2}\Big[2\sinh\Big(\jfrac{L_1}{2}\gamma_{2k-2}\Big)\Big]
\right\},
\\
&Z_\Lambda^{(-)}=\jfrac{1}{2}[2\sinh(2a)]^{L_1L_2/2}\times    \notag
\\
&
\left\{
\prod_{k=1}^{L_2}\Big[2\cosh\Big(\jfrac{L_1}{2}\gamma_{2k-1}\Big)\Big]
+
\prod_{k=1}^{L_2}\Big[2\sinh\Big(\jfrac{L_1}{2}\gamma_{2k-1}\Big)\Big]
-
\prod_{k=1}^{L_2}\Big[2\cosh\Big(\jfrac{L_1}{2}\gamma_{2k-2}\Big)\Big]
+
\prod_{k=1}^{L_2}\Big[2\sinh\Big(\jfrac{L_1}{2}\gamma_{2k-2}\Big)\Big]
\right\}.
\end{align}
$Z^{(-)}_\Lambda$ is the twisted partition function; more precisely, it is a partition function 
on a lattice which is antiperiodic in $x^1$-direction and periodic in $x^2$-direction. (Note that 
this boundary condition is equivalent to the existence of a wall wrapping around a periodic lattice 
in $x^2$-direction.) $\gamma_k>0$ is defined by
\begin{e}
\cosh \gamma_k=\cosh 2\oline{a}\ \cosh 2b-\cos\jfrac{\pi k}{L_2}
\sinh 2\oline{a}\ \sinh 2b.
\end{e}
The inequality to be checked, namely (\ref{problem}) for the square Ising model, is given by
\begin{e}
\langle\sigma_0\sigma_n\rangle_\Lambda
\leq 2\left\{\jfrac{1}{2}\left(1-\jfrac{Z_\Lambda^{(-)}}{Z_\Lambda}\right)\right\}^{n/L_1}.
\label{-153}
\end{e}\label{U}
$\langle\dots\rangle_\Lambda$ denotes the expectation value measured on a finite lattice ($=\Lambda$). 
Similarly $\langle\dots\rangle_\infty$ denotes an expectation value in the thermodynamic limit. 
For simplicity we calculate not $\jfrac{Z^{(-)}_\Lambda}{Z_\Lambda}$ but
\begin{align}
\jfrac{Z_\Lambda-Z^{(-)}_\Lambda}{Z_\Lambda+Z^{(-)}_\Lambda}
&=\jfrac{1-\displaystyle\prod_{k=1}^{L_2}\tanh\Big(\jfrac{L_1}{2}\gamma_{2k-2}\Big)}
{\displaystyle
\prod_{k=1}^{L_2}\jfrac{\cosh\Big(\jfrac{L_1}{2}\gamma_{2k-1}\Big)}{\cosh\Big(\jfrac{L_1}{2}\gamma_{2k-2}\Big)}+
\displaystyle
\prod_{k=1}^{L_2}\jfrac{\sinh\Big(\jfrac{L_1}{2}\gamma_{2k-1}\Big)}{\cosh\Big(\jfrac{L_1}{2}\gamma_{2k-2}\Big)}
}.  \label{Z@}
\end{align}
Considering that\footnote{$0<\gamma_0$ stems from the fact that now the system is in the 
high temperature (disorder) phase.}
 $0<\gamma_0=2(\oline{a}-b)$ is the smallest among $\{\gamma_k\}$, we obtain, after some algebra,
\begin{e}
\lim_{L_1\to\infty}\left(1-\frac{Z_\Lambda^{(-)}}{Z_\Lambda}\right)^{1/L_1}
=\exp\Big\{-\Big(\gamma_0+\jfrac{1}{2}\jsum{k=0}{2L_2-1}(-1)^{k+1}\gamma_k\Big)\Big\}.
\label{-159}
\end{e}
Since $\jsum{k=0}{2L_2-1}(-1)^{k+1}\gamma_k=O\Big(\jfrac{1}{L_2}\Big)$ for $L_2\gg 1$, we get
\begin{e}
\lim_{L_2\to\infty}\lim_{L_1\to\infty}
\Big[{\rm{r.h.s.\ of\ }}(\ref{-153})\Big]=2\ee^{-\gamma_0n}.\label{777771}
\end{e}
We compare this result with the asymptotic form of the exact two point function in the high temperature 
phase \cite{textbooks}:
\begin{e}
\hspace{50pt}\langle\sigma_0\sigma_n\rangle_\infty=f(a,b)\jfrac{1}{\sqrt{\mathstrut n}}\ee^{-\gamma_0n}
\times\Big[1+O\big(\jfrac{1}{n}\big)\Big]
\hspace{50pt}{\rm{for}}\ \ n\gg 1,      \label{exact_two_point}
\end{e}
where the factor $f(a,b)$ is independent of $n$.

(\ref{777771}) and (\ref{exact_two_point}) tells that 
the exponential decay rates of both sides \textit{coincide exactly} for all values of 
$(a,\,b)$ when the system is in the disorder phase. This result suggests that 
our inequality might be a rather accurate one in general.

\subsection{Demonstration in 2D triangular Ising model}\label{3-5}
Our next example is the two-dimensional Ising model on a \textit{triangular} lattice. 
In the square Ising model, exact expression for mass gap was already known, thus the value of 
our theorem is obscured. However, as for the triangular Ising model, 
an exact expression for mass gap is not known except for special cases, so (unlike in the previous section) 
\textit{the results we give in this section are essentially new.}

We consider a lattice of size $L_1\times L_2$ with periodic boundary conditions.\footnote{$L_1$ denotes 
the number of triangles. It is not the \textit{actual length} of the lattice.} 
See fig.\ref{tri-1} for 
an example with $L_1=6$ and $L_2=3$; upper and lower edges painted in blue should be identified, and also 
right and left edges painted in red should be identified. (This lattice is the one that appeared in 
the seminal work of Houtappel \cite{Houtappel} in which an analytic formula for the triangular Ising model 
was obtained for the first time.) 
Another triangular lattice commonly used in the 
literature is depicted in fig.\ref{tri-2}, where edges are again colored for the purpose of indicating 
the periodic structure of the lattice. It is easy to prove that these lattices are equivalent 
\textit{if and only if} $L_1$ is a multiple of $2L_2$. See fig.\ref{tri-3} for illustration of this fact. 
Numbers are written to guide the eye; edges assigned with the same number should be identified. 
Hereafter we will assume this condition, but this is only a technical assumption and not essential for we will be 
interested in the limit $L_1\to\infty$. 
The reason we did not start with the lattice in fig.\ref{tri-2} is because it does not allow for simple use of 
reflection positivity.
\begin{figure}[phbt]
\begin{minipage}{.46\textwidth}
\begin{center}
\includegraphics*[width=6.5cm]{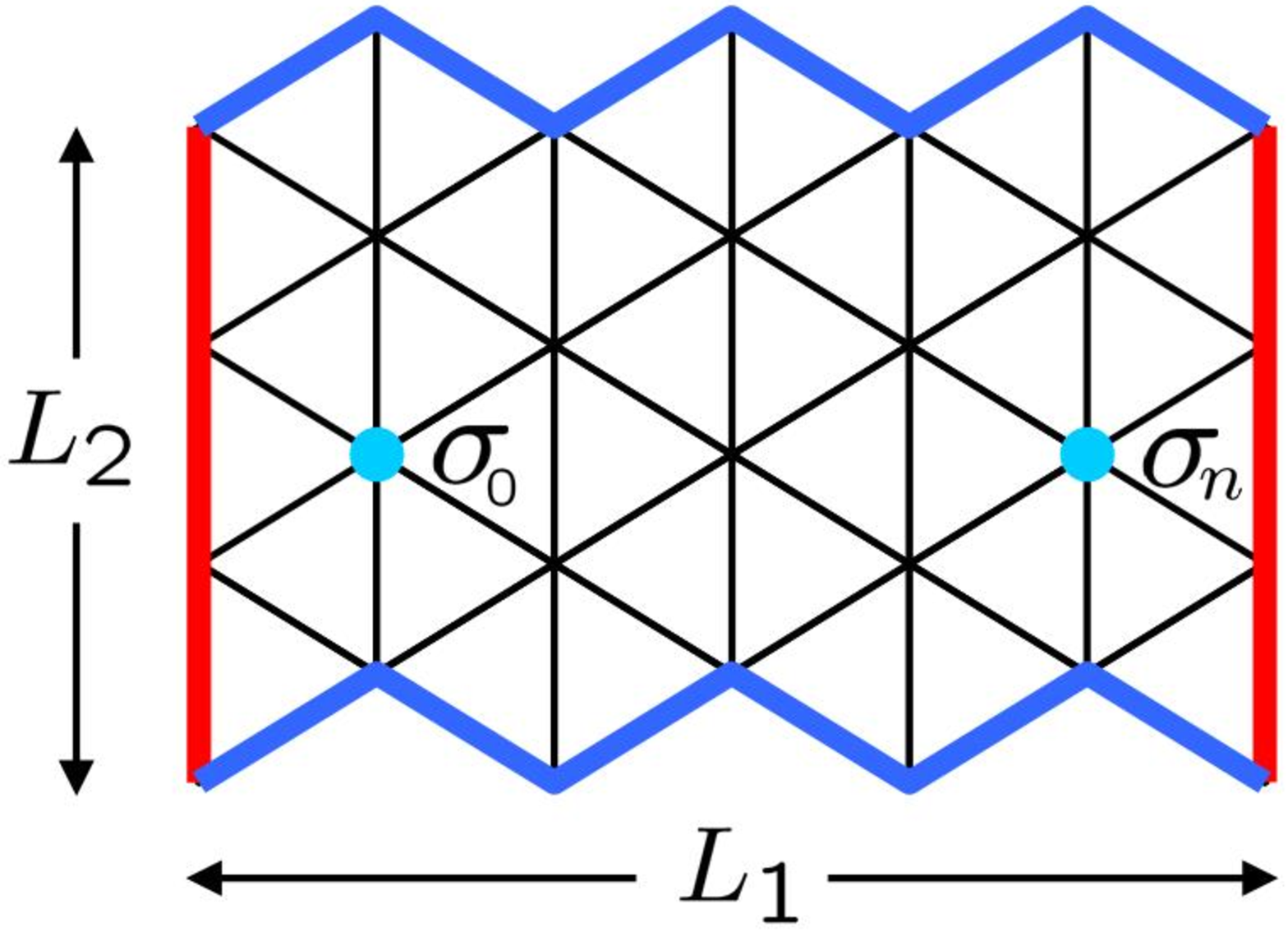}
\end{center}\vspace{-16pt}
\caption{A triangular lattice of size $L_1\times L_2$ with periodic boundary conditions; 
the red and the blue ends are identified, respectively. This lattice is symmetric w.r.t. each of vertical axes.}
\label{tri-1}
\end{minipage}\hfill
\begin{minipage}{.46\textwidth}
\begin{center}
\includegraphics*[width=6.5cm]{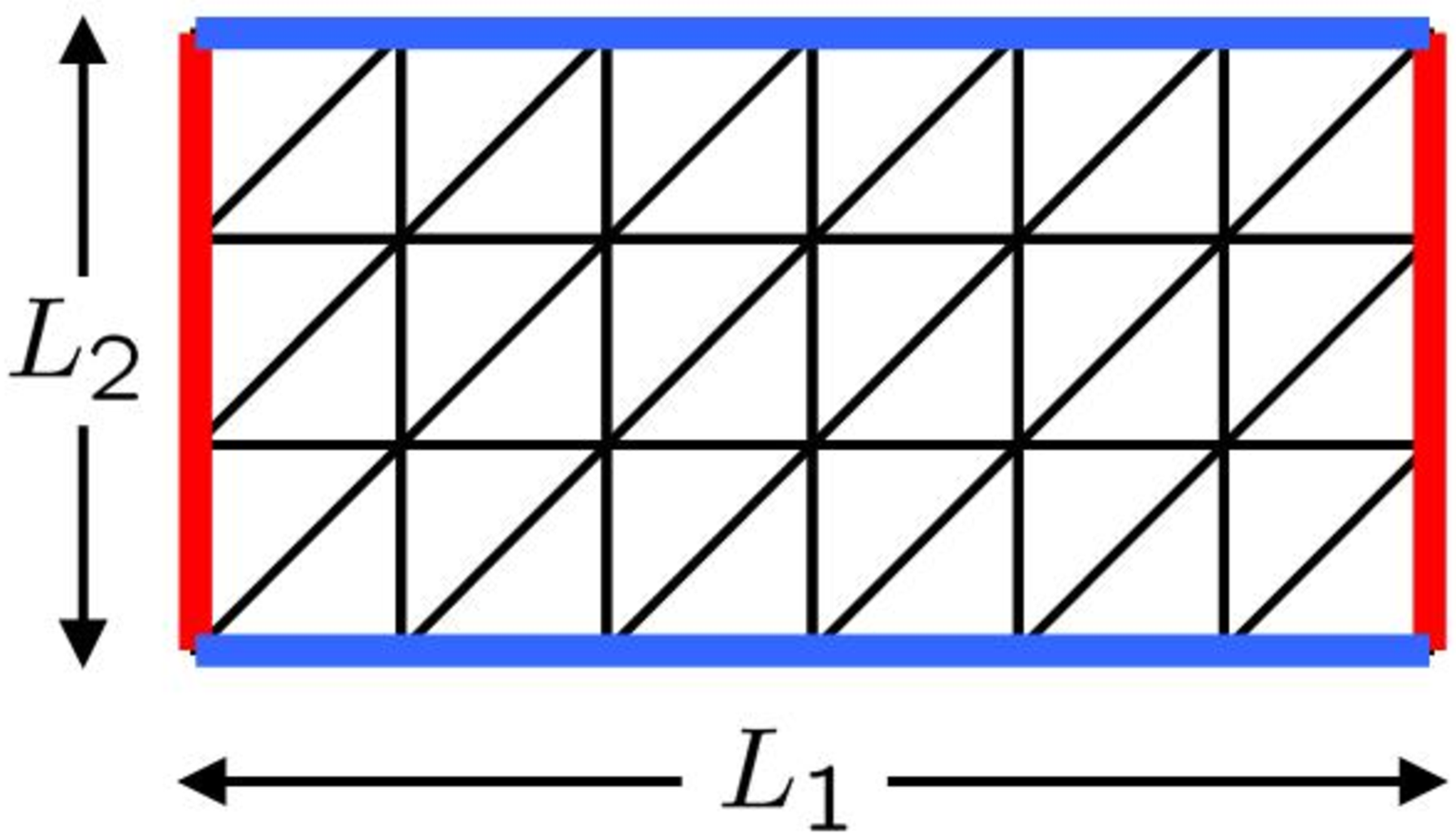}
\end{center}\vspace{-16pt}
\caption{A triangular lattice constructed from a square lattice by addition of diagonal edges. Its 
periodic structure is indicated by coloring as in fig.\ref{tri-1}.}
\label{tri-2}
\end{minipage}
\end{figure}
\begin{figure}[hbtp]
\begin{minipage}{.46\textwidth}
\begin{center}
\includegraphics*[width=5.5cm]{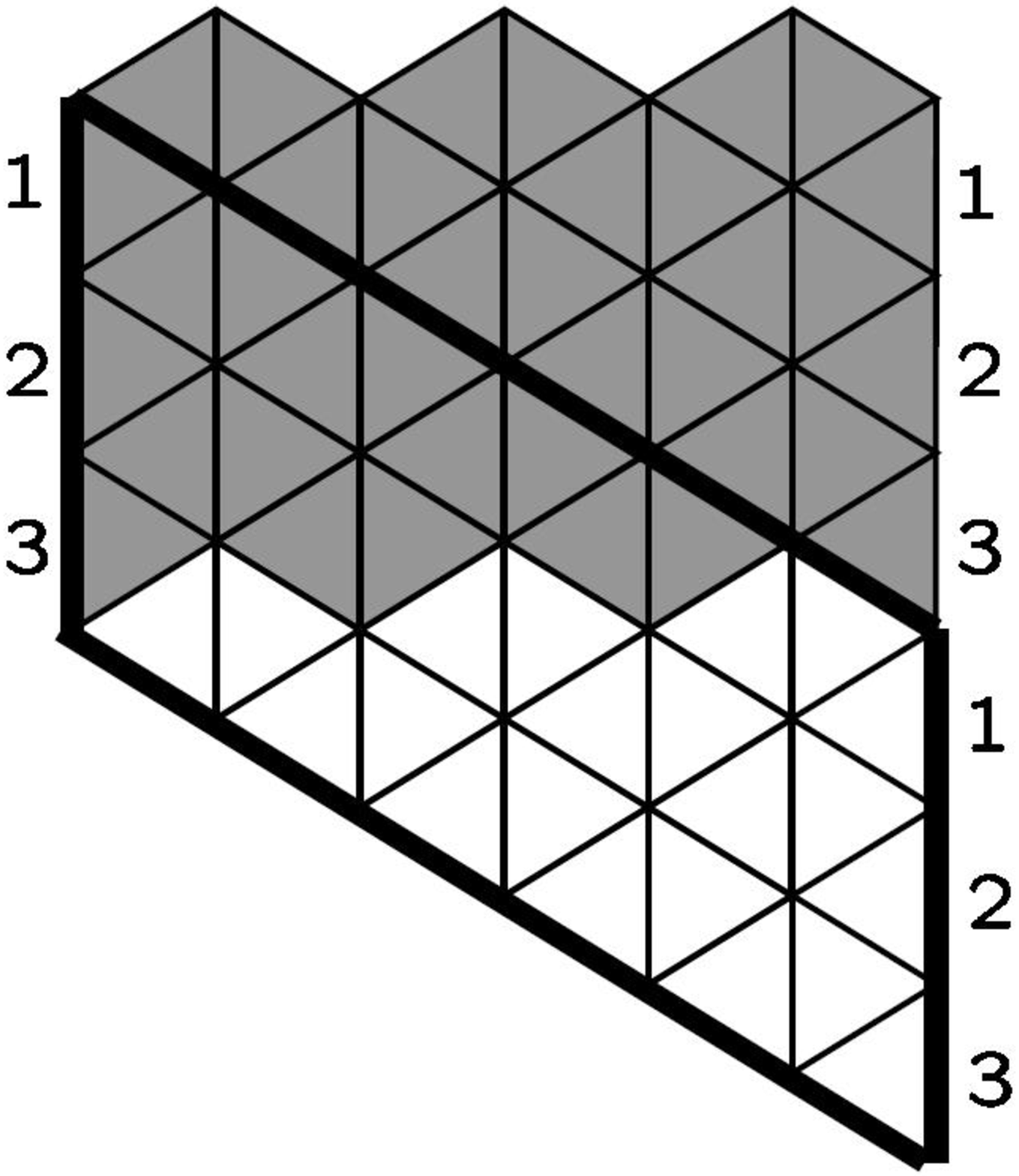}
\end{center}\vspace{-20pt}
\caption{An illustration of the fact that those lattices given in figs.\ref{tri-1} and \ref{tri-2} are 
equivalent iff $L_1$ is a multiple of $2L_2$. Here $L_1=6$ and $L_2=3$.}
\label{tri-3}
\end{minipage}
\hfill 
\begin{minipage}{.46\textwidth}
\begin{center}
\includegraphics*[width=6.5cm]{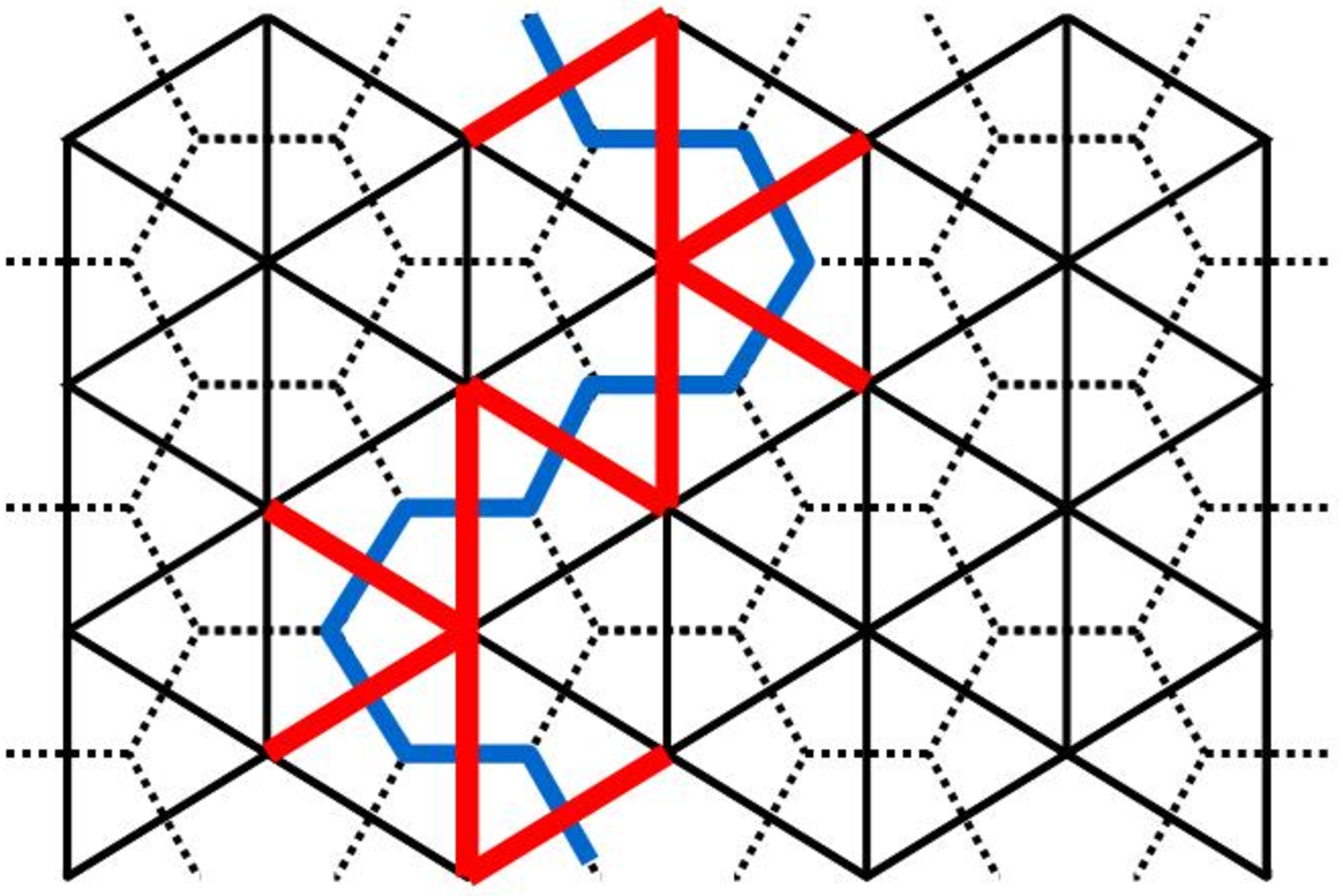}
\end{center}\vspace{-16pt}
\caption{A $Z_2$-twist represented by a blue loop on the 
dual (hexagonal) lattice. On the original (triangular) lattice, it is represented by a 
stacked set of links (colored in red) with couplings of opposite sign. }
\label{tri-4}
\end{minipage}
\end{figure}

Let us define a twist on a planar triangular lattice. A twist is a closed loop on the dual lattice, and 
the dual of a triangular lattice is a honeycomb (or hexagonal) lattice as shown in fig.\ref{tri-4}. 
Note that the blue line in fig.\ref{tri-4} is a closed loop owing to the periodic structure of the lattice. 
It becomes, on the original lattice, a stacked set of links with a coupling constant of opposite sign, 
which is depicted as a set of red links in fig.\ref{tri-4}. Note that introducing a twist to a periodic lattice 
as in fig.\ref{tri-4} is equivalent to imposing an anti-periodic boundary condition in the horizontal direction.

Let us remember that it is not the number of the walls 
but rather the number \textit{mod 2} of them that is physically relevant. This `$Z_2$ conservation' of walls is 
a direct consequence of $\sigma^2=1$ in the present model, and we can show it explicitly 
by a sequence of changes of variables $\sigma\to -\sigma$. 
For instance, the partition function containing one wall and that containing three walls agree 
completely as illustrated in fig.\ref{Z=Z} in which red segments represent twisted links.

\begin{figure}[hbtp]
\begin{center}
\includegraphics*[width=11.0cm]{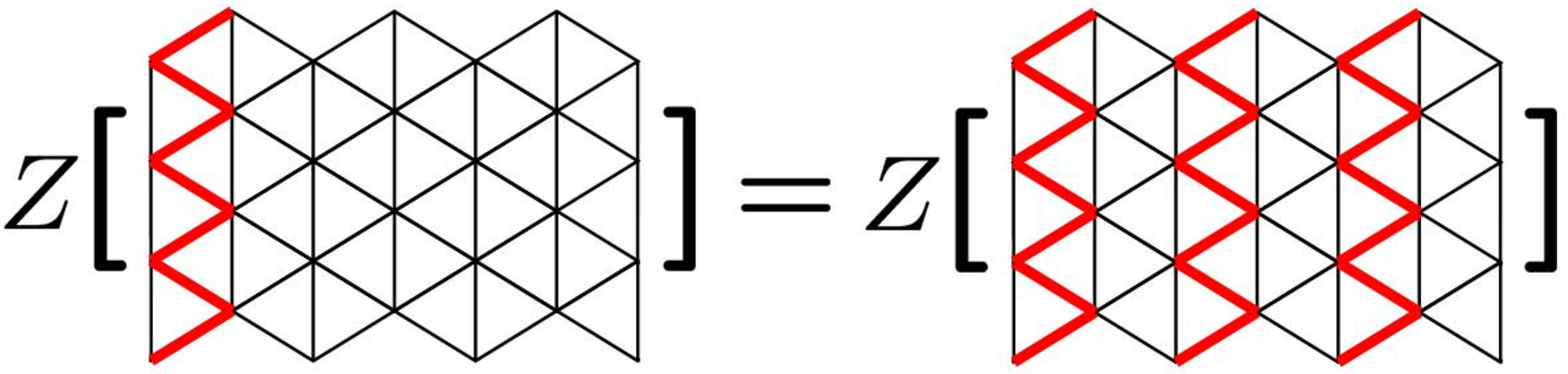}
\end{center}\vspace{-10pt}
\caption{The partition function does not differ for any odd number of twists, owing to the 
$Z_2$ conservation of the twist.}
\label{Z=Z}
\end{figure}
Exact expressions for partition functions of a planar triangular Ising model with various boundary conditions 
were derived by Wu and Hu via `Grassmann path integral method' \cite{Wu-Hu}. 
Let $Z_\Lambda$ ($Z^{(-)}_\Lambda$) denote the partition function with periodic boundary condition 
in both directions (with periodic in vertical and anti-periodic in horizontal direction), respectively. 
On a triangular lattice, three different couplings can be defined in each directions, so let us introduce 
$J_1$ as the coupling constant on vertical bonds in figs.\ref{tri-1},\ref{tri-2} and $J_2, J_3$ the other two. 
Reflection positivity however requires $J_2=J_3\,(\equiv J)$. Introduce
\begin{e}
t_1\equiv \tanh(J_1/{\rm{k_B}}T),\ \ t\equiv \tanh(J/{\rm{k_B}}T).
\end{e}
Our convention is such that $t_{(1)}>0$ corresponds to ferromagnetic coupling. On the $(t,t_1)$-plane, 
there is a line which corresponds to $T=T_c$ and we will call it the ``critical line'' in the following.
Under the change of notation $L_1\to N$ and $L_2\to M$, the result due to Hu and Wu 
for this case reads
\begin{align}
Z_\Lambda&=\jfrac{1}{2}\left[2\cosh^3(\beta J)\right]^{MN}\Big[\Omega_{\frac{1}{2},\frac{1}{2}}+\Omega_{\frac{1}{2},0}
+\Omega_{0,\frac{1}{2}}-\textrm{sgn}\,(T-T_c)\Omega_{0,0}\Big],\label{/163}
\\
Z_\Lambda^{(-)}&=\jfrac{1}{2}\left[2\cosh^3(\beta J)\right]^{MN}\Big[\Omega_{\frac{1}{2},\frac{1}{2}}+\Omega_{\frac{1}{2},0}
-\Omega_{0,\frac{1}{2}}+\textrm{sgn}\,(T-T_c)\Omega_{0,0}\Big],\label{/164}
\end{align}
where
\begin{e}
\Omega_{\mu\nu}=(A_0)^{MN/2}\prod_{p=0}^{M-1}\prod_{q=0}^{N-1}
\Big[1-B\cos\jfrac{2\pi(p+\mu)}{M}-A\cos\jfrac{2\pi(q+\nu)}{N}
-A\cos\Big(\jfrac{2\pi(p+\mu)}{M}-\jfrac{2\pi(q+\nu)}{N}\Big)\Big]^{1/2},
\end{e}
\begin{e}
A_0=(1+t^2t_1)^2+(t_1+t^2)^2+2t^2(1+t_1)^2,
\ \ \ 
A=\frac{2(1-t_1^2)(1-t^2)t}{A_0},
\ \ \ 
B=\frac{2t_1(1-t^2)^2}{A_0}    \label{A_def}
\end{e}
with $T_c$ the phase-transition temperature. 
Using the formulae given above, we can show
\begin{lem}\label{MNlimit}
In the disordered phase $(\ \Leftrightarrow T>T_c\ )$ we have
\begin{e}
\lim_{M\to\infty}\lim_{N\to\infty}\left(1-\frac{Z^{(-)}_\Lambda}{Z_\Lambda}\right)^{1/N}=\ee^{-\rho},
\label{kakkiteki}
\end{e}
where
\begin{gather}
\rho\equiv \cosh^{-1}\left(\frac{g(B)}{|A|}\right)>0,\label{definition_of_rho}
\\
g(x)\equiv\left\{\begin{matrix}
\sqrt{-2x(1+x)}&\displaystyle\left(-1<x<-\frac{1}{3}\right)
\\
\displaystyle\frac{1-x}{2}&\displaystyle\left(-\frac{1}{3}\leq x<1\right)
\end{matrix}
\right..\label{definition_of_g}
\end{gather}
The order of two limits in (\ref{kakkiteki}) must not be changed.
\end{lem}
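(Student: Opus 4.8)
The plan is to insert the Wu--Hu formulas (\ref{/163})--(\ref{/164}) directly into the left side of (\ref{kakkiteki}) and extract the double limit by hand. In the disordered phase $\mathrm{sgn}(T-T_c)=+1$, so $Z_\Lambda-Z^{(-)}_\Lambda\propto 2(\Omega_{0,\frac12}-\Omega_{0,0})$ and $Z_\Lambda\propto\Omega_{\frac12,\frac12}+\Omega_{\frac12,0}+\Omega_{0,\frac12}-\Omega_{0,0}$, hence $1-Z^{(-)}_\Lambda/Z_\Lambda=2(\Omega_{0,\frac12}-\Omega_{0,0})/(\Omega_{\frac12,\frac12}+\Omega_{\frac12,0}+\Omega_{0,\frac12}-\Omega_{0,0})$, and the whole problem is reduced to the behaviour of the four numbers $\Omega_{\mu\nu}$, $\mu,\nu\in\{0,\tfrac12\}$, first as $N\to\infty$ (recall the standing hypothesis $N\in2M\mathbf{N}$) and then as $M\to\infty$.

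The computational engine is an explicit evaluation of the product over $q$ inside each $\Omega_{\mu\nu}$. One rewrites the plaquette factor $1-B\cos\theta-A\cos\phi-A\cos(\theta-\phi)$ first as $(1-B\cos\theta)-2A\cos\tfrac{\theta}{2}\cos(\phi-\tfrac{\theta}{2})$ and hence as $2c(\theta)\bigl[\cosh\gamma(\theta)-\cos(\phi-\delta(\theta))\bigr]$, with $c(\theta)=|A||\cos\tfrac{\theta}{2}|$, $\cosh\gamma(\theta)=(1-B\cos\theta)/(2|A||\cos(\theta/2)|)$ and $\delta(\theta)=\tfrac{\theta}{2}\ (\mathrm{mod}\ \pi)$, and then invokes the identity $\prod_{q=0}^{N-1}\bigl[2\cosh\gamma-2\cos(\tfrac{2\pi(q+\nu)}{N}-\delta)\bigr]=2\cosh N\gamma-2\cos(N\delta-2\pi\nu)$. (In the disordered phase $\cosh\gamma(\theta)\ge g(B)/|A|>1$, as shown in the last step, so every $\gamma(\theta)$ is real and positive and every plaquette factor is positive.) The point of the hypothesis $N\in2M\mathbf{N}$ is that then $\cos(N\delta^{(\mu)}_p-2\pi\nu)$ equals $\cos2\pi\nu$ for $\mu=0$ and $(-1)^{N/2M}\cos2\pi\nu$ for $\mu=\tfrac12$, \emph{independently of $p$}; the products over $p$ therefore collapse, giving $\Omega_{0,0}\propto\prod_p\sinh\tfrac{N\gamma_p^{(0)}}{2}$, $\Omega_{0,\frac12}\propto\prod_p\cosh\tfrac{N\gamma_p^{(0)}}{2}$ and $\Omega_{\frac12,0}+\Omega_{\frac12,\frac12}\propto\prod_p\sinh\tfrac{N\gamma_p^{(1/2)}}{2}+\prod_p\cosh\tfrac{N\gamma_p^{(1/2)}}{2}$, where $\gamma_p^{(\mu)}=\gamma(2\pi(p+\mu)/M)$ and the suppressed prefactor $(A_0)^{MN/2}2^M\bigl(\prod_pc(\theta^{(\mu)}_p)\bigr)^{N/2}$ is common up to its evident $\mu$-dependence. (The grid point $\theta_p=\pi$, present for exactly one parity of $M$, has $c(\pi)=0$ but plaquette factor $\equiv1+B$, so it contributes only a $\nu$-independent factor $(1+B)^{N/2}$ and may be dropped.)

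Now write $\prod_p\cosh\tfrac{N\gamma_p^{(0)}}{2}-\prod_p\sinh\tfrac{N\gamma_p^{(0)}}{2}=\prod_p\cosh\tfrac{N\gamma_p^{(0)}}{2}\cdot\bigl(1-\prod_p\tanh\tfrac{N\gamma_p^{(0)}}{2}\bigr)$; since $\tanh x=1-2e^{-2x}+O(e^{-4x})$, the last factor equals $2\,(\#\text{minimisers})\,e^{-N\gamma^{(0)}_{\min}(M)}(1+o(1))$ as $N\to\infty$, where $\gamma^{(0)}_{\min}(M)=\min_{0\le p<M}\gamma(2\pi p/M)>0$, while the analogous factor in the denominator tends to $2$. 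Cancelling the $A_0$- and $2^M$-factors and using $c(\theta)^{N/2}\cosh\tfrac{N\gamma(\theta)}{2}=\tfrac12e^{\frac{N}{2}h(\theta)}(1+o(1))$ with $h(\theta)=\log\bigl(|A||\cos\tfrac{\theta}{2}|\bigr)+\cosh^{-1}\!\bigl((1-B\cos\theta)/(2|A||\cos(\theta/2)|)\bigr)$, the surviving ratio of prefactors is $e^{\frac{N}{2}D_M}$ with $D_M=\sum_{p=0}^{M-1}h(\tfrac{2\pi p}{M})-\sum_{p=0}^{M-1}h(\tfrac{2\pi(p+1/2)}{M})$. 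Hence $\lim_{N\to\infty}\bigl(1-Z^{(-)}_\Lambda/Z_\Lambda\bigr)^{1/N}=\exp\bigl(\tfrac12 D_M-\gamma^{(0)}_{\min}(M)\bigr)$, valid once $M$ is large enough that this exponent is negative (automatic once $D_M<2\rho$, and for the few small $M$ where it fails the inner limit is just $1$). Letting $M\to\infty$: $D_M$ is the difference between a left-endpoint and a midpoint Riemann sum of the same function $h$, and $h$ extends to a $C^\infty$ function on the circle because the $\log|\cos\tfrac{\theta}{2}|$ singularity of its first term cancels against the $\cosh^{-1}$ term as $\theta\to\pi$ (an elementary expansion, using that $h(\pi+s)$ is even in $s$, gives $h(\pi+s)=\log(1+B)+O(s^2)$), so $D_M=O(1/M)\to0$; and $\gamma^{(0)}_{\min}(M)\to\min_\theta\gamma(\theta)$ by continuity of $\gamma$ and density of the grid. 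A one-variable minimisation of $(1-Bu)/\sqrt{2(1+u)}$ over $u=\cos\theta\in[-1,1]$ — interior critical point $u^\ast=-2-1/B$, lying in $[-1,1]$ precisely for $-1<B\le-\tfrac13$ with value $\sqrt{-2B(1+B)}$, and otherwise the minimum at $u=1$ with value $(1-B)/2$ — identifies $\min_\theta\gamma(\theta)=\cosh^{-1}(g(B)/|A|)=\rho$, which is (\ref{kakkiteki}) and also explains the piecewise form (\ref{definition_of_g}) of $g$.

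The order of the two limits is essential: if $M\to\infty$ is taken first, $\prod_p\tanh\tfrac{N\gamma_p^{(0)}}{2}\to0$, the $\sinh$-versus-$\cosh$ distinction — the entire sensitivity to the wall — washes out, $1-Z^{(-)}_\Lambda/Z_\Lambda$ no longer tends to $0$, and its $N$-th root tends to $1$ rather than $e^{-\rho}$. The main work is the clean bookkeeping of the competing subleading exponentials in the third paragraph; the incidental points — positivity of the plaquette factors and of $\gamma(\theta)$ in the disordered phase, the harmless $\theta=\pi$ grid point, and the mild regularity of $h$ needed for $D_M\to0$ — are routine.
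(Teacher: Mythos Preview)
Your proof is correct and follows essentially the same route as the paper's Appendix~A: both reduce the Wu--Hu expression to closed $q$-products via the identity $\prod_q[2\cosh\gamma-2\cos(\cdot)]=2\cosh N\gamma-2\cos(\cdot)$, extract the leading exponential $e^{-N\gamma^{(0)}_{\min}}$ from $\Omega_{0,1/2}-\Omega_{0,0}$, show that the residual alternating sum $D_M$ (the paper's $\sum_k(-1)^{k+1}f_A^B(k/2M)$) vanishes as $M\to\infty$, and finish with the one-variable minimisation of $(1-Bu)/\sqrt{2(1+u)}$ that produces the piecewise $g(B)$. Your write-up is in fact more explicit than the paper's on several points it leaves as ``elementary but tedious'' --- notably the smoothness of $h$ at $\theta=\pi$ needed for $D_M\to0$, the handling of the degenerate grid point $\theta=\pi$, and the full minimisation yielding (\ref{definition_of_g}) --- and you work with $1-Z^{(-)}_\Lambda/Z_\Lambda$ directly rather than the paper's intermediate ratio $(Z_\Lambda-Z^{(-)}_\Lambda)/(Z_\Lambda+Z^{(-)}_\Lambda)$, but this is a cosmetic difference.
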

In the above, $\displaystyle\frac{g(B)}{|A|}\geq 1$ and $|B|<1$ are implicitly assumed; these can be shown 
for every $(t,t_1)\in(-1,1)^2$ by elementary methods. (Note that $\displaystyle\frac{g(B)}{|A|}=1$ defines 
the critical line.) 
The proof of theorem \ref{MNlimit} is elementary but technically cumbersome, which we relegate to the appendix. 

To gain an intuitive understanding of the above result, let us 
see fig.\ref{pm3d_mass_gap}, in which the projection of $1-\ee^{-\rho}$ onto the $(t_1,t)$-plane is drawn. 
The black region corresponds 
to the (anti-)ferromagnetically ordered phase, while the colored region to the disordered phase. 
Brighter color represents larger $\rho$, hence larger mass gap.
\begin{figure}[btph]
\begin{center}
\includegraphics*[width=15.0cm]{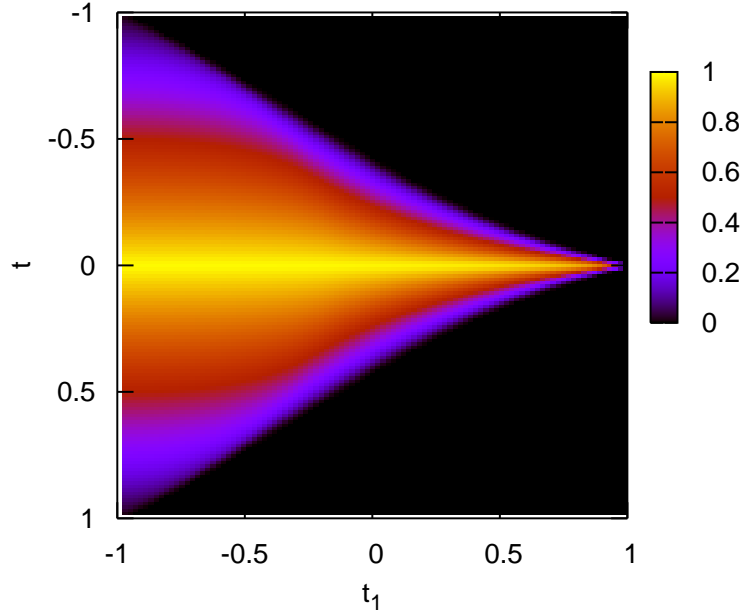}
\end{center}\vspace{-40pt}
\caption{The projection of $1-\ee^{-\rho}$ onto the $(t_1,t)$-plane. 
The black region corresponds to the (anti-)ferromagnetically ordered phase 
and the colored region to the disordered phase. 
Brighter (darker) color represents larger (smaller) $\rho$ and especially $\rho$ diverges on the $t=0$ line. 
The boundary of the colored region signifies the critical line.}
\label{pm3d_mass_gap}
\end{figure}
Fig.\ref{pm3d_mass_gap} clearly shows a symmetry under $t\leftrightarrow -t$; this is a manifestation of the well-known 
fact that the triangular Ising model is invariant under simultaneous sign reversal of any two of $J_1,J_2,J_3$.

The inequality of our primary interest, namely (\ref{problem}) for the triangular Ising model, reads\footnote{
Actually, we originally proved (\ref{problem}) on a \textit{square} lattice, but 
the whole procedure of the proof goes over to 
the case of a \textit{triangular} lattice almost unchanged.}
\begin{e}
\langle\sigma_0\sigma_n\rangle_\Lambda
\leq 2\left\{\jfrac{1}{2}\left(1-\jfrac{Z_\Lambda^{(-)}}{Z_\Lambda}\right)\right\}^{n/N}.
\label{nobita}
\end{e}
Letting $M\to\infty$ after $N\to\infty$, we obtain
\begin{thm}
\begin{e}
\langle\sigma_0\sigma_n\rangle_\infty \leq 2\ee^{-\rho n}.
\end{e}
\vspace{-25pt}
\end{thm}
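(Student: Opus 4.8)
The plan is to obtain the finite-volume bound (\ref{nobita}) as a special case of Theorem \ref{thm}, and then to pass to the thermodynamic limit in the order prescribed by Lemma \ref{MNlimit}.

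First I would check that Theorem \ref{thm} applies to the triangular Ising model with $G=Z_2$ and twisting element $g=-1$. The correlation function $\Gamma_\mu(x;n)=\sigma_x\sigma_{x+n\hat\mu}$ is manifestly $Z_2$-invariant, so (\ref{G-i}) holds, and the indispensable condition (\ref{technical}) holds because for $f(\sigma)=\sigma$ one has $\int_{Z_2}dg\,f(g\sigma)=\tfrac12\big(\sigma+(-\sigma)\big)=0$. The structural hypotheses are also in place: the triangular nearest-neighbour action is site-reflection positive once $J_2=J_3$ (as already imposed), the hyperplane reflections used in the proof of Theorem \ref{thm} carry over verbatim to the triangular geometry, and the assumption $N=2^kn$ — together with the auxiliary condition that $N$ be a multiple of $2M$ needed for the two lattice presentations to coincide — can be met simultaneously and becomes immaterial as $N\to\infty$. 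Since $\sigma_x^2=1$ we have $\Gamma_\mu(x;0)\equiv1$, hence $\langle\Gamma_\mu(x;0)^2\rangle=\langle\Gamma_\mu(x;0)\rangle^2=1$ and the prefactor in (\ref{problem}) equals $1$. Finally, writing $\mathcal F_T(\mathbf{1})[\Nu]=\int_{Z_2}dg\,\mathcal O(g)[\Nu]=\tfrac12\big(\mathcal O(\mathbf{1})[\Nu]+\mathcal O(-1)[\Nu]\big)$ and using $\langle\mathcal O(\mathbf{1})[\Nu]\rangle=1$, $\langle\mathcal O(-1)[\Nu]\rangle=Z_\Lambda^{(-)}/Z_\Lambda$, I get $1-\langle\mathcal F_T(\mathbf{1})[\Nu]\rangle=\tfrac12\big(1-Z_\Lambda^{(-)}/Z_\Lambda\big)$. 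Since the ferromagnetic correlation is non-negative, the modulus on the left of (\ref{problem}) is harmless, and (\ref{problem}) becomes precisely (\ref{nobita}).

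Next I would take limits in (\ref{nobita}). Keeping $M$ fixed, $\langle\sigma_0\sigma_n\rangle_\Lambda$ converges as $N\to\infty$ to the correlation on the infinite strip of transverse extent $M$, and this in turn converges to $\langle\sigma_0\sigma_n\rangle_\infty$ as $M\to\infty$ by standard arguments. On the right I would factor $\big\{\tfrac12\big(1-Z_\Lambda^{(-)}/Z_\Lambda\big)\big\}^{n/N}=\big(\tfrac12\big)^{n/N}\Big[\big(1-Z_\Lambda^{(-)}/Z_\Lambda\big)^{1/N}\Big]^{n}$; the first factor tends to $1$, and by Lemma \ref{MNlimit} the bracket tends to $\ee^{-\rho}$ provided $N\to\infty$ is performed before $M\to\infty$. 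Taking these iterated limits on both sides of (\ref{nobita}) yields $\langle\sigma_0\sigma_n\rangle_\infty\le 2\,\ee^{-\rho n}$.

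The one point that needs care is the commutation of the ``exponent limit'' with the ``base limit'': one must fix $M$ throughout the $N\to\infty$ step, because Lemma \ref{MNlimit} is valid only for the iterated limit $\lim_{M\to\infty}\lim_{N\to\infty}$ (the opposite order giving a different, massless-like, answer), and one must also verify that the left-hand side admits a limit taken in the same order, which it does for the ferromagnet. Everything else, namely the verification of (\ref{G-i}), (\ref{technical}), site-reflection positivity on the triangular lattice, and the elementary rewriting above, is routine.
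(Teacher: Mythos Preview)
Your proposal is correct and follows essentially the same route as the paper: specialize Theorem~\ref{thm} with $G=Z_2$ to obtain the finite-volume bound (\ref{nobita}) on the triangular lattice (the paper notes in a footnote that the proof carries over almost unchanged), and then take $N\to\infty$ followed by $M\to\infty$ using Lemma~\ref{MNlimit}. You have simply made explicit the verification of (\ref{G-i}), (\ref{technical}), the triviality of the prefactor, and the order-of-limits issue that the paper compresses into a single line.
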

The above is the main result in this subsection; $\rho$ is a rigorous lower bound of the true mass gap. 
We should keep in mind that the l.h.s. of (\ref{nobita}) is a pair correlation between two spins 
on the same horizontal level as depicted in fig.\ref{tri-1}; the two spins are \textit{not} on the same 
lattice axis.

Since the exponential decay rates are equal for both sides of the inequality in the the square Ising model, 
it is natural to expect so in the triangular Ising model too. 
The asymptotic correlation between two spins \textit{on the same lattice axis} in 
the triangular Ising model was derived by Stephenson for both ferromagnetic and antiferromagnetic 
couplings \cite{Stephenson}. 
However, the asymptotic correlation between two spins \textit{off the axis} is not found in the literature. 
We conjecture as follows:
\vspace{10pt}
\hrule
\vspace{10pt}
\noindent\textbf{Conjecture.}
\\\hfil 
$\rho$ is equal to the true off-axis mass gap for every $(t,t_1)$ in the disordered phase.
\vspace{10pt}
\hrule
\vspace{10pt}
\noindent 
The most straightforward way to test the conjecture would be to measure the mass gap directly 
via Monte Carlo simulation. However, as already seen from (\ref{exact_two_point}) the exponential 
falloff generically receives power law corrections (the so-called `Ornstein-Zernike' decay \cite{OZ-decay}) 
which makes a reliable fitting difficult. 
To evade this hamper would call for sophisticated methods such as the Monte Carlo Transfer Matrix 
Method \cite{MC-transfer-matrix-method}. A numerical check of the conjecture therefore seems to be a highly 
nontrivial task, and we defer it to future work.

In a special case, analytical test is possible: 
when $t_1=0$ the model reduces to the isotropic square Ising model and the off-axis 
correlation function reduces to 
the \textit{diagonal} correlation function. From (\ref{definition_of_rho}) it follows that
\begin{e}
\rho\big|_{t_1=0}=\cosh^{-1}\frac{(1+t^2)^2}{4|t|(1-t^2)}\ \ \ \ \ 
\textrm{for}\ \ |t|<t_C=\sqrt{2\mathstrut}-1,		\label{diagonal_CW}
\end{e}
which completely agrees with the exact \textit{diagonal} mass gap 
obtained by Cheng and Wu in 1967 \cite{textbooks}.

Further insight is gained by considering the isotropic case $t=t_1$. 
Since the two-point correlation function in this case is expected to be approximately isotropic 
(except for sign in antiferromagnetic case), 
it seems reasonable to compare $\ee^{-\rho}$ with $\ee^{-(\sqrt{3}/2)m}$ where $m$ is the exact 
\textit{on-axis} mass gap \cite{Stephenson}. ($\sqrt{3\mathstrut}/2$ is a geometric correction factor.)

\begin{figure}[bthp]
\begin{center}
\includegraphics*[width=11.0cm]{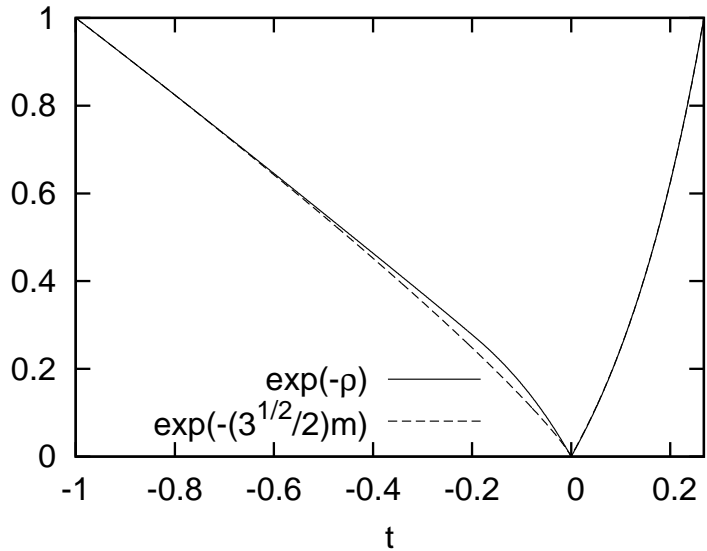}
\end{center}\vspace{-20pt}
\caption{$\ee^{-\rho}$ and $\ee^{-(\sqrt{3}/2)m}$ are plotted against $t$ in the case of isotropic case ($t=t_1$). 
Their agreement is remarkable especially at $t$ positive.}
\label{Stephenson}
\end{figure}

Fig.\ref{Stephenson} depicts the graphs of $\ee^{-\rho}$ and $\ee^{-(\sqrt{3}/2)m}$ 
against $t\in[-1,\,2-\sqrt{3\mathstrut}\,]$. For $t>0$ they agree quite well; 
their nonzero difference is hardly discernible to the eye. For $t<0$ agreement is still not bad. 
To say the least, the comparison suggests that $\rho$ be fairly close to the true off-axis mass gap and 
supports, rather than defies, the conjecture. 

It is readily seen from (\ref{definition_of_g}),(\ref{definition_of_rho}) that $\ee^{-\rho}$ is 
a nonanalytic function in the region of negative $t_1$, 
which, assuming the validity of our conjecture, 
implies non-analyticity of the mass gap. Such an exotic possibility definitely deserves further study. 
Strictly speaking, however, there are different possibilities that cannot be denied here: 
for example it could be the case that 
$\rho$ equals the true mass gap \textit{only when} $t_1\geq 0$. In the latter case, non-analyticity of 
$\rho$ does not signify that of the true mass gap.

Let us end this subsection by invoking the effectiveness of our approach. 
Although the circumstance concerning our conjecture is rather moot, 
it can be safely said that our result in this subsection is essentially new to the extent that it 
rigorously gives a lower bound for the still-unknown off-axis mass gap of the triangular (both 
ferromagnetic and antiferromagnetic, both isotropic and anisotropic) Ising model in the disordered phase.
\subsection{Strong coupling analysis}\label{SC_expansion}
In this section we show, using the convergent strong-coupling (taken as synonymous with high temperature) expansion, that 
both sides of the proved inequality (\ref{problem}) 
have an identical exponential decay rate at long distance as long as the on-axis correlation function is considered.\footnote{
High-temperature behavior of correlation functions 
in Ising-like models have been studied by many authors in a variety of methods; see ref.\cite{mass_gap}, for example. 
It is worthwhile to note that a majority of existing studies deal with 
neither the off-axis correlation function nor the case of an antiferromagnetic coupling. Hopefully a partial 
understanding of this fact will be gained through the discusions in this subsection.}
The proof is valid in any dimension and makes no use of 
reflection positivity. Since the corresponding result in LGT has already been derived by $\rm{M\ddot unster}$ 
\cite{Munster} (as mentioned in section \ref{Munster_}) and since no essential difficulty arises in extending his proof 
to the case of spin models, we shall be brief here and only try to sketch the main idea behind the approach. Implications of 
this result to our conjecture will be discussed later.

For simplicity of exposition let us consider the isotropic 
\underline{square} Ising model and its on-axis correlation function (though 
our argument is readily extendable to more general models such as PCM). A precise statement of the claim goes 
follows: as long as the size of the lattice is larger enough than $n$, the strong coupling expansion 
(SCE) of $\rho$ and $m$ are identical at least up to order $n$. (Our notation is such that the definition of 
$\rho$ is in (\ref{kakkiteki}), $m$ is the mass gap, $Z,Z^{(-)}, L_1,L_2$ are the same as in section \ref{U} and 
$t\equiv\tanh(J/{\rm{k_B}}T)$.) 

Let us begin with the expression $\displaystyle Z=\sum_{\{\sigma\}}^{}\prod_{i\not=j}^{}(1+t\sigma_i\sigma_j)$. 
Expanding $Z$ into sums of disconnected loops and then taking the logarithm, we have $\displaystyle
\log Z=\sum_{\gamma}^{}t^{|\gamma|}$ with $\gamma$ any connected loop and $|\gamma|$ the perimeter of $\gamma$. 
Using similar expression for $Z^{(-)}$ we obtain 
$\displaystyle\log\frac{Z^{(-)}}{Z}=-2\sum_{\gamma\in S}^{}t^{|\gamma|}$ where $S$ is the set of loops wrapping 
around the lattice in $x^1$-direction for odd number of times. Since we are interested in the limit $|t|\ll 1$, 
it is sufficient to consider only such loops that wind around the lattice in $x^1$-direction only once. Factorizing 
the degeneracy factor due to translational symmetry in $x^2$-direction, we have 
$\displaystyle\log\frac{Z^{(-)}}{Z}=-2L_2\sum_{\gamma\in S'}^{}t^{|\gamma|}$; the definition of $S'$ should be obvious.

It is clear that the leading contribution, of order $O(t^{L_1})$, comes from a straight line extending in $x^1$-direction while 
the subleading contributions come from loops which are formed via addition of some `decorations' to the leading line. 
Dividing by the leading contribution and 
taking the logarithm will single out contributions of connected decorations, 
which is proportional to $L_1$ owing to the translational invariance of the straight line. Thus we find exactly the behavior 
(\ref{anticipation}) in section \ref{anticipation_}:
\begin{e}
\log\Big[\left(\frac{1}{L_2}\log\frac{Z^{(-)}}{Z}\right)\Big/t^{L_1}\Big]\propto L_1.\label{prop}
\end{e}
On the other hand, the on-axis two-point correlation function $\langle \sigma_{x}\sigma_{x+r}\rangle_\infty$ 
can be written as a sum 
over contributions of lines connecting $\sigma_1$ to $\sigma_2$, whose leading term comes from a straight line 
extending between $\sigma_1$ and $\sigma_2$ and subleading terms from zig-zag lines that descend from the leading one 
through addition of decorations. In this way we see that the SCE of 
\begin{e}
\lim_{L_1,L_2\to\infty}\log\Big[\left(\frac{1}{L_2}\log\frac{Z^{(-)}}{Z}\right)\Big/t^{L_1}\Big]\Big/L_1
\ \ \ (=-\rho-\log t)
\end{e}
is \underline{identical}, term by term, to the SCE of 
$\displaystyle\lim_{r\to\infty}\log\big[\langle \sigma_x\sigma_{x+r}\rangle_\infty/t^{r}\big]\big/r\ \ (=-m-\log t)$. 
Hence $m=\rho$. 
\\

The argument above is valid for various other models as long as \textit{on-axis} correlation functions are concerned. 
Then it is natural to ask about \textit{off-axis} correlation functions. (This is the case relevant for the conjecture.) 
From fig.\ref{tri-1} it is easily understood that the leading contribution to the SCE of 
$\displaystyle\frac{1}{L_2}\log\frac{Z^{(-)}}{Z}$ does not come from a single straight line: instead it comes from 
$\begin{pmatrix}L_1\\L_1/2\end{pmatrix}$different loops, all of the same length $L_1$. 
\textit{So we now have a number of different ways to see a given higher-order loop as a sum of any one of the 
leading-order loops and a decoration added to it!} This implies that the counting of diagrams appearing in SCE of 
$\displaystyle\frac{1}{L_2}\log\frac{Z^{(-)}}{Z}$ (and of off-axis correlation function, too) is immensely complicated. 
We even face another problem: since most of the leading-order loops have no translational symmetry, it becomes a 
nontrivial task to show (\ref{prop}). For these reasons we cannot give a mathematically rigorous proof of the conjecture even 
at sufficiently high temperature.

Some caveats are in order.
\begin{itemize}
\item 
First, remember that we \textbf{did} confirm $m=\rho$ for the diagonal correlation function 
in the square Ising model ((\ref{diagonal_CW}) and the accompanying discussion). 
This fact implies that our inability to prove (the very existence of $m,\ \rho$ and) the equality $m=\rho$ 
for off-axis correlation function in SCE approach does not itself constitute a disproof. 

\item 
One may be tempted to argue that, since an exact one-to-one correspondence between the diagrams for SCE of 
$\langle\sigma_1\sigma_2\rangle_\infty$ and those for SCE of $\displaystyle\frac{1}{L_2}\log\frac{Z^{(-)}}{Z}$ 
exists, $m=\rho$ would readily follow even in the off-axis case
\textit{if we presume} the existences of both $m$ and $\rho$. This reasoning is however incorrect, because 
the exact correspondence is present only in the on-axis case. (This point is quite nontrivial.) An example of a loop 
that appears in SCE of $\displaystyle\frac{1}{L_2}\log\frac{Z^{(-)}}{Z}$ as one of the leading contributions 
but has no counterpart among the diagrams in SCE of $\langle\sigma_1\sigma_2\rangle_\infty$ is 
shown in fig.\ref{winding_loop}. (Although somewhat counterintuitive, this loop has a minimum perimeter 
to wind around the lattice in horizontal direction.)

Since we have to send $L_1\to\infty$ \textit{before} $L_2\to\infty$ in estimating $\rho$, those loops can never be 
neglected. From this point of view, it is rather surprising that $m=\rho$ holds in the off-axis case of the square 
Ising model.

\item 
It seems worthwhile to note a qualitative difference between ferromagnetic and antiferromagnetic cases. 
The so-called random path representation $\displaystyle\langle\sigma_1\sigma_2\rangle
=\sum_{\gamma}^{}t^{|\gamma|}$, where the sum runs over every path connecting $\sigma_1$ to $\sigma_2$, loses 
its probabilistic interpretation when $t^{|\gamma|}<0$. This actually happens in a triangular Ising model with 
antiferromagnetic couplings. In such a case it is impossible to apply fertile probability-theoretical methods, so that 
a particular difficulty is envisaged in settling the conjecture when the coupling is antiferromagnetic.
\end{itemize}
Summarizing above, 
we do not have a definitive answer as to the validity of the conjecture even at sufficiently high temperature, 
and further analysis on this topic will be hopefully reported elsewhere.

\begin{figure}[bthp]
\begin{center}
\includegraphics*[width=12.0cm]{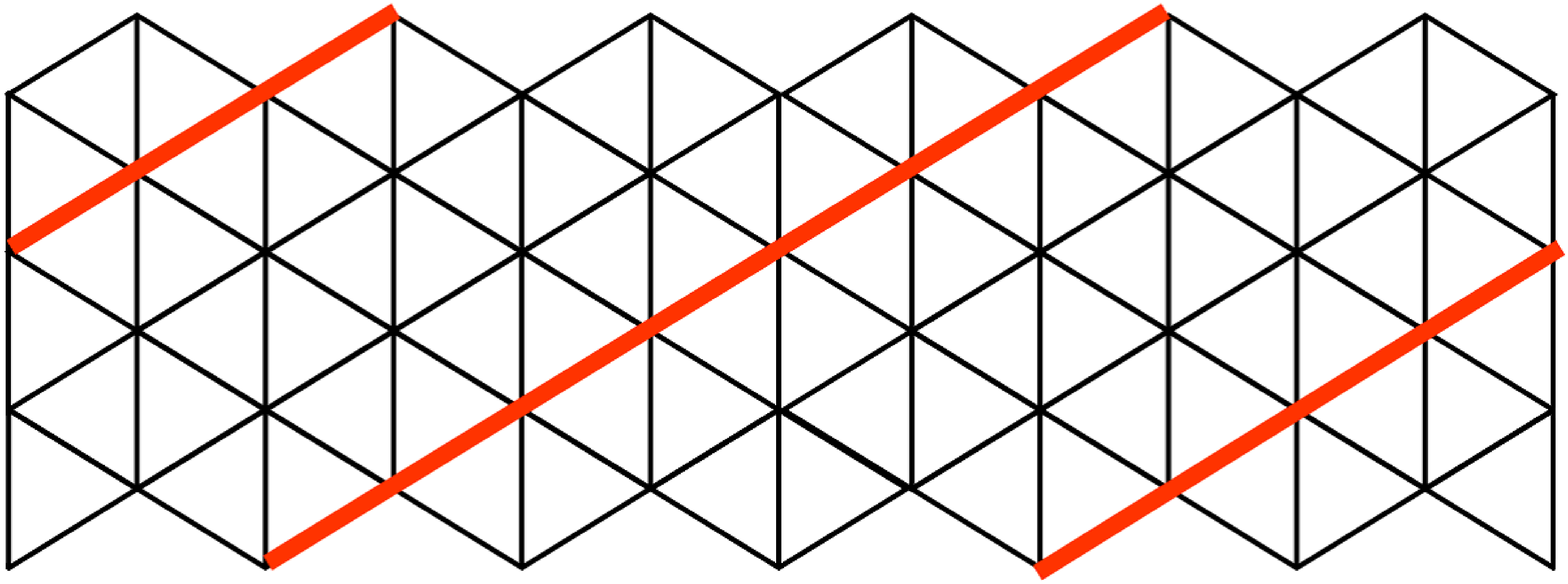}
\end{center}\vspace{-20pt}
\caption{Drawn in red is 
an example of a loop wrapped around the lattice in vertical as well as in horizontal direction. 
Such loops appear numerously if 
the horizontal size of the lattice is much larger than its vertical size.}
\label{winding_loop}
\end{figure}
%
%
%

\section{Conclusion}
In this paper we generalized the inequality of Tomboulis-Yaffe in $SU(2)$ LGT 
to $SU(N)$ LGT and also to general classical spin systems, together 
with a detailed analysis of basic properties of non-Abelian twists. 
Our result is obtained essentially on a finite lattice and gives a rigorous upper 
bound of a Wilson loop and a two-point correlation function. 
An intriguing point is that the inequality obtained for spin models 
does not require the center of the symmetry group, so they can be applied e.g. to $G\times G$ PCM 
with centerless $G$. 
This point seems to be a progress compared with preceding studies in which the center was 
perceived as special without physically convincing motivation\footnote{Some arguments 
that allege the speciality of the center do exist in the literature, but they seem to be rather subtle, 
as discussed attentively in ref.\cite{Borisenko-Skala}.}. 

Furthermore we demonstrated our result explicitly in some solvable models and found 
in the square Ising model that the obtained lower bound of the mass gap is equal to the exact one. 
We also calculated the off-axis mass gap in the triangular Ising model for various couplings, but this time 
the exact mass gap is not known and direct comparison is impossible. 
We conjectured that the bound is indeed saturated and pointed out 
that the conjecture implies the non-analyticity of the mass gap. 
We have tested its validity in several ways, including strong coupling analysis, but a definitive conclusion 
is still lacking and is left for future work.

At present the mechanisms of the quark confinement in 
non-Abelian gauge theories and the mass gap generation in non-Abelian 
spin models still remain elusive, and we hope that our result will be useful for 
further clarification of the issue.
\\
\\
\textit{\large Note added}

After this work was completed, we learned that C. Borgs and E. Seiler had already 
obtained a result very similar to theorem \ref{thm_N} of this paper; 
see Lemma II.8 and the accompanying discussion in ref.\cite{Borgs-Seiler}\footnote{
I thank E. Seiler for kindly pointing out this fact to me.}. 
But since it links the Polyakov loop correlator and not the Wilson loop 
with the electric flux free energy, it is not quite the same as ours. 
It does, however, already imply the 't Hooft's string tension is less than or equal to Wilson's 
(see (II.48) and (II. 50) of ref.\cite{Borgs-Seiler}). Finally we note that their results hardly 
overlap with ours in section \ref{7899}.

\section*{Acknowledgment}
The author thanks Tetsuo Hatsuda, Yoshio Kikukawa, Seiji Miyashita, Masao Ogata, 
Shoichi Sasaki, Hiroshi Suzuki, Shun Uchino and Tamiaki Yoneya for enlightening discussions 
and Ming-Chya Wu for valuable correspondence concerning ref.\cite{Wu-Hu}. 
Thanks also go to the anonymous referee for useful suggestions. 
This work was supported in part by Global COE Program 
``the Physical Sciences Frontier'', MEXT, Japan.

\appendix
\def\thesection{Appendix\ \Alph{section}}
\section{Proof of Lemma \ref{MNlimit}}
For $T>T_c$, (\ref{/163}) and (\ref{/164}) yield
\begin{e}
\jfrac{Z_\Lambda-Z_\Lambda^{(-)}}{Z_\Lambda+Z_\Lambda^{(-)}}=
\jfrac{\Omega_{0,\frac{1}{2}}-\Omega_{0,0}}{\Omega_{\frac{1}{2},\frac{1}{2}}+\Omega_{\frac{1}{2},0}}. \label{x-0}
\end{e}
Let us define $\theta_A^B(\mu,p,M)\geq 0$ by
\begin{e}
\cosh\theta_A^B(\mu,p,M)\equiv\Big|1-B\cos\jfrac{2\pi(p+\mu)}{M}\Big|\Big/\Big|2A\cos\jfrac{\pi(p+\mu)}{M}\Big|.
\end{e}
It is tedious but straightforward to show that the minimum of $\cosh\theta_A^B$ as a function of 
$-1\leq \cos\jfrac{\pi(p+\mu)}{M}\leq 1$ is given by $\displaystyle\frac{g(B)}{|A|}$ (see (\ref{definition_of_g})), 
and that $\displaystyle\frac{g(B)}{|A|}\geq 1$ for every $(t,t_1)\in (0,1)^2$, 
with equality on the critical line. After elementary 
calculations, we find
\begin{e}
(\Omega_{\mu\nu})^2\simeq\Big(\jfrac{A_0}{2}\Big)^{MN}\left\{\prod_{p=0}^{M-1}
\Big|2A\cos\jfrac{\pi(p+\mu)}{M}\Big|\exp\theta_A^B(\mu,p,M)
\right\}^N   \label{gomen}
\end{e}
for $N\gg 1$. 
Since (\ref{gomen}) has no dependence on $\nu$,
\begin{e}
\jlim{N\to\infty}\jfrac{\Omega_{\frac{1}{2},0}}{\Omega_{\frac{1}{2},\frac{1}{2}}}=
\jlim{N\to\infty}\jfrac{\Omega_{0,0}}{\Omega_{0,\frac{1}{2}}}=1.   \label{x-1}
\end{e}
Next, using (\ref{gomen}) we get
\begin{e}
\jfrac{\Omega_{\frac{1}{2},\frac{1}{2}}}{\Omega_{0,\frac{1}{2}}}
\simeq \exp\Big\{N\jsum{k=0}{2M-1}(-1)^{k+1}f_A^B\Big(\jfrac{k}{2M}\Big)\Big\}
\hspace{30pt}\textrm{for\ }N\gg 1,
\label{x-2}
\end{e}
with
\begin{e}
f_A^B(x)\equiv
\jfrac{1}{2}\log\Big\{1-B\cos(2\pi x)
+\sqrt{\big(1-B\cos(2\pi x)\big)^2-\big(2A\cos(\pi x)\big)^2}\,\Big\}.
\end{e}
Let us investigate how fast $\jfrac{\Omega_{0,0}}{\Omega_{0,\frac{1}{2}}}$ converges to 1. 
Using (\ref{gomen}) we can show
\begin{e}
\Big(\jfrac{\Omega_{0,0}}{\Omega_{0,\frac{1}{2}}}\Big)^2
\simeq\sideset{_{}^{}}{_{}^{}}\prod_{p=0}^{M-1}\Big\{1-4\exp(-N\theta_A^B)\Big\}
\hspace{40pt}\textrm{for\ \,}N\gg 1.      \label{usotsuki}
\end{e}
Define $\overline\theta$ as the smallest of $\{\theta_A^B(0,p,M)\}_{p}$. 
Then (\ref{usotsuki}) simplifies to
\begin{e}
\Big(\jfrac{\Omega_{0,0}}{\Omega_{0,\frac{1}{2}}}\Big)^2\simeq 
1-K\,\ee^{-N\overline\theta}\hspace{30pt}\textrm{for\ \,}N\gg 1. \label{x-3}
\end{e}
$K$ is an integer $\in\{4,8,12,16\}$, dependent on $A,\,B$ and $M$. 
Substitution of (\ref{x-1}), (\ref{x-2}) and (\ref{x-3}) into (\ref{x-0}) yields
\begin{e}
1-\jfrac{Z_\Lambda^{(-)}}{Z_\Lambda}\simeq 2\exp\Big\{-N\Big[\overline\theta+\jsum{k=0}{2M-1}(-1)^{k+1}
f_A^B\Big(\jfrac{k}{2M}\Big)\Big]\Big\}\hspace{30pt}\textrm{for\ \,}N\gg 1.   \label{185/*}
\end{e}
Since $\displaystyle\overline\theta=\cosh^{-1}\left(\frac{g(B)}{|A|}\right)+O\Big(\jfrac{1}{M}\Big)$ and 
$\jsum{k=0}{2M-1}(-1)^{k+1}f_A^B\Big(\jfrac{k}{2M}\Big)=O\Big(\jfrac{1}{M}\Big)$ for $M\gg 1$, we find
\begin{gather}
\lim_{M\to\infty}\lim_{N\to\infty}\left(1-\jfrac{Z_\Lambda^{(-)}}{Z_\Lambda}\right)^{1/N}
=\ee^{-\rho},	\label{nobitann}
\\
\rho\equiv\cosh^{-1}\left(\frac{g(B)}{|A|}\right),
\label{nobitannn}
\end{gather}
which is the desired result.
\hfill$\square$

\newpage 

\end{document}